\documentclass[a4paper,USenglish,cleveref, autoref, thm-restate, pdfa]{lipics-v2021}

\bibliographystyle{plainurl}

\title{Constructing deterministic \texorpdfstring{$\omega$}{omega}-automata from examples by an extension of the RPNI algorithm}

\titlerunning{Constructing deterministic \texorpdfstring{$\omega$}{omega}-automata from examples}

\author{León Bohn}{RWTH Aachen University, Ahornstr. 55, 52074 Aachen, Germany}{bohn@lics.rwth-aachen.de}{https://orcid.org/0000-0003-0881-3199}{}
\author{Christof Löding}{RWTH Aachen University, Ahornstr. 55, 52074 Aachen, Germany}{loeding@cs.rwth-aachen.de}{}{}

\authorrunning{L. Bohn and C. Löding}

\Copyright{León Bohn and Christof Löding}

\ccsdesc[100]{Theory of computation---Formal languages and automata theory---Automata over infinite objects}

\keywords{deterministic omega-automata, learning from examples, learning in the limit, constructing acceptance conditions, active learning}

\category{}
\acknowledgements{}

\EventEditors{Filippo Bonchi and Simon J. Puglisi}
\EventNoEds{2}
\EventLongTitle{46th International Symposium on Mathematical Foundations of Computer Science (MFCS 2021)}
\EventShortTitle{MFCS 2021}
\EventAcronym{MFCS}
\EventYear{2021}
\EventDate{August 23--27, 2021}
\EventLocation{Tallinn, Estonia}
\EventLogo{}
\SeriesVolume{202}
\ArticleNo{64}

\nolinenumbers
\usepackage{amsthm, amsmath, amsfonts, environ}
\usepackage[ruled, noend]{algorithm2e}
\usepackage{tikz}
\usetikzlibrary{arrows,positioning,arrows.meta,shapes.geometric,shapes.misc,calc,automata}
\usepackage{xargs}


\DeclareMathOperator{\SCCs}{\mathsf{SCC}}

\DeclareMathOperator{\prf}{\mathsf{Prf}}
\DeclareMathOperator{\mr}{\mathsf{MR}}
\DeclareMathOperator{\mtr}{\mathsf{MTR}}
\DeclareMathOperator{\proj}{\mathsf{proj}}
\let\inf\relax\DeclareMathOperator{\inf}{\mathsf{inf}}

\SetKw{Continue}{continue}
\SetKw{Break}{break}

\newcommand{\mc}[1]{\ensuremath{\mathcal{#1}}\xspace}

\newcommand{\cupdot}{\ensuremath{\mathbin{\dot{\cup}}}\xspace}
\newcommand{\Acctype}{\ensuremath{\mathsf{Acc}}}
\newcommand{\UP}{\ensuremath{\mathsf{UP}}}
\newcommand{\FF}{\ensuremath{\mc{F}}\xspace}

\newcommand{\TT}{\ensuremath{\mc{T}}\xspace}

\newcommand{\conpar}{\operatorname{\mathsf{ParityCons}}}
\newcommand{\conrab}{\operatorname{\mathsf{RabinCons}}}
\newcommand{\conbuchi}{\operatorname{\mathsf{BuchiCons}}}
\newcommand{\congenbuchi}{\operatorname{\mathsf{genBuchiCons}}}

\newcommand{\sprout}{\operatorname{\mathsf{Sprout}}}

\newcommand{\omegacons}{\ensuremath{\Omega\textsc{-consistency}}}

\makeatother
\begin{document}
\maketitle
\begin{abstract}
  The RPNI algorithm (Oncina, Garcia 1992) constructs deterministic finite automata from finite sets of negative and positive example words.
  We propose and analyze an extension of this algorithm to deterministic $\omega$-automata with different types of acceptance conditions.
  In order to obtain this generalization of RPNI, we develop algorithms for the standard acceptance conditions of $\omega$-automata that check for a given set of example words and a deterministic transition system, whether these example words can be accepted in the transition system with a corresponding acceptance condition. Based on these algorithms, we can define the extension of RPNI to infinite words. We prove that it can learn all deterministic $\omega$-automata with an informative right congruence in the limit with polynomial time and data. We also show that the algorithm, while it can learn some automata that do not have an informative right congruence, cannot learn deterministic $\omega$-automata for all regular $\omega$-languages in the limit. Finally, we also prove that active learning with membership and equivalence queries is not easier for automata with an informative right congruence than for general deterministic $\omega$-automata.
\end{abstract}
\section{Introduction}\label{sec:intro}

In this paper we consider learning problems for automata on infinite words, also referred to as $\omega$-automata, which have been studied since the early 1960s as a tool for solving decision problems in logic \cite{buchiOriginalPaper} (see also \cite{thomassurvey}), and are nowadays used in procedures for formal verification and synthesis of reactive systems (see, e.g., \cite{BaierK2008,Thomas09,MeyerSL18} for surveys and recent work). Syntactically $\omega$-automata are very similar to NFA resp.\ DFA (standard nondeterministic resp. deterministic finite automata on finite words), and they also share many closure and algorithmic properties. However, many algorithms and constructions are much more involved for $\omega$-automata, one prominent such example being determinization \cite{Safra88,Piterman06,Schewe09,LodingP19}, and another one the minimization of deterministic $\omega$-automata \cite{Schewe10}, which is hard for most of the acceptance conditions of $\omega$-automata. The underlying reason is that regular languages of finite words have a simple characterization in terms of the Myhill/Nerode congruence, and the unique minimal DFA for a regular language can be constructed by merging language equivalent states (see \cite{HopcroftU69}). In contrast, deterministic $\omega$-automata need, in general, different language equivalent states for accepting a given regular $\omega$-language.

The characterization of minimal DFA in terms of the Myhill/Nerode congruence is also an important property that is used by learning algorithms for DFA. In automaton learning one usually distinguishes the two settings of passive and active learning. We are mainly concerned with passive learning in this paper, where the task is to construct an automaton from a sample, a given finite set of words together with a classification if they are in the language or not. The RPNI algorithm  \cite{rpniOG} is a passive learning algorithm that constructs a DFA from a given sample of positive and negative examples (words that are in the language and words that are not in the language, respectively). It starts with the prefix tree acceptor, a tree shaped DFA that accepts precisely the positive examples and subsequently it tries to merge pairs of states in the canonical order of words (each state is associated with the word reaching it in the prefix tree acceptor). If a merge results in a DFA that accepts a negative example, the merge is discarded. Otherwise the merge is kept and the algorithm continues with this DFA. RPNI can learn the minimal DFA for each regular language in the limit with polynomial time and data. This means that RPNI runs in polynomial time in the size of the given sample, and for each regular language $L$ there is a characteristic sample $S_L$ of polynomial size, such that RPNI produces the minimal DFA for $L$ for each sample that is consistent with $L$ and contains $S_L$ \cite{rpniOG}.
The RPNI algorithm is a simple algorithm that also produces useful results if the sample does not include the characteristic sample of any language $L$. Therefore its principle of state merging has been used for other automaton models, e.g., probabilistic automata \cite{CarrascoO94,MaoCJNLN11} and sequential transducers \cite{OncinaGV93}. 

In this paper we propose and analyze an extension of RPNI to $\omega$-automata. In the setting of infinite words, one uses ultimately periodic words of the form $uv^\omega$ for finite words $u,v$. These are infinite words with a finite representation, and each regular $\omega$-language is uniquely determined by the set of ultimately periodic words that it contains (see \cite{thomassurvey}).
There are two main obstacles that one has to overcome for a generalization of RPNI. First, it is not clear how to generalize the prefix tree acceptor to infinite words, since a tree shaped acceptor for a set of infinite words necessarily needs to be infinite.
We therefore propose a formulation of the algorithm that inserts transitions instead of merging states, and creates new states in case none of the existing states can be used as a target of the transition.
In the setting of finite words, this method of inserting transitions produces the same result as RPNI, and it can easily be used for infinite words as well. Because this algorithm produces growing transition system, we call it $\sprout$.

The second problem arises in the test whether a merge (in our formulation an inserted transition) should be kept or discarded. In the case of finite words, one can simply check whether there are a positive and a negative example that reach the same state, which obviously is not possible in a DFA that is consistent with the sample. For $\omega$-automata the situation is a bit more involved, because acceptance of a word is not determined by a single state, but rather the set of states that is reached infinitely often. And furthermore, there are various acceptance conditions using different ways of classifying these infinity sets into accepting and rejecting. To solve this problem, we propose polynomial time algorithms for checking whether a deterministic transition system admits an acceptance condition of a given type (Büchi, generalized Büchi, parity, or Rabin) that turns the transition system into a deterministic $\omega$-automaton that is consistent with the sample. These consistency algorithms are then used in $\sprout$ in order to check whether a merge (inserted transition) produces a transition system that can still be consistent with the sample (for the acceptance condition under consideration). However, we believe that these consistency algorithms are of interest on their own and might also be useful in other contexts. We also show that bounding the size of the acceptance condition can make the problem hard: consistency with a Rabin condition with three pairs or generalized Büchi condition with three sets is NP-hard.

Our analysis of $\sprout$ reveals that it can learn every $\omega$-regular language with an informative right congruence (IRC) in the limit from polynomial time and data. 
A deterministic $\omega$-automaton has an informative right congruence if it has only one state for each Myhill/Nerode equivalence class of the language that it defines \cite{AngluinF18}. Recently, another algorithm that can learn every $\omega$-regular language with an IRC in the limit from polynomial time and data has been proposed \cite{angluinfisman}. This algorithm is an extension of the approach from \cite{Gold78} from finite to infinite words. However, the algorithm from  \cite{angluinfisman} has explicitly been developed for automata with an IRC, and it can only produce such automata (it defaults to an automaton accepting precisely the positive examples in case the sample does not completely characterize the target automaton). In contrast, $\sprout$ is not specifically designed for IRC languages, it can also construct automata that do not have an IRC. But on the negative side we also show that $\sprout$ cannot learn a deterministic $\omega$-automaton for every regular $\omega$-language.

The positive results for passive learning of IRC languages raise the question whether this class is also simpler for active learning than general deterministic $\omega$-automata. 
The standard model for active learning of automata uses membership and equivalence queries, and DFA can be learned in polynomial time in this model \cite{activelearningangluin}. This approach has been extended to the class of weak deterministic Büchi automata \cite{MalerP95}, whose minimal automata can also be defined using the standard right congruence.
For general regular $\omega$-languages, the only known algorithms either learn a different representation based on DFA \cite{AngluinF16}, or add another query about the loop structure of the target automaton \cite{MichaliszynO20}. 
Since the characterization of the minimal automata by a right congruence is a crucial point in many active learning algorithms, it is tempting to believe that the algorithms can be extended to the classes of languages with an IRC.
We prove that this is not the case by showing that a polynomial time active learning algorithm for deterministic $\omega$-automata with an IRC can be turned into a polynomial time learning algorithm for general deterministic $\omega$-automata.

Finally, we also make the observation that polynomial time active learning (with membership and equivalence queries) is at least as hard as learning in the limit with polynomial time and data.

The paper is structured as follows. In \cref{sec:preliminaries} we give basic definitions. In \cref{sec:consistencyalgos} we present the consistency algorithms, and in \cref{sec:passivelearning} we describe our extension of RPNI to $\omega$-automata. In \cref{sec:activelearning} we show that the property of an IRC does not help for polynomial time active learning, and in \cref{sec:conclusion} we conclude.
\section{Preliminaries}
\label{sec:preliminaries}

For a finite alphabet $\Sigma$ we use $\Sigma^*$ and $\Sigma^\omega$ to refer to the set of finite and infinite words respectively. The empty word is denoted by $\varepsilon$, and $\Sigma^+ = \Sigma^* \setminus \{\varepsilon\}$. A deterministic transition system (TS) is defined by a tuple $\mc{T} = (Q, \Sigma, q_0, \delta)$ where $Q$ is a finite set of states, $\Sigma$ a finite alphabet, $q_0 \in Q$ the initial state and $\delta : Q \times \Sigma \to Q$ is the transition function. We use $\delta(q, a) = \bot$ to indicate that a transition $(q, a) \in Q \times \Sigma$ is not defined in $\mc{T}$. Further we extend $\delta$ to $\delta^* : Q \times \Sigma^* \to Q$ defined as $\delta^*(q, \varepsilon) = q$ and $\delta^*(q, aw) = \delta^*(\delta(q, a), w)$ for $q \in Q, a \in \Sigma$ and $w \in \Sigma^*$. Unless otherwise specified $\mc{T}$ will be used to refer to a transition system with components as above. The unique run of $\mc{T}$ on $w \in \Sigma^\omega$ is a sequence of transitions $\rho = q_0w_0q_1w_1\dotsc$ with $q_{i+1} = \delta(q_i, w_i)$. For an infinite run $\rho$ we denote by $\inf(\rho)$ the \emph{infinity set} of $\rho$, consisting of all state-symbol pairs that occur infinitely often in $\rho$. A set of states $\emptyset \neq C \subseteq Q$ is called \emph{strongly connected} if for all $p, q \in C$ we have $\delta^*(p, w) = q$ for some $w \in \Sigma^+$. The $\subseteq$-maximal strongly connected sets of $\mc{T}$ are called strongly connected components (SCCs) and for a set $R$ we use $\SCCs(R)$ to refer to the set of all SCCs $S \subseteq R$.

Augmenting a transition system $\mc{T}$ with an acceptance condition $\mc{C}$ yields an $\omega$-automaton $\langle\mc{T},\mc{C}\rangle=(Q,\Sigma,q_0,\delta,\mc{C})$. We now introduce different types of acceptance conditions (based on the survey~\cite{thomassurvey}), give a notion of their size $|\mc{C}|$ and define which sets $X \subseteq Q \times \Sigma$ satisfy them. Note that while acceptance is often defined based on \emph{states} that occur infinitely often, we opt for transition-based acceptance due to its succinctness (state-based acceptance can be turned into transition-based acceptance without changing the transition system, while the transformation in the other direction requires a blow-up of the transition system depending on the acceptance condition).

A \emph{Büchi condition} $F \subseteq Q\times\Sigma$ is satisfied if $X \cap F \neq \emptyset$, whereas a \emph{generalized Büchi condition} $\mc{B} = \{F_1, \dotsc, F_k\}$ with $F_i \subseteq Q\times \Sigma$ is satisfied if $X \cap F_i \neq \emptyset$ for all $i \in [1,k] \subseteq \mathbb{N}$. The set $X$ satisfies a \emph{parity condition} $\kappa : (Q \times \Sigma) \to C$ for a finite $C \subseteq \mathbb{N}$ if $\min(\kappa(X))$ is even where $\kappa(X) = \{\kappa(q,a) : (q,a) \in X\}$. We call $\mc{R} = \{(E_1,F_1),\dotsc,(E_k,F_k)\}$ with $E_i,F_i \subseteq (Q\times\Sigma)$ a \emph{Rabin condition} and it is satisfied if $E_i \cap X = \emptyset$ and $F_i \cap X \neq \emptyset$ for some $i \in [1,k] \subseteq \mathbb{N}$. Finally a \emph{Muller condition} $\mc{F} \subseteq 2^{Q \times \Sigma}$ is satisfied if $X \in \mc{F}$. For an acceptance condition $\mc{C}$ of type $\Omega \in \{\text{Parity}, \text{generalized Büchi}, \text{Rabin}\}$ we use $|\mc{C}|$ to refer to the number of priorities/recurring sets/Rabin pairs respectively. We use abbreviations (g)DBA, DPA, DRA to refer to deterministic (generalized) Büchi, Parity and Rabin automata and introduce a set $\Acctype$ containing these acceptance types. An automaton $\mc{A} = \langle\mc{T},\mc{C}\rangle$ accepts $w\in \Sigma^\omega$ if $\inf(\rho)$ satisfies $\mc{C}$, where $\rho$ refers to the unique run of $\mc{T}$ on $w$. The set of all words that are accepted by $\mc{A}$ is the language accepted by $\mc{A}$, denoted by $L(\mc{A})$. 

Let $\sim$ be an equivalence relation over $\Sigma^*$. We refer to the equivalence class of $x$ under $\sim$ as $[x]_\sim = \{y \in \Sigma^*: x \sim y\}$ and call $\sim$ a (right) \emph{congruence} if $u \sim v$ implies $ua \sim va$ for all $a \in \Sigma$. A regular language $L \subseteq \Sigma^\omega$ induces the \emph{canonical right congruence} $\sim_L$ in which $u \sim_L v$ holds if and only if $u^{-1}L = v^{-1}L$ with $u^{-1}L = \{w \in \Sigma^\omega: uw \in L\}$. Using the terminology of \cite{AngluinF18}, we say that an automaton $\mc{A}$ has an \emph{informative right congruence} (IRC) if $u \sim_{L(\mc{A})} v$ implies that $\mc{A}$ reaches the same state from $q_0$ when reading $u$ or $v$. A language $L$ has an $\Omega$-IRC for $\Omega \in \Acctype$ if an $\Omega$-automaton with an IRC which recognizes $L$ exists and we denote by $\operatorname{\mathsf{ind}}(L)$ the number of equivalence classes of $\sim_L$.

A word $w \in \Sigma^\omega$ is called \emph{ultimately periodic} if $w = uv^\omega$ with $u \in \Sigma^*, v \in \Sigma^+$. We denote by $\UP_\Sigma$ the set of all ultimately periodic words in $\Sigma^\omega$ and note that two regular languages $K, L \subseteq \Sigma^\omega$ are equal if and only if $K \cap \UP_\Sigma = L \cap \UP_\Sigma$~\cite{buchiOriginalPaper}. Note that there always exists a reduced form $w = uv^\omega$ in which $u$ and $v$ are as short as possible. We call a pair $S = (S_+, S_-) $ with $S_+,S_- \subseteq \UP_\Sigma$ and $S_+ \cap S_- = \emptyset$ a \emph{sample} and say that $S$ is \emph{in reduced form} if each $uv^\omega \in S$ is in a reduced form where $w \in S$ is used as a shorthand for $w \in S_+ \cup S_-$. For $L \subseteq \Sigma^\omega$ we say that $S$ is consistent with $L$ if $S_+ \subseteq L$ and $S_- \cap L = \emptyset$. Similarly an automaton $\mc{A}$ is consistent with $S$ if $S_+ \subseteq L(\mc{A})$ and $S_- \cap L(\mc{A}) = \emptyset$.

For $\Omega \in \Acctype$ we call a function $f$ that maps a sample to an $\Omega$-automaton a \emph{passive learner}. $f$ is called \emph{consistent} if for any sample $S$ the constructed automaton $f(S)$ is consistent with $S$. A sample $S_L$ is \emph{characteristic} for $L$ and $f$ if for any sample $S$ that is consistent with $L$ and that contains $S_L$, the learner produces an automaton $f(S)$ recognizing $L$. For a class of representations of languages $\mathbb{C}$ (in our case deterministic $\omega$-automata) we use $\mc{L}(\mathbb{C})$ to refer to the represented languages and define the size of $L \in \mc{L}(\mathbb{C})$ to be the size of the minimal representation of $L$ in $\mathbb{C}$. Based on the definition in~\cite{polynomialidentification} we say $\mathbb{C}$ is \emph{learnable in the limit using polynomial time and data} if there exists a learner $f$ that runs in polynomial time for any input sample, and for each $L \in \mc{L}(\mathbb{C})$ there exists a characteristic sample whose size is polynomial in the size of $L$.

We call $w \in \Sigma^\omega$ \emph{escaping} from $p \in Q$ with $a \in \Sigma$ in $\mc{T}$ if there exists a decomposition $w = uav$ with $v \in \Sigma^\omega$ such that $\delta^*(q_0, u) = p$ and $\delta(p, a) = \bot$. We refer to $ua$ as the \emph{escape-prefix} and call $av$ the \emph{exit string} of $w$. Two escaping words $w_1, w_2$ are \emph{indistinguishable} if they escape $\mc{T}$ from the same state and their exit strings coincide. We call $\mc{T}$ $\Omega$-\emph{consistent} with a sample $S$ if there exists an $\Omega$-acceptance condition $\mc{C}$ such that $\{w \in S_+: w \text{ not escaping in } \mc{T}\} \subseteq L(\mc{T}, \mc{C}), S_- \cap L(\mc{T}, \mc{C}) = \emptyset$ and no pair of sample words from $S_+ \times S_-$ is indistinguishable. Note that $\Omega$-consistency with a transition system does not require all words from $S_+$ to have an infinite run in the transition system. It just means that $\mc{T}$ does not produce any conflicts between words in $S_+$ and in $S_-$. In contrast, for an automaton to be considered consistent with $S$ it is required that all words from $S_+$ are accepted.

\section{Consistency Algorithms}\label{sec:consistencyalgos}
The algorithm for learning $\omega$-automata that we describe in Section~\ref{sec:passivelearning} constructs a transition system and then tests whether an acceptance condition can be found such that all sample words are accepted and rejected accordingly. In this section we develop algorithms for this test, so we assume that a transition system $\mc{T} = (Q, \Sigma, q_0, \delta)$ is given. We do not work with the sample directly in this section, and rather work with the infinity sets induced by the sample words. This leads to the notion of a partial condition, which we define below. Then we investigate how different types of acceptance conditions that are consistent with such a partial condition can be constructed.

Recall that a Muller condition $\mc{F} \subseteq 2^{Q \times \Sigma}$ is satisfied by an infinity set $X \subseteq Q \times \Sigma$ if and only if $X \in \mc{F}$. Instead of specifying such a Muller condition based solely on the infinity sets that satisfy it, we can also define it as a partition $\mc{F} = (\mc{F}_0, \mc{F}_1)$ of $2^{Q \times \Sigma}$ into accepting and rejecting sets, in the following also referred to as positive and negative sets respectively. In other words such a condition assigns to each possible set $X \subseteq Q\times \Sigma$ a \emph{classification} $\sigma \in \{0, 1\}$, which we denote as $\FF(X) = \sigma$ for $X \in \FF_\sigma$. Note that any acceptance condition $\mc{C}$ can be viewed as a Muller condition $(\mc{F}^\mc{C}_0, \mc{F}^\mc{C}_1)$ by assigning to $\mc{F}^\mc{C}_0$ exactly those sets $X \subseteq Q \times \Sigma$ that satisfy $\mc{C}$ and defining $\mc{F}^\mc{C}_1$ to contain all others.

To incorporate the fact that the infinity sets induced by sample words might not classify all subsets of $Q \times \Sigma$, we introduce the concept of a \emph{partial condition} $\mc{H} = (\mc{H}_0, \mc{H}_1)$ with $\mc{H}_0, \mc{H}_1 \subseteq 2^{Q\times \Sigma}$ in which only a subset of all elements $X \subseteq Q \times \Sigma$ receives a classification $\mc{H}(X) \in \{0,1\}$. We use $X \in \mc{H}$ to denote $X \in \mc{H}_0 \cup \mc{H}_1$ and call a partial condition $\mc{H} = (\mc{H}_0, \mc{H}_1)$ \emph{consistent} if $\mc{H}_0 \cap \mc{H}_1 = \emptyset$. A component $\mc{H}_\sigma$ of a partial condition is called \emph{union-closed} if for any finite collection $X_1, \dotsc, X_n \in \mc{H}_\sigma$ we have $X_1 \cup \dotsc \cup X_n \notin \mc{H}_{1-\sigma}$ or in other words the union of positive sets is not negative and vice versa. We call an acceptance condition $\mc{C}$ \emph{consistent with} a partial condition $\mc{H} = (\mc{H}_0, \mc{H}_1)$ if $\mc{H}_0 \subseteq \FF^\mc{C}_0$ and $\mc{H}_1 \subseteq \FF^\mc{C}_1$.

For each $\Omega \in \Acctype$ we can now define the decision problem $\Omega$-\textsc{Consistency}: Given a transition system $\mc{T} = (Q, \Sigma, q_0, \delta)$ and a partial condition $\mc{H} = (\mc{H}_0, \mc{H}_1)$ with $\mc{H}_0, \mc{H}_1 \subseteq 2^{Q\times \Sigma}$, the question is whether there exists an acceptance condition $\mc{C}$ of type $\Omega$ over $Q \times \Sigma$ that is consistent with $\mc{H}$. In the following we provide algorithms that decide $\omegacons$ for the various acceptance types we introduced and investigate their complexity.

\subparagraph*{Büchi and generalized Büchi conditions}

For a Büchi condition $F \subseteq Q \times \Sigma$ we know that every superset of some $X \subseteq Q \times \Sigma$ with $X \cap F \neq \emptyset$ clearly has a non-empty intersection with $F$. Based on this observation we can define an algorithm that computes for a given partial condition $\mc{H} = (\mc{H}_0, \mc{H}_1)$ a Büchi condition $F$ which is consistent with $\mc{H}$. We forego a formal definition of the algorithm itself and instead define the partial function it computes, where a result of $\bot$ is used to indicate that no Büchi condition exists that is consistent with $\mc{H}$.
\[\conbuchi(\mc{H}_0, \mc{H}_1) = \begin{cases}
    \texttt{return } \bot & \text{ if } P \in \mc{H}_0 \text{ exists with } P \subseteq \bigcup\mc{H}_1\\
    \texttt{return } (Q\times\Sigma) \setminus \bigcup \mc{H}_1 & \text{ otherwise}
\end{cases}\]
It is easily verified that $\conbuchi$ is computable in polynomial time and a formal proof for the correctness of this algorithm can be found in the appendix.

With generalized Büchi conditions it is no longer guaranteed that the union of two negative sets $N, N' \in \mc{H}_1$ is also negative. Consider a generalized Büchi condition $\mc{B} = \{F, F'\}$ such that $N$ has a non-empty intersection with $F$ but not with $F'$, whereas $N' \cap F = \emptyset$ and $N' \cap F' \neq \emptyset$. Then their union $F \cup F'$ has a non-empty intersection with both $F$ and $F'$ and hence satisfies $\mc{B}$. Therefore we first isolate the $\subseteq$-maximal sets $N_1, \dotsc, N_k$ in $\mc{H}_1$. As before we give a function
\[\congenbuchi(\mc{H}_0, \mc{H}_1) = \begin{cases}
    \texttt{return } \bot & \text{ if } P \in \mc{H}_0 \text{ with } P \subseteq N_i \text{ exists}\\
    \texttt{return } \{(Q \times \Sigma) \setminus N_i : i \leq k\} & \text{ otherwise}
\end{cases}\]
which maps a partial condition to a generalized Büchi condition that is consistent with $\mc{H}$ or $\bot$ if no such condition exists. It is again not difficult to see that an algorithm can compute $\congenbuchi$ in polynomial time. A formal proof of the correctness of $\congenbuchi$ can be found in the appendix.

\begin{restatable}{theorem}{rstbuchiconsistencyiscorrect}\label{buchiconsistencyiscorrect}
    The algorithm $(\textsf{gen})\textsf{BuchiCons}$ decides the (generalized) Büchi-\textnormal{\textsc{Consistency}} problem in polynomial time and returns a corresponding acceptance condition if one exists.
\end{restatable}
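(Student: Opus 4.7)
The plan is to establish, for each of the two algorithms separately, that (i) the computation runs in polynomial time, and (ii) the symbolic definition returns $\bot$ exactly when no consistent acceptance condition of the corresponding type exists, and otherwise outputs one that is consistent with $\mc{H}$.

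For $\conbuchi$ the polynomial runtime is immediate: one materialises $U = \bigcup \mc{H}_1$ in time linear in the input size and then tests each $P \in \mc{H}_0$ for containment in $U$. Soundness of the $\bot$-branch I would base on the observation that any Büchi set $F$ which accepts $P$ must contain a witness $(q,a) \in F \cap P$; if $P \subseteq U$ then this witness also lies in some $N \in \mc{H}_1$, so $F$ intersects $N$ and would accept it, contradicting $N \in \mc{H}_1$. For the positive branch with $F = (Q\times\Sigma)\setminus U$, every $N \in \mc{H}_1$ is disjoint from $F$ by construction, and for every $P \in \mc{H}_0$ the failure of the $\bot$-condition supplies a witness $(q,a) \in P \setminus U \subseteq F$, giving $P \cap F \neq \emptyset$.

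For $\congenbuchi$ I would first reduce to the $\subseteq$-maximal rejecting sets $N_1,\dotsc,N_k$: every $N \in \mc{H}_1$ is contained in some $N_j$, so once the returned condition rejects each $N_j$ it automatically rejects every $N$. Soundness of the $\bot$-branch then follows because a consistent condition $\{F_1,\dotsc,F_m\}$ accepting $P \subseteq N_i$ would satisfy $N_i \cap F_j \supseteq P \cap F_j \neq \emptyset$ for every $j$, forcing $N_i$ to be accepting. Completeness in the positive branch uses that $P \not\subseteq N_i$ provides, for each $i$, a witness in $P \cap ((Q\times\Sigma)\setminus N_i)$, so every component $F_i$ is hit by $P$, while each maximal $N_j$ is disjoint from its own component $(Q\times\Sigma)\setminus N_j$ and therefore rejected. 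Polynomial runtime is preserved since the maximal sets can be extracted by pairwise comparisons in time polynomial in $|\mc{H}_1|$ and $|Q\times\Sigma|$.

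The only conceptually delicate point is the restriction to maximal sets in the generalized case: the example given just before the theorem shows that two rejecting sets whose union lies outside $\mc{H}_1$ can still be jointly accepted by two Büchi components, so the naïve analogue of $\bigcup\mc{H}_1$ would over-constrain the solution. The main subtlety beyond this is matching the quantifiers carefully, so that the $\bot$-argument quantifies universally over candidate conditions while the positive argument fixes the specific condition the algorithm outputs; once this bookkeeping is done, both directions of both claims reduce to one-line set-theoretic observations.
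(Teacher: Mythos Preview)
Your proposal is correct and matches the paper's own proof essentially line for line: the appendix splits the theorem into two lemmas (one for $\conbuchi$, one for $\congenbuchi$), and in each case argues the $\bot$-branch by the same contradiction (a witness in $P\cap F$ would force some $N$ to be accepted) and the positive branch by the same direct verification that the returned set(s) intersect every $P$ and miss every $N$. The polynomial-time claim is likewise dispatched in both by observing that only elementary set operations are used.
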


\subparagraph*{Parity Conditions}

It is a well-known observation that for a given Muller condition $(\mc{F}_0, \mc{F}_1)$ there exists an equivalent parity condition $\kappa$ if and only if $\mc{F}_0$ and $\mc{F}_1$ are union-closed~\cite{zielonkatree}. We show an analogous statement for partial conditions, starting with the following lemma which establishes that if the union of positive and negative elements coincide, then no equivalent parity condition can be found.

\begin{restatable}{lemma}{rstparityconditionexistence}\label{parityconditionexistence}
    Let $\mc{H} = (\mc{H}_0, \mc{H}_1)$ be a consistent partial condition. If we have $P = N$ for $P = P_1 \cup \dotsc \cup P_k$ and $N = N_1 \cup \dotsc \cup N_l$ with $P_i \in \mc{H}_0$ and $N_j \in \mc{H}_1$ then there exists no parity condition that is consistent with $\mc{H}$
\end{restatable}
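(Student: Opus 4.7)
My plan is to argue by contradiction, exploiting the fact that for a parity condition the ``minimum priority'' operator interacts cleanly with unions. Suppose for contradiction that a parity condition $\kappa : (Q \times \Sigma) \to C$ is consistent with $\mc{H}$, and set $P = P_1 \cup \dotsc \cup P_k$ and $N = N_1 \cup \dotsc \cup N_l$ so that $P = N$ by assumption.

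The core observation I would use is that for any family of sets $X_1, \dotsc, X_n$,
\[
\kappa(X_1 \cup \dotsc \cup X_n) \;=\; \kappa(X_1) \cup \dotsc \cup \kappa(X_n),
\]
so $\min \kappa(X_1 \cup \dotsc \cup X_n) = \min_i \min \kappa(X_i)$. This is the one-line identity that makes unions behave predictably for parity acceptance, and it is the whole reason the lemma holds.

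Applying this to $P$: since each $P_i \in \mc{H}_0$, consistency of $\kappa$ with $\mc{H}$ gives that $\min \kappa(P_i)$ is even for every $i$, and hence $\min \kappa(P) = \min_i \min \kappa(P_i)$ is a minimum of even numbers, so itself even. The symmetric argument applied to $N$ yields that $\min \kappa(N)$ is a minimum of odd numbers, hence odd. Because $P = N$ as subsets of $Q \times \Sigma$, we have $\kappa(P) = \kappa(N)$ and thus $\min \kappa(P) = \min \kappa(N)$, which cannot be simultaneously even and odd. This contradiction yields the claim.

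There is no real obstacle here; the only subtlety worth flagging is making sure the sets $\kappa(P_i)$ and $\kappa(N_j)$ are nonempty so that $\min$ is defined. This follows from $P_i, N_j$ being classified by $\mc{H}$ (partial conditions classify only subsets of $Q \times \Sigma$, and applying $\kappa$ to a nonempty set yields a nonempty set of priorities); if some $P_i$ or $N_j$ were empty, the consistency assumption on $\mc{H}$ together with $\emptyset \in \mc{H}_0 \cap \mc{H}_1$ would already be contradictory, so we may assume all $P_i, N_j$ are nonempty without loss of generality.
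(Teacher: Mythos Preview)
Your proof is correct and follows essentially the same approach as the paper: argue by contradiction, use that $\min\kappa(\bigcup_i X_i)=\min_i\min\kappa(X_i)$ to conclude $\min\kappa(P)$ is even while $\min\kappa(N)$ is odd, and derive a contradiction from $P=N$. Your additional remark on nonemptiness is a nice touch the paper omits, though the phrasing is slightly imprecise (a single empty $P_i$ only gives $\emptyset\in\mc{H}_0$, not $\emptyset\in\mc{H}_0\cap\mc{H}_1$; the cleaner fix is to drop empty sets from the unions, noting that if all $P_i$ and all $N_j$ were empty then indeed $\emptyset\in\mc{H}_0\cap\mc{H}_1$).
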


It turns out that the opposite direction also holds, meaning if no such unions of positive and negative sets can be found, then an equivalent parity condition must exist. This implication arises as a consequence of the $\conpar$ algorithm we present later together with the proofs of its correctness. For a given partial condition $\conpar$ (see \autoref{algo:parityconsistency}) attempts to construct a chain of sets of transitions $Z_0 \supseteq Z_1 \supseteq \dotsc \supseteq Z_{n-1}$ with alternating classifications $\sigma_i \in \{0,1\}$, i.e., $\sigma_{i+1} = 1 - \sigma_i$ for $i < n-1$. We refer to this as a \emph{Zielonka path} because it corresponds to the Split or Zielonka tree representation of a parity condition~\cite{zielonkatree,DziembowskiJW97}.

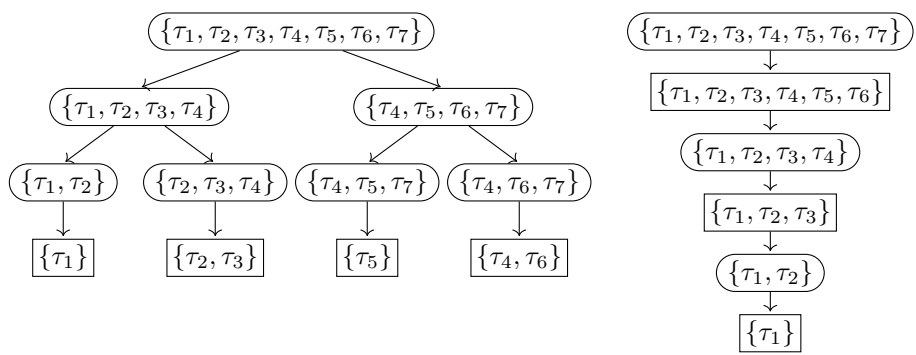
\begin{figure}[htb]
    \centering
    \begin{tikzpicture}[baseline={([yshift=-.5ex]current bounding box.north)},shorten >=1pt,node distance=10mm,inner sep=2pt,on grid,auto,
        positive/.style={rounded rectangle, draw},
        negative/.style={rectangle, draw}]
        \node[positive] (l0s1) {$\{\tau_1,\tau_2,\tau_3,\tau_4,\tau_5,\tau_6,\tau_7\}$};
        \node[positive] (l1s1) [below of=l0s1, xshift=-20mm] {$\{\tau_1,\tau_2,\tau_3,\tau_4\}$};
        \node[positive] (l1s2) [below of=l0s1, xshift=20mm] {$\{\tau_4, \tau_5, \tau_6, \tau_7\}$};
        \node[positive] (l2s1) [below=of l1s1, xshift=-10mm] {$\{\tau_1, \tau_2\}$};
        \node[positive] (l2s2) [below=of l1s1, xshift=10mm] {$\{\tau_2, \tau_3, \tau_4\}$};
        \node[positive] (l2s3) [below=of l1s2, xshift=-10mm] {$\{\tau_4, \tau_5, \tau_7\}$};
        \node[positive] (l2s4) [below of=l1s2, xshift=10mm] {$\{\tau_4, \tau_6, \tau_7\}$};
        \node[negative] (l3s1) [below of=l2s1] {$\{\tau_1\}$};
        \node[negative] (l3s2) [below of=l2s2] {$\{\tau_2, \tau_3\}$};
        \node[negative] (l3s3) [below of=l2s3] {$\{\tau_5\}$};
        \node[negative] (l3s4) [below of=l2s4] {$\{\tau_4, \tau_6\}$};
        \path[->]
        (l0s1) edge[] node {} (l1s1.north)
        (l0s1) edge[] node {} (l1s2.north)
        (l1s1) edge[] node {} (l2s1.north)
        (l1s1) edge[] node {} (l2s2.north)
        (l1s2) edge[] node {} (l2s3.north)
        (l1s2) edge[] node {} (l2s4.north)
        (l2s1) edge[] node {} (l3s1.north)
        (l2s2) edge[] node {} (l3s2.north)
        (l2s3) edge[] node {} (l3s3.north)
        (l2s4) edge[] node {} (l3s4.north);
    \end{tikzpicture}
    \quad
    \begin{tikzpicture}[baseline={([yshift=-.5ex]current bounding box.north)},shorten >=1pt,node distance=8mm,inner sep=2pt,on grid,auto,
        positive/.style={rounded rectangle, draw},
        negative/.style={rectangle, draw}]
        \node[positive] (lm1s1) {$\{\tau_1,\tau_2,\tau_3,\tau_4,\tau_5,\tau_6,\tau_7\}$};
        \node[negative] (l0s1) [below=of lm1s1] {$\{\tau_1,\tau_2,\tau_3,\tau_4,\tau_5,\tau_6\}$};
        \node[positive] (l1s1) [below of=l0s1] {$\{\tau_1,\tau_2,\tau_3,\tau_4\}$};
        \node[negative] (l2s1) [below=of l1s1] {$\{\tau_1,\tau_2,\tau_3\}$};
        \node[positive] (l3s1) [below=of l2s1] {$\{\tau_1, \tau_2\}$};
        \node[negative] (l4s1) [below=of l3s1] {$\{\tau_1\}$};
        \path[->]
        (lm1s1) edge [] node {} (l0s1.north)
        (l0s1) edge [] node {} (l1s1.north)
        (l1s1) edge [] node {} (l2s1.north)
        (l2s1) edge [] node {} (l3s1.north)
        (l3s1) edge [] node {} (l4s1.north);
    \end{tikzpicture}
    \caption{On the left an inclusion graph for the partial condition $\mc{H}$ from \autoref{examplepartialcondition} can be seen in which positive elements are depicted with rounded and negative ones with rectangular borders. The path depicted on the right corresponds to a priority function $\kappa$ with domain $\{0, 1, 2, 3, 4, 5\}$ such that $\tau_7 \mapsto 0, \tau_6 \mapsto 1, \tau_5 \mapsto 1, \tau_4 \mapsto 2, \tau_3 \mapsto 3, \tau_2 \mapsto 4, \tau_1 \mapsto 5$
    which is the minimal parity condition that is consistent with $\mc{H}$.}
    \label{zielonkagrows}
\end{figure}

From such a Zielonka path one obtains a parity condition $\kappa$ where $\kappa(q, a) = \sigma_0 + i$ for the maximal $i$ such that $(q, a) \in Z_i$. On the other hand every parity condition $\kappa$ with priorities $C$ determines a chain $Z_0 \supseteq Z_1 \supseteq \dotsc \supseteq Z_{|C|-1}$ and alternating classifications $\sigma_i$ where $\sigma_0 = \min(C) \bmod 2$, $\sigma_{i+1} = 1 - \sigma_i$ and $Z_i$ contains all state-symbol pairs whose color is greater or equal to $\sigma_0 + i$. To guarantee the existence of such an alternating chain, we assume that $\kappa$ is optimal and contains no gaps, which can be ensured in polynomial time~\cite{computingrabinindex}.

\begin{example}
    \label{examplepartialcondition}
    As an example consider a partial condition $\mc{H} = (\mc{H}_0, \mc{H}_1)$ with set inclusion diagram as shown on the left of \autoref{zielonkagrows}, where $\mc{H}_0$ contains the transition sets drawn with rounded border, and $\mc{H}_1$ those with rectangular border (the leaves of the tree, in this example).
    We assume an underlying transition system in which the transition sets in $\mc{H}$ are strongly connected. It is easily verified that $\mc{H}$ does not satisfy the condition of \autoref{parityconditionexistence}. Since we claimed the converse of \autoref{parityconditionexistence} to be true, a parity condition that is consistent with $\mc{H}$ should exist. It turns out that such a parity condition requires $6$ distinct priorities (the corresponding Zielonka path is shown on the right of \autoref{zielonkagrows}) even though there is at most one alternation between positive and negative sets along inclusion chains in $\mc{H}$. This is due to the fact that more alternations are introduced by unions of positive and negative sets.
\end{example}

We now present an algorithm that given a consistent partial condition $\mc{H}$ over $Q \times \Sigma$ constructs an equivalent parity condition with the least number of distinct priorities if one exists. As a simplification we assume that the set  $Q \times \Sigma$ of all transitions is classified by $\mc{H}$, which enables us to use $Q \times \Sigma$ as the first set $Z_0$ of the chain that is constructed. We describe later how partial conditions that do not satisfy this assumption can be dealt with.

\begin{algorithm}
    \DontPrintSemicolon
    \KwIn{A consistent partial condition $\mc{H} = (\mc{H}_0, \mc{H}_1)$ with $Q \times \Sigma \in \mc{H}$}
    \KwOut{A Zielonka path $(Z_0, \sigma_0), (Z_1, \sigma_1), \dotsc, (Z_{n-1}, \sigma_{n-1})$}
    $Z_0 \gets Q \times \Sigma$, $\sigma_0 \gets \mc{H}(Q \times \Sigma)$, $i \gets 0$\;
    \Repeat{$Z = \emptyset$}{
        $i \gets i + 1$\;
        $Z \gets \bigcup\{X \subseteq Z_{i-1} : \mc{H}(X) = 1 - \sigma_{i-1}\}$\;
        \If{$Z = Z_{i-1}$}{
            \Return{No consistent parity condition exists.}
        }
        $Z_i \gets Z$, $\sigma_i \gets 1 - \sigma_{i-1}$\;
    }
    \Return{$(Z_0, \sigma_0), \dotsc, (Z_{i-1}, \sigma_{i-1})$}
    \caption{$\conpar$}
    \label{algo:parityconsistency}
\end{algorithm}

After $Z_0$ and its corresponding classification $\sigma_0 = \mc{H}(Z_0)$ have been determined, the algorithm computes $Z_1$ as the union of all $1-\sigma_0$ subsets of $Z_0$. If this union coincides with $Z_0$ then the conditions for \autoref{parityconditionexistence} are met and the algorithm terminates prematurely as no equivalent parity condition can exist. Otherwise this construction ensures that every strict superset of $Z_1$ receives the same classification as $Z_0$ from the constructed parity condition. This process is then repeated for $Z_1$ with $\sigma_1 = 1-\sigma_0$, $Z_2$ with $\sigma_2 = 1 - \sigma_1$ and so on until no subsets of opposite classification remain. At this point the algorithm terminates and returns the constructed chain of sets of transitions together with their corresponding classification.

Proving the correctness of this approach forms the opposite direction of \autoref{parityconditionexistence} as it entails that if no union of positive and negative sets as in \autoref{parityconditionexistence} is found, an equivalent parity condition can be constructed. One restriction on the partial conditions that can be passed to $\conpar$ is that the set of all transitions, $Q \times \Sigma$, must be present in either $\mc{H}_0$ or $\mc{H}_1$. As these partial conditions arise from the infinity sets that words from a finite sample induce, however, it is easily conceivable that there are many scenarios - for example when the automaton that we want to learn is made up of multiple SCCs - in which no word inducing $Q \times \Sigma$ exists. In this case we can simply define two extended partial conditions $\mc{H}^p$ and $\mc{H}^n$ in which $Q\times\Sigma$ is added as a positive or negative set respectively and execute $\conpar$ separately for each of them. If only one computation results in a Zielonka path we are done, otherwise the two resulting paths are compared with regard to their length and the longer one is discarded.

\begin{restatable}{theorem}{rstconparcorrectness}\label{conparcorrectness}
    $\conpar$ decides Parity-\textnormal{\textsc{Consistency}} in polynomial time and returns a corresponding parity condition with a minimal number of priorities if one exists.
\end{restatable}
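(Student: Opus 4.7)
The plan is to verify four properties of $\conpar$: polynomial runtime, correctness of the returned parity condition, soundness of the failure branch, and minimality of the priority count, and then lift the result to partial conditions with $Q \times \Sigma \notin \mc{H}$ via the two-run extension described just above the theorem. For runtime, every completed iteration enforces $Z_i \subsetneq Z_{i-1}$ (otherwise the algorithm aborts), so the repeat loop executes at most $|Q \times \Sigma|$ times; each iteration only scans $\mc{H}$ once, which is polynomial in the input size.

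For correctness of the returned chain, I define $\kappa(q,a) = \sigma_0 + i$ for the maximal $i$ with $(q,a) \in Z_i$. The key lemma is that for every non-empty $X \in \mc{H}$, the maximal index $i$ with $X \subseteq Z_i$ satisfies $\sigma_i = \mc{H}(X)$: otherwise $\mc{H}(X) = 1 - \sigma_i = \sigma_{i+1}$, and $X$ would contribute to the union defining $Z_{i+1}$, contradicting maximality (such $Z_{i+1}$ must exist, since if $i = n-1$ the terminating step $Z = \emptyset$ precludes non-empty subsets of $Z_{n-1}$ with classification $1 - \sigma_{n-1}$). Since $\min(\kappa(X)) = \sigma_0 + i$ has parity matching $\sigma_i = \mc{H}(X)$, the condition $\kappa$ is consistent with $\mc{H}$. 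For the failure branch, if $Z_i = Z_{i-1}$ is detected then $Z_{i-1}$ is simultaneously a union of $\sigma_{i-1}$-classified elements of $\mc{H}$ (by its own construction, or trivially if $i=1$ since $Q \times \Sigma \in \mc{H}_{\sigma_0}$) and a union of $(1 - \sigma_{i-1})$-classified subsets, so \autoref{parityconditionexistence} rules out any consistent parity condition.

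For minimality, given any consistent parity condition $\kappa'$, I may assume via the polynomial-time preprocessing of \cite{computingrabinindex} that $\kappa'$ is optimal and gap-free, so its own chain $Z'_0 \supseteq \dotsc \supseteq Z'_{m-1}$ starts with $Z'_0 = Q \times \Sigma$ and classification $\sigma'_0 = \sigma_0$. The induction $Z_i \subseteq Z'_i$ proceeds as follows: for each $X \subseteq Z_i$ with $\mc{H}(X) = \sigma_{i+1}$, the hypothesis gives $X \subseteq Z'_i$, and consistency of $\kappa'$ forces the minimum $\kappa'$-priority of $X$ to have parity $\sigma_{i+1} \neq \sigma_i$, which rules out the value $\sigma_0 + i$ and hence yields $X \subseteq Z'_{i+1}$; taking unions gives the inductive step. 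Since $Z_{n-1}$ is non-empty, so is $Z'_{n-1}$, whence $m \geq n$. Finally, if $Q \times \Sigma \notin \mc{H}$, any consistent parity condition classifies $Q \times \Sigma$ in some way and stays consistent with exactly one of the extensions $\mc{H}^p, \mc{H}^n$, so running $\conpar$ on both and returning the shorter successful path still yields the minimum. The main obstacle is the minimality argument, since tracking how consistency with $\mc{H}$ constrains the priorities of an unknown $\kappa'$ along its own Zielonka chain requires careful bookkeeping; the remaining pieces reduce to direct inspection of the algorithm.
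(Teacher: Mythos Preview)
Your proposal is correct and covers the same four ingredients as the paper (termination/runtime, correctness of the returned condition, soundness of the failure branch via \autoref{parityconditionexistence}, and the $Q\times\Sigma\notin\mc{H}$ case). The one place where your argument genuinely diverges is minimality. The paper proves optimality by extracting an explicit strictly $\kappa'$-decreasing sequence of transitions $p^*_{n-1},\dotsc,p^*_0$, one per level: at each step it picks a $\kappa'$-minimal element of the current $Z_i$, lands in some $X_{i-1}^j\in\mc{H}_{\sigma_{i-1}}$ from the union defining $Z_{i-1}$, and uses consistency of $\kappa'$ to find a strictly smaller element inside that $X_{i-1}^j$. You instead compare chains directly, showing $Z_i\subseteq Z'_i$ by induction and concluding from $Z_{n-1}\neq\emptyset$ that $\kappa'$ needs at least $n$ priorities. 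Your route is arguably cleaner, since it avoids tracking individual witnesses and works uniformly once you extend $Z'_j$ to all $j$ (with $Z'_j=\emptyset$ for $j\ge m$); the paper's route has the minor advantage of not needing to preprocess $\kappa'$ into gap-free form, since the descending sequence argument only uses the parity of $\min\kappa'(X)$ for $X\in\mc{H}$ and never refers to the chain $Z'_i$ at all. Both arguments are short and neither dominates the other.
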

\begin{proof}[Proof (sketch)]
    We proceed in two steps and first show that the classification obtained by the Zielonka path computed by $\conpar$ are indeed consistent with the original partial condition. Subsequently one shows that if the computation exits prematurely, then there exist positive and negative sets whose unions coincide, which by \autoref{parityconditionexistence} means that no equivalent parity condition exists.
\end{proof}

\subparagraph*{Rabin Conditions}

We now turn towards computing an equivalent Rabin condition based on a given partial condition, for which we again utilize an observation about union-closedness. Specifically, a Muller condition is equivalent to a Rabin condition if and only if $\mc{F}_1$ is union-closed~\cite{zielonkatree}. The algorithm $\conrab$ (see \autoref{algo:rabinconsistency}) computes for each positive set $P$ in $\mc{H}$ a separate Rabin pair $(E_P, F_P)$ in which each transition that is not part of $P$ belongs to $E_P$ and every transition which does not occur in a negative subloop of $P$ belongs to $F_P$. In case a positive loop is equal to the union of its maximal negative subloops, no equivalent Rabin condition can be found as the condition on union-closedness outlined above is violated.

\begin{algorithm}
    \DontPrintSemicolon
    \KwIn{A consistent partial condition $\mc{H} = (\mc{H}_0, \mc{H}_1)$}
    \KwOut{A Rabin condition $\mc{R}$ consistent with $\mc{H}$}
    $\mc{R} \gets \emptyset$\;
    \ForEach{$P \in \mc{H}_0$}{
        $N_1, \dotsc, N_k \gets$ maximal sets in $\mc{P}(P) \cap \mc{H}_1$\;
        $E_P \gets (Q \times \Sigma) \setminus P$\;
        $F_P \gets P \setminus (N_1 \cup \dotsc \cup N_k)$\;
        \If{$F_P = \emptyset$}{
            \Return{No consistent Rabin condition exists}
        }
        $\mc{R} \gets \mc{R} \cup \{(E_P, F_P)\}$\;
    }
    \Return{$\mc{R}$}
    \caption{$\conrab$}
    \label{algo:rabinconsistency}
\end{algorithm}

\begin{restatable}{theorem}{rstconrabcorrectness}\label{conrabcorrectness}
    The algorithm $\conrab$ decides Rabin-\textnormal{\textsc{Consistency}} in polynomial time and returns a corresponding Rabin condition if one exists.
\end{restatable}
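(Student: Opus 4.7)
The plan is to verify three properties: polynomial runtime, soundness of the returned condition, and completeness of the failure detection. The polynomial runtime is immediate since the outer loop iterates at most $|\mc{H}_0|$ times, and each iteration performs polynomially many set operations on subsets of $Q \times \Sigma$ (selecting maximal negative subsets and computing unions and set differences).

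For soundness, suppose $\conrab$ returns $\mc{R} = \{(E_P, F_P) : P \in \mc{H}_0\}$. I would verify consistency for each classified set. For a positive $X \in \mc{H}_0$, the pair $(E_X, F_X)$ constructed in its iteration of the loop witnesses satisfaction: $E_X \cap X = \emptyset$ holds by definition since $E_X = (Q \times \Sigma) \setminus X$, and $F_X \cap X = F_X \neq \emptyset$ since the algorithm did not abort. For a negative $X \in \mc{H}_1$, I need to show that every pair $(E_P, F_P) \in \mc{R}$ fails to be satisfied. If $X \not\subseteq P$, pick any $(q,a) \in X \setminus P$; then $(q,a) \in X \cap E_P$ so $E_P \cap X \neq \emptyset$. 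Otherwise $X \subseteq P$, so $X$ is itself a negative subset of $P$ and is contained in some maximal negative subset $N_i$ of $P$; therefore $F_P \cap X \subseteq (P \setminus \bigcup_j N_j) \cap N_i = \emptyset$.

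For completeness, suppose the algorithm aborts during the iteration for some $P \in \mc{H}_0$, which means $P = N_1 \cup \dotsc \cup N_k$ for maximal negative subsets $N_i \in \mc{H}_1$. I would argue by contradiction: assume some Rabin condition $\mc{R}'$ consistent with $\mc{H}$ exists. The key fact to establish is that the family of sets \emph{not} satisfying a Rabin condition is closed under union. Indeed, if $X$ and $Y$ both fail to satisfy $\mc{R}'$, then for each pair $(E,F) \in \mc{R}'$, $X$ witnesses either $E \cap X \neq \emptyset$ (which propagates to $X \cup Y$) or $F \cap X = \emptyset$; in the latter case the same dichotomy for $Y$ shows that either $E$ meets $X \cup Y$ or $F$ avoids $X \cup Y$. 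Since each $N_i \in \mc{H}_1$ fails $\mc{R}'$ by consistency, their union $P$ must also fail $\mc{R}'$, contradicting $P \in \mc{H}_0$.

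I expect the main obstacle to be a clean presentation of the union-closure argument, since it is the only non-bookkeeping step; in particular the characterization of Rabin conditions via union-closedness of $\mc{F}_1$~\cite{zielonkatree} is the underlying classical fact, but since we work with partial rather than complete conditions, the argument is easier to carry out directly as sketched above than to invoke the characterization wholesale.
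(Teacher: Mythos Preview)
Your proposal is correct and follows essentially the same approach as the paper: the paper factors the argument into two lemmas---one showing the returned condition correctly classifies every $X \in \mc{H}$ (your soundness part, with the same case split on $X \subseteq P$ versus $X \not\subseteq P$), and one showing premature termination implies no consistent Rabin condition exists (your completeness part, via union-closure of the rejecting sets)---together with the same polynomial-time observation. Your explicit case analysis for union-closure is slightly more detailed than the paper's ``clearly'', but the underlying reasoning is identical.
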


A Rabin condition produced by $\conrab$ has $|\mc{H}_0|$ pairs and is not guaranteed to have the minimal number of pairs. Even though it is possible to find optimizations which might make use of the underlying structure with regard to strongly connected components and subset relations between positive and negative loops, we now illustrate why the computation of an optimal Rabin condition (with a minimal number of pairs) is NP-hard.

\subparagraph*{Fixed-size consistency}

For each acceptance type $\Omega \in \{\text{generalized Büchi}, \text{Parity}, \text{Rabin}\}$ and every natural number $k \in \mathbb{N}$ we define the decision problem $k$-$\Omega$-\textsc{Consistency}: Given a transition system $\mc{T}$ and a consistent partial condition $\mc{H}$ the question is whether there is an acceptance condition $\mc{C}$ of type $\Omega$ in $\mc{T}$ which is consistent with $\mc{H}$ such that $|\mc{C}| \leq k$. The algorithm $\conpar$ we provided earlier decides $k$-Parity-\textsc{Consistency} in polynomial time, however finding a Rabin or generalized Büchi condition of bounded size turns out to be much more difficult.

Intuitively, the difficulty in finding an optimal generalized Büchi condition with at most~$k$ components arises from the fact that the union of two negative sets is not necessarily guaranteed to also be negative. As there are in general exponentially many possible ways of partitioning the transitions into $k$ sets, a procedure for constructing an optimal generalized Büchi condition would need to consider all of them. In the following we establish that the fixed-size consistency problem for generalized Büchi conditions is already NP-complete when $k=3$. This is done by giving a reduction from $3$-Coloring for directed graphs, which is known to be NP-complete~\cite{karp1972reducibility}.

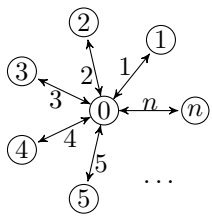
\begin{figure}
    \centering
    \begin{tikzpicture}[shorten >=1pt, auto, >=stealth', on grid, inner sep=0pt, minimum size=0pt,
        every state/.style={minimum size=2pt, inner sep=1pt}, initial text = ]
        \node[state] (0) {$0$};
        \foreach \a in {1,2,...,5}{
            \draw (\a*360/7: 1.2cm) node[circle, draw, minimum size=2pt, inner sep=1pt] (\a) {$\a$};
            \draw[<->] (0) edge node{$\a$} (\a);
        }
        \draw (6*360/7: 1.2cm) node (dots) {$\dotsc$};
        \draw (360: 1.2cm) node[circle, draw, minimum size=2pt, inner sep=1pt] (n) {$n$};
        \draw[<->] (0) edge node{$n$} (n);
    \end{tikzpicture}
    \caption{This figure contains a depiction of the transition system $\mc{T}_\mc{G}$, which can be used to show NP-completeness of $k$-generalized Büchi-\textnormal{\textsc{Consistency}} and $k$-Rabin-\textnormal{\textsc{Consistency}}.}
    \label{fig:threecoloringreductionaut}
\end{figure}

\begin{lemma}\label{gbaconsistencyhardness}
    $3$-generalized Büchi-\textnormal{\textsc{Consistency}} is NP-complete.
\end{lemma}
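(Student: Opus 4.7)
The plan is to reduce from 3-coloring of directed graphs (NP-complete by \cite{karp1972reducibility}) to $3$-generalized Büchi-\textnormal{\textsc{Consistency}}, and to observe that the problem lies in NP because one can guess three sets $F_1, F_2, F_3 \subseteq Q \times \Sigma$ and verify in polynomial time that every $X \in \mc{H}_0$ intersects each $F_i$ while every $X \in \mc{H}_1$ is disjoint from at least one $F_i$. The hardness direction is the interesting part and hinges on the transition system $\mc{T}_\mc{G}$ in \autoref{fig:threecoloringreductionaut}.

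Given a directed graph $\mc{G} = (V, E)$ with $V = \{1, \ldots, n\}$, I would build $\mc{T}_\mc{G}$ over alphabet $\Sigma = V$ with states $\{0\} \cup V$, initial state $0$, and the bidirectional transitions $\delta(0, a) = a$ and $\delta(a, a) = 0$ for every $a \in V$ depicted in the figure. Writing $L_a = \{(0, a), (a, a)\}$, the loops $L_a$ are pairwise disjoint 2-element sets. Define the partial condition
\[
\mc{H}_1 = \{L_a : a \in V\}, \qquad \mc{H}_0 = \{L_a \cup L_b : (a, b) \in E\}.
\]
Consistency of $\mc{H}$ is immediate from the cardinality mismatch, and each set in $\mc{H}$ is even induced by an ultimately periodic word ($a^\omega$ realizes $L_a$ and $(aabb)^\omega$ realizes $L_a \cup L_b$), so the hardness also applies to partial conditions arising from genuine samples.

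The correctness claim to prove is that $\mc{G}$ is 3-colorable iff $\mc{H}$ admits a generalized Büchi condition with at most three sets. From a proper 3-coloring $c : V \to \{1, 2, 3\}$, define $F_i = (Q \times \Sigma) \setminus \bigcup_{a : c(a) = i} L_a$. Each $L_a$ with $c(a) = i$ is then disjoint from $F_i$ and therefore rejecting. For an edge $(a, b)$ with $c(a) \neq c(b)$ and any color $k$, at least one of $a, b$ has color different from $k$, and by pairwise disjointness of the loops that $L_d$ survives untouched in $F_k$, so $(L_a \cup L_b) \cap F_k \neq \emptyset$. Conversely, from a consistent $\{F_1, F_2, F_3\}$, each $L_a$ must be disjoint from some $F_{i(a)}$; setting $c(a) = i(a)$ yields a proper 3-coloring because $c(a) = c(b) = i$ on an edge would force $L_a \cup L_b$ to be disjoint from $F_i$ and hence rejecting, contradicting $L_a \cup L_b \in \mc{H}_0$.

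The main obstacle I foresee is choosing a gadget in which the loops and their unions interact cleanly with the acceptance condition: if any two $L_a$'s shared a transition, then subtracting one of them in the definition of $F_i$ could accidentally damage the other, breaking the correspondence with coloring. The hub-and-spoke shape of $\mc{T}_\mc{G}$ with disjoint 2-cycles is precisely what avoids this, and it is also what makes the $(aabb)^\omega$ witness available for the sample-based formulation. Everything else is a routine verification.
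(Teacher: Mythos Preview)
Your proposal is correct and follows essentially the same reduction as the paper: the same hub-and-spoke transition system $\mc{T}_\mc{G}$, the same negative loops $L_a$ for vertices and positive loops $L_a \cup L_b$ for edges, and the same correspondence between colors and indices of Büchi sets. The only cosmetic difference is that the paper phrases the instance via sample words $n_i = i^\omega$ and $p_{ij} = (iijj)^\omega$ rather than directly as a partial condition, and defines $F_k$ as the set of (transitions on) symbols with color $\neq k$, which coincides with your complement description since the $L_a$ are pairwise disjoint and cover all defined transitions.
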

\begin{proof}
    Let $\mc{G} = (V, E)$ be a finite directed graph with $V = \{v_1, v_2, \dotsc, v_n\}$. We define the deterministic partial transition system \[
        \mc{T}_\mc{G} = (\{0, 1, 2, \dotsc, n\}, \{1, 2, \dotsc, n\}, 0, \delta)\text{ with }\delta(q, a) = \begin{cases}
        a \text{ if } q = 0 \\
        0 \text{ if } q = a \\
        \bot \text{ otherwise}
    \end{cases}\]
    which is depicted in \autoref{fig:threecoloringreductionaut}. Note that it is possible to construct an equivalent transition system over a binary alphabet $\Sigma' = \{a, b\}$ by encoding $i \in \Sigma$ as $a^ib$. Thus our choice of $\Sigma$ depending on the size of the graph merely serves to simplify notation in the following. We define a sample $S_\mc{G} = (P_\mc{G}, N_\mc{G})$ with
    \[P_\mc{G} = \{p_{ij} : (v_i, v_j) \in E\} \text{ and } N_\mc{G} = \{n_i : 0<i\leq n\} \text{ where } p_{ij} = (iijj)^\omega, n_i = i^\omega\]
    In the following we use $\bar{p_{ij}}$ and $\bar{n_i}$ to refer to the infinity set of the unique run of $\mc{T}_\mc{G}$ on $p_{ij}$ and $n_i$ respectively. Let $c : V \to \{1, 2, 3\}$ be a 3-coloring for $V$ such that $c(v_i) \neq c(v_j)$ for all $(v_i, v_j) \in E$. We construct a generalized Büchi condition $\mc{B}_\mc{G} = (F_1, F_2, F_3)$ with $F_k = \{i : c(v_i) \neq k\}$, witnessing membership in $3$-generalized Büchi-\textnormal{\textsc{Consistency}}. For all $i \leq n$ we have for $k = c(v_i)$ that $\bar{n_i} \cap F_k = \{0, i\} \cap F_k = \emptyset$ and thus $n_i \notin L(\mc{T}, \mc{B}_\mc{G})$. On the other hand $\bar{p_{ij}} \cap F_k = \{0, i, j\} \cap F_k \neq \emptyset$ for all $k$ as $c(v_i) \neq c(v_j)$ is guaranteed for all $(v_i, v_j) \in E$ by the coloring function $c$. Hence $p_{ij} \in L(\mc{T}, \mc{B}_\mc{G})$ and the constructed condition is indeed consistent with the sample.

    For the other direction assume that there exists a generalized Büchi condition $\mc{B} = (F_1, F_2, F_3)$ such that $\langle\mc{T}, \mc{B}\rangle$ is consistent with $S$. Clearly it must hold that $F_1 \cap F_2 \cap F_3 = \emptyset$ as otherwise there would exist some word $n_i \in N_\mc{G}$ with $\bar{n_i} \cap F_k = \{0, i\} \cap F_k \neq \emptyset$ for all $k$, which would contradict consistency with $S$. We can now define a coloring $c : V \to \{1, 2, 3\}$ with $c(v_i) = \min \{k : i \notin F_k\}$. For any $v_i, v_j \in V$ with $c(v_i) = c(v_j) = k$ we have $(v_i, v_j) \notin E$. If not then there would exist a word $p_{ij} \in P_\mc{G}$ for which consistency guarantees that $\bar{p_{ij}} \cap F_k = \{0, i, j\} \cap F_k \neq \emptyset$, which can only hold if $v_i$ and $v_j$ are assigned different colors. Thus $c$ is indeed a valid $3$-coloring, which concludes the reduction proof.

    Membership in NP holds as it is possible to verify for a guessed generalized Büchi condition $\mc{B}$ of size $3$ whether $\langle\mc{T},\mc{B}\rangle$ is consistent with $S$ in polynomial time by iterating over all $w \in P_\mc{G} \cup N_\mc{G}$ and verifying adequate acceptance/rejection by $\langle\mc{T},\mc{B}\rangle$.
\end{proof}

A similar reduction can be used to show the NP-hardness of $k$-Rabin-\textsc{Consistency} as well. This leads to the following theorem, which establishes the complexity of all fixed-size consistency decision problems we defined above.

\begin{restatable}{theorem}{rstconsistencycomplexity}\label{consistencycomplexity}
    $k$-Parity-\textnormal{\textsc{Consistency}} is solvable in polynomial time. For $k > 2$ both $k$-generalized Büchi-\textnormal{\textsc{Consistency}} and $k$-Rabin-\textnormal{\textsc{Consistency}} are NP-complete.
\end{restatable}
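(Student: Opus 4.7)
The plan is to address the three acceptance types separately. For Parity, Theorem~\ref{conparcorrectness} already guarantees that $\conpar$ computes, in polynomial time, a parity condition consistent with $\mc{H}$ of minimum index (or reports that none exists); hence $k$-Parity-\textsc{Consistency} is decided by running $\conpar$ and comparing the number of priorities of its output against $k$. Membership of $k$-generalized Büchi-\textsc{Consistency} and $k$-Rabin-\textsc{Consistency} in NP is immediate: guess an acceptance condition with at most $k$ components or pairs and verify consistency with $\mc{H}$ by iterating over the classified sets.

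For NP-hardness of $k$-generalized Büchi-\textsc{Consistency} with $k \geq 3$, I would extend the reduction of Lemma~\ref{gbaconsistencyhardness} by reducing from $k$-Coloring rather than $3$-Coloring, which remains NP-complete for every $k \geq 3$. The transition system $\mc{T}_\mc{G}$ and sample $S_\mc{G}$ stay exactly as in the lemma and both directions of the correctness argument generalize verbatim after replacing the constant $3$ by $k$: a $k$-coloring $c$ induces the generalized Büchi condition with components $F_\ell = \{i : c(v_i) \neq \ell\}$ for $\ell \in [1,k]$, and conversely any consistent condition of size at most $k$ yields the coloring $c(v_i) = \min\{\ell : i \notin F_\ell\}$, well-defined because $\bigcap_\ell F_\ell$ must be empty by consideration of the $n_i \in N_\mc{G}$.

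For NP-hardness of $k$-Rabin-\textsc{Consistency} with $k \geq 3$, the plan is a dual reduction from $k$-Coloring on the same $\mc{T}_\mc{G}$ with the roles of positive and negative sample words exchanged: take $P_\mc{G} = \{p_i = i^\omega : v_i \in V\}$ and $N_\mc{G} = \{n_{ij} = (iijj)^\omega : (v_i, v_j) \in E\}$, so that $\bar{n_{ij}} = \bar{p_i} \cup \bar{p_j}$. Given a $k$-coloring $c$, build pairs $(E_\ell, F_\ell)$ where $E_\ell$ collects the transitions of states $j$ with $c(v_j) \neq \ell$ and $F_\ell$ those of states with $c(v_j) = \ell$; the pair indexed by $c(v_i)$ then accepts $p_i$, while every pair rejects $n_{ij}$ because a proper coloring forces at least one of $v_i, v_j$ to have color different from $\ell$, placing the corresponding transitions into $E_\ell$. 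Conversely, given a consistent $k$-pair Rabin condition, set $c(v_i)$ to be the index of any pair that accepts $p_i$. The main obstacle is verifying that $c$ is a proper coloring: if a single pair $(E_\ell, F_\ell)$ accepted both $p_i$ and $p_j$ with $(v_i, v_j) \in E$, then $E_\ell$ would be disjoint from $\bar{p_i} \cup \bar{p_j} = \bar{n_{ij}}$ and $F_\ell$ would meet $\bar{n_{ij}}$ (since it already meets $\bar{p_i}$), so the pair would also accept $n_{ij}$, contradicting consistency. The structural fact driving this argument is that in $\mc{T}_\mc{G}$ the infinity set of every edge word is exactly the union of the infinity sets of its endpoint words, an arrangement that Rabin pairs cannot separate.
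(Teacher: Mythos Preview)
Your proposal is correct and follows essentially the same approach as the paper: the parity case is handled via \autoref{conparcorrectness}, and both hardness reductions (generalized Büchi and Rabin) use the star-shaped transition system $\mc{T}_\mc{G}$ with vertex-words and edge-words exactly as in Lemmas~\ref{gbaconsistencyhardness} and~\ref{rabinconsistencyhardness}. The one difference is that you reduce from $k$-Coloring rather than $3$-Coloring, which explicitly covers all $k>2$; the paper proves only the case $k=3$ in the lemmas and leaves the extension to larger $k$ implicit in the theorem's proof.
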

\section{Passive learning}\label{sec:passivelearning}
Our procedure for the construction of a deterministic partial transition system is inspired by the well known regular positive negative inference (RPNI) algorithm through which deterministic finite automata can be constructed~\cite{rpniOG}. RPNI first constructs a prefix tree automaton which accepts precisely the positive sample words from $S_+$ and subsequently attempts to merge states of this automaton in canonical order. If a merge introduces an inconsistency with the sample (i.e. the resulting automaton accepts a word in $S_-$) it is reverted. Otherwise the algorithm continues with the resulting automaton until no further merges are possible at which point it terminates.

When attempting to transfer this principle to infinite words, it is difficult to find a suitable counterpart for the prefix tree automaton. If we simply attached disjoint loops to the prefix tree at a certain depth, the resulting transition system could certainly be equipped with an acceptance condition such that it accepts precisely $S_+$. However, through the introduction of loops with a fixed length that cannot be resolved during the execution, we already determine parts of the structure of the resulting automaton. Instead, we start with a transition system consisting of a single initial state and attempt to introduce new transitions in a specific order (which is reminiscent of the algorithm presented in~\cite{learningsmallestdfas}).

The resulting algorithm $\sprout$ is shown in \autoref{algo:sprout}.
In each iteration we begin by computing $\operatorname{\mathsf{Escapes}}(S_+, \mc{T})$, the set of all prefixes of words in $S_+$ which are escaping in $\mc{T}$. From this set we now determine the word with the minimal escape-prefix $ua$ in length-lexicographic order. The existing states are then tested as a target for the missing transition in canonical order and if the resulting transition system is $\Omega$-consistent with the sample, we continue with the next escaping word. Checking for consistency is done by using the results from \autoref{sec:consistencyalgos} and ensuring that no pair of indistinguishable words in $S_+ \times S_-$ exists, both of which are possible in polynomial time. If no suitable target can be found, a new state is introduced instead. See \autoref{sproutexample} for an illustration. Note that the order in which states are checked as a potential transition target coincides with the order in which merges are attempted in RPNI.

\begin{algorithm}
    \DontPrintSemicolon
    \KwIn{A Sample $S = (S_+, S_-)$ over the alphabet $\Sigma$ and an acceptance type $\Omega \in \Acctype$}
    \KwOut{The deterministic $\Omega$-automaton $\mc{A} = (Q, \Sigma, q_0, \delta, \mc{C})$ consistent with $S$}
    $Q \gets \{\varepsilon\},\ \delta \gets \emptyset,\ \mc{T} \gets (Q, \Sigma, \varepsilon, \delta)$\;
    \While{$\operatorname{\mathsf{Escapes}}(S_+, \mc{T}) \neq \emptyset$}{
        $ua \gets$ length-lexicographic minimal escape-prefix of a word in $S_+$\;
        \If{$|u| > \operatorname{\mathsf{Thres}}(S, \mc{T})$}{
            \Return{$\operatorname{\mathsf{Aut}}(\operatorname{\mathsf{Extend}}(Q, \Sigma, \varepsilon, \delta, S_+, S_-), S, \Omega)$}
        }
        \ForAll{$q \in Q$ in canonical order}{
            $\delta' \gets \delta \cup \{\hat{u} \xrightarrow{a} q\}$ for the $\hat{u} \in Q$ with $\delta^*(\varepsilon, u) = \hat{u}$\;
            \If{$(Q, \Sigma, \varepsilon, \delta')$ is $\Omega$-consistent with $S$}{
                $\delta \gets \delta' \text{ and \textbf{continue} with the next escpaing word}$
            }
        }
        $Q \gets Q \cup \{\hat{u}a\},\ \delta \gets \delta \cup \{\hat{u} \xrightarrow{a} \hat{u}a\}$ for the $\hat{u} \in Q$ with $\delta^*(\varepsilon, u) = \hat{u}$\;
    }
    \Return{$\operatorname{\mathsf{Aut}}(\mc{T}, S, \Omega)$}
    \caption{$\sprout$}
    \label{algo:sprout}
\end{algorithm}

Unfortunately there exist samples for which this approach of introducing transitions does not terminate. When executed on $S = (\{(baa)^\omega\}, \{(ab)^\omega, (ba)^\omega, (babaa)^\omega\})$ for example, the algorithm would not terminate and instead construct an infinite $b$-chain with $a$-loops on each state. We therefore introduce a threshold on the maximal length of escape-prefixes that are considered in the algorithm. Once this threshold is exceeded, the algorithm terminates. We have choosen the threshold such that we can show completeness for IRC, which works for $\operatorname{\mathsf{Thres}}(S, \mc{T}) = l_b + l_e^2 + 1$, where $l_e$ and $l_b$ denote the maximal length of $u$ and $v$ for any sample word $uv^\omega \in S$. Intuitively, this value is sufficient to obtain completeness for IRC as any two sample words must have already differed in at least one position once it is exceeded.

If the threshold is exceeded before a transition system is found that is consistent with the sample and has no escaping words from $S_+$, the transition system is extended with disjoint loops that guarantee acceptance of the remaining words in $S_+$ through the function $\operatorname{\mathsf{Extend}}$, which we describe in the following. Assume that the algorithm has constructed a transition system $\mc{T} = (Q, \Sigma, \varepsilon
, \delta)$ for which it then encounters an escape-prefix exceeding the defined threshold. For each state $q \in Q$ we collect all exit strings that leave $\mc{T}$ from $q$ in a set $E_q$. Note that since the shortest escape-prefix in $\mc{T}$ exceeded the threshold, each word in $E_q$ must be of the form $u^\omega$ for some $u \in \Sigma^+$ and we can write $E_q = \{u_1^\omega, \dotsc, u_k^\omega\}$.

For each state $q$ such that $E_q \neq \emptyset$ we now construct the transition system $\mc{T}^\circlearrowleft_{E_q}$ in which exactly those words that belong to $E_q$ induce loops. To prevent any unintended words from being accepted, we additionally ensure that the initial state of $\mc{T}^\circlearrowleft_{E_q}$ is transient (meaning it cannot be reached from any state within $\mc{T}^\circlearrowleft_{E_q}$). In the following we use $\prf(u)$ for a word $u \in \Sigma^*$ to denote the set of all prefixes of $u$. Formally we define $\mc{T}^\circlearrowleft_{E_q} = (Q^\circlearrowleft_{E_q}, \Sigma, q_0, \delta^\circlearrowleft_{E_q})$ with
\begin{align*}
    Q^\circlearrowleft_{E_q} &= \{q_0\} \cup \bigcup_{u^\omega \in E_q} \prf(u)\\
    \delta^\circlearrowleft_{E_q}(w,a) &= \begin{cases}
        a & \text{ if } w = q_0 \text{ and } a \in \Sigma \cap Q^\circlearrowleft_{E_q}\\
        \varepsilon & \text{ if } (wa)^\omega \in E_q\\
        wa & \text{ if } wa \in Q^\circlearrowleft_{E_q} \text{ and } (wa)^\omega \notin E_q\\
        \bot & \text{ otherwise }
    \end{cases}
\end{align*}
It is easy to see that $q_0$ is indeed transient in $\mc{T}^\circlearrowleft_{E_q}$ and we can clearly find a Büchi (and thus also a generalized Büchi, Rabin and Parity) condition such that every word in $E_q$ induces an accepting run in $\mc{T}^\circlearrowleft_{E_q}$. By attaching the corresponding $\mc{T}^\circlearrowleft_{E_q}$ to each state $q$ for which $E_q$ is non-empty, we obtain a transition system in which no word from $S_+$ is escaping.

Once the main loop terminates, the function $\operatorname{\mathsf{Aut}}$ is called, which uses the results from \autoref{sec:consistencyalgos} to compute an automaton that is $\Omega$-consistent with $S$, which is then returned.

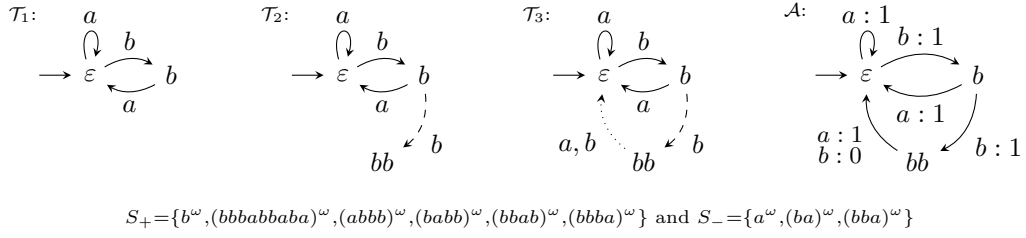
\begin{figure}
    \centering
    \begin{tikzpicture}[
        baseline=(eps),
        >=stealth,
        shorten >=1pt,
        auto,
        node distance=6mm,
        transform shape,
        state/.style={circle, inner sep=2pt},
        new/.style={dashed},
        alsonew/.style={dotted},
        initial text = 
        ]
        \node[state, initial] (eps) {$\varepsilon$};
        \node[state, right=of eps] (b) {$b$};
        \node[above left=of eps] (lbl) {$\scriptstyle \mc{T}_1:$};
        \path[->]
            (eps) edge [loop above] node {$a$} (eps)
            (eps) edge [bend left]  node {$b$} (b)
            (b)   edge [bend left]  node {$a$} (eps)
            ;
    \end{tikzpicture}
    \quad\quad
    \begin{tikzpicture}[
        baseline=(eps),
        >=stealth,
        shorten >=1pt,
        auto,
        node distance=6mm,
        transform shape,
        state/.style={circle, inner sep=2pt},
        new/.style={dashed},
        alsonew/.style={dotted},
        initial text = 
        ]
        \node[state, initial] (eps) {$\varepsilon$};
        \node[state, right=of eps] (b) {$b$};
        \node[state, below=of b, xshift=-5.5mm] (bb) {$bb$};
        \node[above left=of eps] (lbl) {$\scriptstyle \mc{T}_2:$};
        \path[->]
            (eps) edge [loop above] node {$a$} (eps)
            (eps) edge [bend left]  node {$b$} (b)
            (b)   edge [bend left]  node {$a$} (eps)
            (b)   edge [bend left,new]  node {$b$} (bb)
            ;
    \end{tikzpicture}
    \quad\quad
    \begin{tikzpicture}[
        baseline=(eps),
        >=stealth,
        shorten >=1pt,
        auto,
        node distance=6mm,
        transform shape,
        state/.style={circle, inner sep=2pt},
        new/.style={dashed},
        alsonew/.style={dotted},
        initial text = 
        ]
        \node[state, initial] (eps) {$\varepsilon$};
        \node[state, right=of eps] (b) {$b$};
        \node[state, below=of b, xshift=-5.5mm] (bb) {$bb$};
        \node[above left=of eps] (lbl) {$\scriptstyle \mc{T}_3:$};
        \path[->]
            (eps) edge [loop above] node {$a$} (eps)
            (eps) edge [bend left]  node {$b$} (b)
            (b)   edge [bend left]  node {$a$} (eps)
            (b)   edge [bend left,dashed]  node {$b$} (bb)
            (bb)  edge [bend left,alsonew]  node {$a, b$} (eps)
            ;
    \end{tikzpicture}
    \quad\quad
    \begin{tikzpicture}[
        baseline=(eps),
        >=stealth,
        shorten >=1pt,
        auto,
        node distance=10mm,
        transform shape,
        state/.style={circle, inner sep=2pt},
        new/.style={dashed},
        initial text = 
        ]
        \node[state, initial] (eps) {$\varepsilon$};
        \node[state, right=of eps] (b) {$b$};
        \node[state, below=of b, xshift=-8mm, yshift=4mm] (bb) {$bb$};
        \node[above left=of eps, xshift=2mm,yshift=-2mm] (lbl) {$\scriptstyle \mc{A}:$};
        \path[->]
            (eps) edge [loop above] node {$a:1$} (eps)
            (eps) edge [bend left]  node {$b:1$} (b)
            (b)   edge [bend left]  node {$a:1$} (eps)
            (b)   edge [bend left]  node {$b:1$} (bb)
            (bb)  edge [bend left]  node[xshift=2mm,yshift=1mm] {\renewcommand{\arraystretch}{0.4}\begin{tabular}{c}$a:1$\\$b:0$\end{tabular}\renewcommand{\arraystretch}{1.0}} (eps)
            ;
    \end{tikzpicture}
    \vspace{3mm}\\
        $\scriptstyle  S_+ = \{b^\omega, (bbbabbaba)^\omega, (abbb)^\omega, (babb)^\omega, (bbab)^\omega, (bbba)^\omega\} \text{ and } S_- = \{a^\omega, (ba)^\omega, (bba)^\omega\}$
    \caption{In this figure three transition systems that arise during the execution of $\sprout$ on the sample $S = (S_+, S_-)$ are depicted. The dashed transition cannot lead to $\varepsilon$ as otherwise the union of the infinity sets of $(ba)^\omega$ and $(bba)^\omega$ would coincide with that of $(bbbabbaba)^\omega$ and thus no consistent parity condition exists. Similarly a self-loop on $b$ would mean that the infinity sets induced by $(babb)^\omega$ and $(bba)^\omega$ would coincide. Thus the $b$-transition must lead to a new state $bb$. On the right we can see the DPA obtained by augmenting $\mc{T}_3$ with the parity function computed by $\conpar$ on the partial condition induced by $S$.
    }     \label{sproutexample}
\end{figure}

\begin{restatable}{proposition}{rstpolytimesprout}\label{polytimesprout}
    For a given sample $S$ and an acceptance type $\Omega \in \Acctype$ the algorithm $\sprout$ computes in polynomial time an automaton of type $\Omega$ that is consistent with $S$.
\end{restatable}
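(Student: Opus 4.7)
The plan is to verify the two claims of the proposition separately: first that $\sprout$ terminates in polynomial time, and second that the automaton it returns is consistent with $S$.

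For the running-time bound, the key observation is that the transition system $\mc{T}$ stays polynomially bounded throughout the execution. Every state is introduced as $\hat{u}a$ for some escape-prefix processed by the main loop, and the length of any such escape-prefix is bounded by $\operatorname{\mathsf{Thres}}(S,\mc{T}) + 1 = l_b + l_e^2 + 2$, which is polynomial in the size of $S$. Consequently $|Q|$ remains polynomial, and since each loop iteration either resolves an escape-prefix or introduces a single new state, the total number of iterations is polynomial as well. Within one iteration, the dominant cost is trying each existing state as a transition target and invoking the respective $\Omega$-consistency check from \cref{sec:consistencyalgos}, which runs in polynomial time by \cref{buchiconsistencyiscorrect,conparcorrectness,conrabcorrectness}. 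The remaining routines $\operatorname{\mathsf{Escapes}}$, $\operatorname{\mathsf{Extend}}$, and the final call to $\operatorname{\mathsf{Aut}}$ are polynomial by inspection, yielding an overall polynomial bound.

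For consistency, I would argue by induction on loop iterations that $\mc{T}$ is always $\Omega$-consistent with $S$. The base case is the trivial single-state transition system. In the inductive step one of two things happens: either a transition $\hat{u}\xrightarrow{a}q$ is added after $\sprout$ has explicitly checked that the resulting transition system is $\Omega$-consistent with $S$, in which case the invariant holds by construction; or no existing target works and $\sprout$ instead creates a fresh state $\hat{u}a$ with no outgoing edges. In the latter case, since no infinite run passes through the new state, the partial condition induced by $S$ on the extended $\mc{T}$ is contained in the old one, and any acceptance condition witnessing $\Omega$-consistency before remains a witness after the modification. Upon termination, if the while-loop finished normally then $\operatorname{\mathsf{Aut}}$ applied to the $\Omega$-consistent $\mc{T}$ yields the desired automaton directly; if instead the threshold branch was taken, $\operatorname{\mathsf{Extend}}$ attaches to each state $q$ the subsystem $\mc{T}^\circlearrowleft_{E_q}$, whose loops form fresh SCCs disjoint from the existing ones, so the partial condition still admits an $\Omega$-acceptance condition, obtained by extending the old witness to accept the new loops (which even a Büchi condition can do).

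The most delicate step is the new-state case: I need to check carefully that introducing a fresh target does not render any pair $(w_+,w_-)\in S_+\times S_-$ indistinguishable and does not cause a previously classified infinity set to change its classification. Both follow from the observation that the affected sample runs only lose rather than gain transitions in their infinity sets, since the newly reached state is a dead-end. Once this monotonicity is nailed down, the rest of the argument reduces to routinely invoking the polynomial-time consistency theorems of \cref{sec:consistencyalgos}.
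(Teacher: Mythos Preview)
Your approach matches the paper's: maintain $\Omega$-consistency of $\mc{T}$ as a loop invariant, argue that $\operatorname{\mathsf{Extend}}$ preserves it while eliminating all escaping $S_+$ words, and then invoke the consistency algorithms of \cref{sec:consistencyalgos} for the final call to $\operatorname{\mathsf{Aut}}$. You are actually more explicit than the paper on the new-state case of the invariant, which the paper simply absorbs into the sentence ``$\sprout$ keeps only those transitions for which $\Omega$-consistency with $S$ is maintained.'' (One phrasing nit: in that case the infinity sets of non-escaping words do not merely ``lose'' transitions --- they are unchanged, because the fresh state is a dead-end and no infinite run enters it; that is precisely why the old acceptance condition still works.)

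There is one genuine gap in your running-time argument. Bounding the \emph{length} of escape-prefixes by $\operatorname{\mathsf{Thres}}(S,\mc{T})+1$ does not by itself yield a polynomial bound on $|Q|$ or on the number of iterations: there are exponentially many strings of that length over $\Sigma$. The missing observation, which the paper uses, is that every escape-prefix is a prefix of one of the $n=|S_+|$ ultimately periodic sample words, so there are at most $n\cdot(T+2)$ candidates; since the minimal escape-prefix strictly increases in length-lexicographic order at every iteration, this bounds the number of iterations (and hence $|Q|$) by $n\cdot(T+2)$. Once you add this step, the rest of your time analysis goes through.
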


While $\sprout$ cannot learn all regular $\omega$-languages in the limit (see \autoref{anunlearnablelanguage}), we can show completeness for languages with an IRC.

\begin{restatable}{theorem}{rstlearnabilityinthelimit}\label{learnabilitycorollary}
    The algorithm $\sprout$ learns every $\Omega$-IRC language $L$ for $\Omega \in \Acctype$ in the limit with polynomial time and data.
\end{restatable}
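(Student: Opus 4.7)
The plan is to verify the two ingredients of learning in the limit with polynomial time and data separately. Polynomial runtime on any input sample is already Proposition~\ref{polytimesprout}, so the remaining task is to construct, for each $\Omega$-IRC language $L$, a characteristic sample $S_L$ of polynomial size such that $\sprout(S)$ recognises $L$ for every $S \supseteq S_L$ consistent with $L$. Throughout, let $\mc{A}_L = (Q, \Sigma, q_0, \delta, \mc{C})$ denote the minimal $\Omega$-IRC automaton for $L$; by the IRC property its states correspond bijectively to the classes of $\sim_L$, so I can fix length-lexicographically minimal access words $u_q$ with $\delta^*(q_0, u_q) = q$ and take $u_{q_0} = \varepsilon$.

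The characteristic sample $S_L$ will contain three kinds of ultimately periodic words: (i) positive loop witnesses $u_q v^\omega \in L$ whose runs in $\mc{A}_L$ traverse every transition and realise precisely the positive infinity sets needed to pin down $\mc{C}$ via the consistency algorithms of \cref{sec:consistencyalgos}; (ii) corresponding negative witnesses $u_q v^\omega \notin L$ pinning down each rejecting infinity set; and (iii) distinguishing witnesses for every pair of states $q \neq q'$, obtained from Myhill/Nerode distinguishers for $\sim_L$, which have length at most $|Q|$, combined with loop continuations of length at most $|Q|\cdot|\Sigma|$. Since there are $O(|Q|^2)$ state pairs, $O(|Q|\cdot|\Sigma|)$ transitions and at most polynomially many infinity sets induced by $\mc{A}_L$, the total size of $S_L$ is polynomial in the size of $\mc{A}_L$.

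Correctness is then shown in two steps. First, by induction on the order in which $\sprout$ processes escape-prefixes $ua$, I argue that the transition system built so far coincides with the restriction of $\mc{A}_L$ to those prefixes. In the inductive step, the correct target $q = \delta^*(q_0, ua)$ passes the $\Omega$-consistency test because the loop witnesses in $S_L$ agree with the behaviour of $\mc{A}_L$, whereas any earlier candidate in canonical order produces either an indistinguishable pair in $S_+ \times S_-$ (ruled out by the distinguishing witnesses) or a partial condition that is incompatible with $\Omega$ in the sense of \cref{parityconditionexistence} for parity, and the analogous union-closedness obstructions underlying \cref{conrabcorrectness} and \cref{buchiconsistencyiscorrect} for Rabin and (generalised) Büchi. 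Second, once the transition system equals that of $\mc{A}_L$ and no sample word is still escaping, the call $\operatorname{\mathsf{Aut}}$ invokes the appropriate consistency algorithm and returns an $\Omega$-acceptance condition consistent with the partial condition induced by $S_L$. Since this partial condition already classifies every infinity set realised in $\mc{A}_L$, the returned automaton agrees with $\mc{A}_L$ on every ultimately periodic word, and hence, using that a regular $\omega$-language is determined by its intersection with $\UP_\Sigma$, recognises $L$.

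The main obstacle is the careful choice of distinguishing witnesses so that every conceivable wrong merge is blocked. Because $\sprout$ considers candidate targets in canonical state order, I must ensure that for each state $q$ appearing earlier than the correct target the introduction of $\hat{u} \xrightarrow{a} q$ provably violates $\Omega$-consistency already on $S_L$. This requires a tailored bookkeeping per acceptance type: for Büchi it suffices to witness the $F$-intersecting behaviour, but for parity one must cover every alternation of the Zielonka path, and for Rabin every positive loop together with its maximal negative subloops. A secondary, mainly bookkeeping issue is to check that the threshold $l_b + l_e^2 + 1$ used by $\operatorname{\mathsf{Thres}}$ is never reached on $S_L$, so that $\operatorname{\mathsf{Extend}}$ is not invoked; since all access words have length at most $|Q|$ and the chosen loop witnesses can be taken of length at most $|Q|\cdot|\Sigma|$, this reduces to a direct length calculation.
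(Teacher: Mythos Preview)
Your proposal follows essentially the same strategy as the paper: polynomial time from Proposition~\ref{polytimesprout}, a characteristic sample built from minimal access words, state-separating witnesses, and acceptance-condition witnesses, and then an inductive argument that $\sprout$ reconstructs the transition system of $\mc{A}_L$ before applying the relevant consistency algorithm.

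One point where you diverge from the paper is worth flagging. You write that blocking a wrong target $\hat u \xrightarrow{a} q$ ``requires a tailored bookkeeping per acceptance type'' and invoke \cref{parityconditionexistence} and the Rabin/B\"uchi union obstructions. The paper's embedding argument is simpler and uniform: if $q \not\sim_L \hat u a$, the separating pair $\hat u a\,w,\, u_q\,w$ from $S_L$ reach the same state after the wrong insertion, so either both escape with identical exit strings (indistinguishable) or both loop with identical infinity sets (a direct clash $\mc H_0 \cap \mc H_1 \neq \emptyset$). No acceptance-type-specific reasoning is needed here; the type-specific work (Zielonka alternations, Rabin subloops, etc.) is only used later, to guarantee that the consistency algorithm run on the \emph{correct} transition system returns a condition equivalent to $\mc C$. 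Relatedly, your sentence ``this partial condition already classifies every infinity set realised in $\mc{A}_L$'' overstates what the sample achieves; only polynomially many carefully chosen loops are classified, and one must check (as the paper does separately for each $\Omega$) that these force $\conbuchi/\congenbuchi/\conpar/\conrab$ to output an equivalent condition, not merely a consistent one.
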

\begin{proof}[Proof (sketch)]
We describe the properties that a sample $S = (S_+, S_-)$ has to satisfy in order to be \emph{characteristic} for an $\Omega$-IRC language $L$:
\begin{itemize}
    \item The set of prefixes of $S_+$ has to contain for each $\sim_L$ equivalence class the minimal word in length-lexicographic order on which it is reached.
    \item Further the sample needs to contain words with which all pairs of equivalence classes can be separated.
    \item Finally $S$ needs to contain sufficient information about the acceptance condition of an automaton recognizing $L$.
\end{itemize}
The first two requirements can be satisfied in a similar way as for the original RPNI algorithm~\cite{rpniOG}. For parity conditions this has already been investigated in~\cite{angluinfisman}. Below we give a description for Rabin conditions. Detailed definitions for the remaining types of acceptance conditions we introduced can be found in the appendix.
  
A sample $S_\mc{R} = (S_+, S_-)$ capturing a Rabin condition $\mc{R}$ can be obtained as follows: For each pair $(E_i, F_i)$ we remove all transitions in $E_i$ from the transition system that $\mc{R}$ is defined in, decompose the result into its SCCs $C_1, \dotsc, C_k$ and compute sets $K_i$ consisting of all transitions in $C_i$. If the set of all transitions $K_i$ in such an SCC satisfies $\mc{R}$ we add a word $w_i$ inducing $K_i$ to $S_+$, otherwise $w_i$ is added to $S_-$. For each accepting $K_i$ we then remove all transitions in an $F_j$ for which $K_i \cap E_j = \emptyset$, and decompose the resulting transition system into its SCCs $D_1, \dotsc, D_l$. These are the maximal negative subloops of $K_i$ and for each $D_j$ a word visiting all transitions in $D_j$ is added to $S_-$.
\end{proof}

While every $\Omega$-IRC language can be learned through a characteristic sample, the same does not hold for arbitrary $\omega$-regular languages as the following proposition establishes.

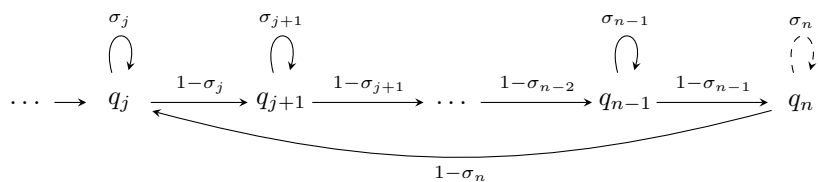
\begin{figure}
    \centering
    \begin{tikzpicture}[
        baseline=(eps),
        >=stealth,
        shorten >=1pt,
        auto,
        node distance=15mm,
        transform shape,
        state/.style={circle, inner sep=0pt, minimum size=8mm},
        new/.style={dashed},
        alsonew/.style={dotted},
        initial text = $\dotsc$
        ]
        \node[state, initial] (j) {$q_j$};
        \node[state, right=of eps] (j1) {$q_{j+1}$};
        \node[right=of j1]   (dots) {$\dotsc$};
        \node[state, right=of dots] (1n) {$q_{n-1}$};
        \node[state, right=of 1n]   (n) {$q_n$};
        \path[->]
            (j) edge [loop above] node {$\scriptstyle \sigma_j$} (j)
            (j) edge                node {$\scriptstyle 1-\sigma_j$} (j1)
            (j1) edge [loop above] node {$\scriptstyle \sigma_{j+1}$} (j1)
            (j1) edge node {$\scriptstyle 1-\sigma_{j+1}$} (dots)
            (dots) edge node {$\scriptstyle 1-\sigma_{n-2}$} (1n)
            (1n) edge [loop above] node {$\scriptstyle \sigma_{n-1}$} (1n)
            (1n) edge node {$\scriptstyle 1-\sigma_{n-1}$} (n)
            (n) edge[loop above, dashed] node {$\scriptstyle \sigma_n$} (n)
            (n) edge[bend left=15]     node {$\scriptstyle 1-\sigma_n$} (j);
    \end{tikzpicture}
    \caption{In this figure an excerpt of the transition system for the proof of \autoref{anunlearnablelanguage} is depicted. The transition from $q_n$ to $q_j$ forms a closed loop and words $w_1 \in L_\vee, w_2\notin L_\vee$ which induce the same infinity set can be found. Based on the existence of these words we can conclude that $\sprout$ constructs a chain with self-loops on each state when attempting to learn an automaton recognizing $L_\vee$}    
    \label{unlearnablelanguageillustration}
\end{figure}

\begin{proposition}\label{anunlearnablelanguage}
    The language
    $L_\vee = \{w \in \{a,b\}^\omega :  aaaa \text{ occurrs infinitely often in } w \text{ or}\allowbreak bbbb \text{ occurrs infinitely often in } w\}$
    cannot be learned by $\sprout$.
\end{proposition}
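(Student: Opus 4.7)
I would prove the proposition by showing that no finite sample can be characteristic for $L_\vee$. The strategy is to exhibit, for an arbitrary candidate sample $S_L$ consistent with $L_\vee$, an augmented sample $S \supseteq S_L$ (still consistent with $L_\vee$) on which $\sprout$ produces an automaton recognizing a proper subset of $L_\vee$. Since the augmentation preserves consistency, the characteristic property would demand $\sprout(S) = L_\vee$, yielding the contradiction.

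The central combinatorial lemma is the following. Suppose that during its execution $\sprout$ has built a chain $q_0, q_1, \dotsc, q_n$ with self-loops $q_i \xrightarrow{\sigma_i} q_i$ for $\sigma_i \in \{a,b\}$, as depicted in \autoref{unlearnablelanguageillustration}, and now considers closing it into a cycle by adding a back-edge $q_n \xrightarrow{1-\sigma_n} q_j$ to some earlier state. Any two infinite runs that traverse every transition of the resulting cycle infinitely often share the same infinity set, namely the entire cycle. Using this I can build two ultimately periodic words with identical infinity sets but opposite membership in $L_\vee$: $w_1$ stays at least four times at some self-loop each round, producing $\sigma_i^4 \in \{aaaa, bbbb\}$ infinitely often, so $w_1 \in L_\vee$; and $w_2$ stays exactly once at every self-loop per round, so the corresponding word contains at most two consecutive identical letters and $w_2 \notin L_\vee$. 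Placing $w_1$ in $S_+$ and $w_2$ in $S_-$ makes the attempted back-edge $\Omega$-inconsistent with the sample.

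The augmentation of $S_L$ now proceeds as follows. Because $\sprout$ terminates once the current escape prefix exceeds $\operatorname{\mathsf{Thres}}(S, \mc{T})$, the length of any chain it could ever build is bounded by some $N$ depending on $|S_L|$. For every chain of length at most $N$, every labelling of self-loop letters, and every possible back-edge target, I add the corresponding witness pair $(w_1, w_2)$ (positive and negative, respectively) to $S_L$. Since $w_1 \in L_\vee$ and $w_2 \notin L_\vee$, the result $S \supseteq S_L$ is a finite sample still consistent with $L_\vee$. A straightforward induction on the iterations of $\sprout(S)$ shows that the transition system it builds is a chain with self-loops throughout: whenever a tentative back-edge would close a cycle, the corresponding witness pair in $S$ refutes $\Omega$-consistency, forcing $\sprout$ to create a fresh state, so the chain grows until the escape threshold is surpassed and $\operatorname{\mathsf{Extend}}$ is invoked.

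Finally, $\operatorname{\mathsf{Extend}}$ appends disjoint, transient-prefixed gadgets $\mc{T}^\circlearrowleft_{E_q}$ that can accept only ultimately periodic words whose eventual periods appear as exit strings of sample words in $S_+$, which is a finite set determined by $S$. However, $L_\vee$ contains infinitely many pairwise distinct ultimately periodic words, for instance $(a^{4k}b)^\omega$ for every $k\ge 1$, so $\sprout(S)$ cannot recognize $L_\vee$. The main technical obstacle is the combinatorial lemma: witness pairs must be supplied uniformly across all self-loop labellings $\sigma_0, \dotsc, \sigma_n$ and all back-edge targets $j$, and one must verify that $w_2$ never accidentally contains the forbidden factors $aaaa$ or $bbbb$, which is precisely why $w_2$ is chosen with only a single repetition per self-loop rather than three.
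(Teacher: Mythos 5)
Your key combinatorial lemma is exactly the one the paper uses (a closed cycle in a chain-with-self-loops admits witnesses $w_1 \in L_\vee$, taking each self-loop four times, and $w_2 \notin L_\vee$, taking each self-loop once, with identical infinity sets), and refuting every candidate characteristic sample by a consistent extension is a legitimate skeleton for non-learnability. The gap is in your augmentation step. You claim the length of any chain $\sprout$ could build is bounded by some $N$ ``depending on $|S_L|$'', but the threshold $\operatorname{\mathsf{Thres}}(S,\mc{T}) = l_b + l_e^2 + 1$ is computed from the \emph{augmented} sample $S$, not from $S_L$. Any witness pair for a cycle through $k$ states must have period length at least linear in $k$, since its infinity set has to cover the whole cycle (and $\sprout$ tries the target $q_0$ \emph{first}, so you cannot avoid witnesses for the longest cycles). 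Hence covering all chains up to length $N$ forces $l_e \geq cN$, and therefore $\operatorname{\mathsf{Thres}}(S,\mc{T}) \geq c^2N^2 \gg N$: the very words you add to police chains of length $N$ push the threshold, and with it the reachable chain lengths, to order $N^2$. In that regime your sample contains no witnesses, nothing prevents $\sprout$ from inserting a back-edge, and your induction (``the transition system remains a chain throughout, so $\operatorname{\mathsf{Extend}}$ is eventually invoked'') collapses. No choice of $N$ escapes this circularity, so the construction is not merely incomplete; as designed it cannot be carried out.

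The missing idea --- and the reason the paper needs no augmentation at all --- is that a back-edge does not have to be forbidden \emph{by sample words}. If $\sprout$ ever closes a cycle, the witnesses $w_1$ and $w_2$ exist as words of $\Sigma^\omega$ even though they need not occur in the sample: their runs stay inside the cycle forever, transitions are never removed or redirected, so in every later extension of the transition system they still have the same infinity set while lying on opposite sides of $L_\vee$. Consequently no acceptance condition on any automaton that $\sprout$ eventually outputs can classify both correctly, and the output fails to recognize $L_\vee$ regardless of what the sample contains. This yields the dichotomy the paper proves for \emph{every} sample consistent with $L_\vee$: either a cycle is closed (output wrong by the above), or the transition system stays a chain until the threshold is exceeded, and then $\operatorname{\mathsf{Extend}}$ produces an automaton that accepts only ultimately periodic words built from finitely many period patterns, which cannot be $L_\vee$. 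Your final step (the analysis of $\operatorname{\mathsf{Extend}}$) is essentially sound and can be kept; it is the attempt to enforce the chain shape through the sample that must be replaced by this language-level inconsistency argument.
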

\begin{proof}
    To simplify notation, we exchange the alphabet and use $\Sigma = \{0, 1\}$ instead, as it allows arithmetic on the symbols in $\Sigma$.
    We prove this claim by showing through induction that the transition system constructed by $\sprout$ must be a chain with loops on each state. Specifically we show that every intermediate transition system $\mc{T} = (Q, \Sigma, q_0, \delta)$ with $Q = \{q_0, q_1, \dotsc, q_n\}$ created by $\sprout$ before the threshold is exceeded is either not $\Omega$-consistent with $L_\vee$ for any $\Omega \in \Acctype$ or the following holds:
    \begin{itemize}
        \item for each $i < n$ there exists a symbol $\sigma \in \Sigma$ such that $\delta(q_i, \sigma) = q_i$ and $\delta(q_i, 1-\sigma) = q_{i+1}$
        \item if $q_n$ has an outgoing transition on some $\sigma \in \Sigma$ then $\delta(q_n, \sigma) = q_n$ and $\delta(q_n, 1-\sigma) = \bot$
    \end{itemize}
    The initial transition system is clearly $\Omega$-consistent with $L_\vee$ for all $\Omega \in \Acctype$. Further it trivially satisfies the two outlined conditions as it has only one state, for which no outgoing transitions exist. For the induction step assume that $\sprout$ has constructed a transition system $\mc{T} = (Q, \Sigma, q_0, \delta)$ with $Q = \{q_0, q_1, \dotsc, q_n\}$ for which the claim holds. We now show that the next inserted transition either introduces an inconsistency with $L_\vee$ or it leads to a transition system that also satisfies the two conditions.

    If a transition from $q_n$ to some $q_j$ with $j < n$ were inserted, then a closed cycle is formed. As $q_j$ is reachable there must exist some word $u \in \Sigma^*$ such that $\delta^*(q_0, u) = q_j$. Consider now the word $v \in \Sigma^*$ such that $\delta^*(q_j, v) = q_j$ and the letters in $v$ are such that they alternate between taking the self-loop and moving to the next state along the cycle. If the loop on $q_n$ does not exist, then $v$ just transitions back to $q_j$ at this point. As can be seen in \autoref{unlearnablelanguageillustration}, no alphabet symbol can occur more than once in a row in $v$ if the dashed self-loop on $q_n$ is present. Otherwise at most three consecutive occurrences of the same symbol can appear in $v$ and we clearly have that $w_1 = uv^\omega \notin L_\vee$. Consider now a word $w_2$ which takes each self-loop on the cycle four times before moving to the next state. This means $w_2 \in L_\vee$ but because the infinity sets induced by $w_1$ and $w_2$ coincide (as both words take all possible transitions infinitely often), an automaton containing such a closed cycle cannot be consistent with $L_\vee$.

    We have thus shown that no transition can lead from $q_n$ back to a state $q_j$ with $j < n$. If $q_n$ has no outgoing transitions, then a self-loop on the currently escaping symbol is inserted as it clearly does not introduce an inconsistency. On the other hand if $q_n$ already has a self-loop on some symbol $\sigma \in \Sigma$, then the transition on $1-\sigma$ must lead to a new state $q_{n+1}$ as otherwise $(ab)^\omega \notin L_\vee$ and $(aaab)^\omega \in L_\vee$ would induce the same infinity set. Thus the $\sprout$ algorithm indeed constructs a chain with self-loops until it eventually exceeds the threshold. Once this happens, the transition system is extended such that it accepts precisely the positive sample words. As the sample is finite, the resulting automaton cannot recognize $L_\vee$ since there will always be some word $w \in L_\vee$ that is not present in the sample.
\end{proof}

However on the other hand $\sprout$ is not limited to learning automata for languages with IRC of some type. In the following proposition we give an infinite family of languages which are not in $\Omega$-IRC for any $\Omega \in \Acctype$, and have polynomial size characteristic samples for $\sprout$. 

\begin{proposition}\label{btotheilearnable}
    For $i > 1$, consider $L_i = (\Sigma^*b^i)^\omega$ and the sample
    $S^i = (S_+^i, S_-^i)$ with
    $
        S^i_+ = \{b^\omega, (b^iab^{i-1}a\dotsc b^2ab^1a)^\omega\} \cup \{(b^jab^k)^\omega : j+k=i\}
    \allowbreak\text{ and }\allowbreak
        S^i_- = \{(b^ja)^\omega: j < i\}.
    $
    Then $S^i$ is a characteristic sample for $L_i$ and the learner $\sprout$ with parity as target condition. 
        (The sample for $i=3$ is used in the example in \autoref{sproutexample}.)
\end{proposition}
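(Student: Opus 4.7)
Define $\mc{A}_i$ as the deterministic parity automaton with states $q_0, q_1, \dots, q_{i-1}$, transitions $q_j \xrightarrow{a} q_0$ for all $j$, $q_j \xrightarrow{b} q_{j+1}$ for $j < i-1$, and the back-edge $q_{i-1} \xrightarrow{b} q_0$; the parity condition assigns priority $0$ to the back-edge and priority $1$ to every other transition. The back-edge is taken exactly when a $b$ completes a stretch of $i$ consecutive $b$'s, so $L(\mc{A}_i) = L_i$. The plan is to show that for every sample $S \supseteq S^i$ consistent with $L_i$, the execution of $\sprout$ on $S$ with parity as target produces exactly $\mc{A}_i$ (up to state renaming).

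The escape prefixes encountered by $\sprout$ in length-lexicographic order are $a, b, ba, bb, bba, bbb, \dots, b^{i-1}a, b^i$, alternating an $a$-step and a $b$-step. The plan is an induction on this enumeration, maintaining that after processing $b^k a$ the partial transition system is the restriction of $\mc{A}_i$ to $\{q_0, \dots, q_k\}$ together with the $a$-transitions $q_0 \xrightarrow{a} q_0, \dots, q_k \xrightarrow{a} q_0$, and that after processing $b^{k+1}$ the edge $q_k \xrightarrow{b} q_{k+1}$ has been added (creating a fresh state $q_{k+1}$) for $k < i-1$, whereas for $k = i-1$ the back-edge $q_{i-1} \xrightarrow{b} q_0$ is added. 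The $a$-insertions are immediate: at the moment $q_k \xrightarrow{a}$ is considered, every positive word in $S^i$ still escapes because it contains a $b$-block of length $i$ that the chain of length at most $k+1 \leq i-1$ cannot absorb. The induced partial condition therefore has no positive infinity set and is satisfied by any parity condition that rejects all negatives, so the first candidate target $q_0$ succeeds.

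The core of the argument is the $b$-insertion step, where for $k < i-1$ every target $q_m$ with $0 \leq m \leq k$ must be shown to yield a partial condition that is not parity-consistent, so that $\sprout$ creates the fresh state $q_{k+1}$. Closing $q_k \xrightarrow{b} q_m$ creates a $b$-cycle of length $k-m+1 < i$ and turns the transition system into a strongly connected automaton on which many sample words now have infinite runs. For each $m$ I plan to exhibit the obstruction explicitly in terms of words from $S^i$: either a positive and a negative word share the same infinity set (for instance the negative $(b^{k+1}a)^\omega$ agrees with a positive $(b^j a b^l)^\omega$ with $j+l = i$ once $(b^j a b^l)$ wraps around the shortened cycle in the same way), or the infinity sets of a family of negatives $(b^l a)^\omega$ with $l \leq k$ union up to the infinity set of a positive, with the long word $(b^i a b^{i-1} a \cdots b a)^\omega$ realising the complete transition set when the cycle is short enough. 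In the latter case \autoref{parityconditionexistence} immediately forbids any consistent parity condition. For $k = i-1$ the first candidate target $q_0$ succeeds: closing the back-edge produces $\mc{A}_i$, and the partial condition induced by $S$ is realised by the intended parity assignment, so $\conpar$ accepts.

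The most delicate step will be a uniform presentation of these witness configurations, ensuring that the witnesses come from $S^i$ and that no assumption on the extra words of $S$ is needed beyond consistency with $L_i$. Once the witnesses are fixed, the induction yields $\sprout(S) = \mc{A}_i$ and the two-priority parity condition computed by $\conpar$ recognizes $L_i$, so $S^i$ is a characteristic sample. Extra positive or negative words from $S$ only enlarge $\mc{H}_0$ respectively $\mc{H}_1$ without destroying the forcings (which are witnessed inside $S^i$) or the consistency of the final parity assignment (whose acceptance already matches $L_i$), so the argument is stable under enlarging the sample.
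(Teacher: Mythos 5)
Your proposal follows essentially the same route as the paper's proof: process the escape prefixes $a,b,ba,bb,\dotsc,b^{i-1}a,b^i$ in order, show the $a$-transitions back to the initial state are always kept, show every $b$-transition into an existing state is destroyed by parity-inconsistency so that a fresh state is created, and conclude with the back-edge and the two-priority condition. The step you explicitly defer (``a uniform presentation of these witness configurations'') is exactly what the paper supplies, and it is simpler than your case split suggests: for every candidate target $q_m$ of the $b$-transition from $q_k$ (with $k<i-1$), the long positive word $(b^iab^{i-1}a\dotsc b^1a)^\omega$ traverses after each $a$ the whole chain and then wraps around the new $b$-cycle, so its infinity set $P$ equals a union of infinity sets of the negative words $(b^ja)^\omega$, $j\le k+1$: all of them when $m=0$, and all except that of $a^\omega$ when $m>0$ (in that case the cycle misses $q_0$, so $(q_0,a)\notin P$ --- this is the only genuine case dependence, and it only affects which negatives enter the union). \autoref{parityconditionexistence} then applies uniformly; your first alternative (a positive and a negative word with identical infinity sets) does occur for self-loop targets, but it is subsumed by the union identity (a union of a single negative set), so no separate case is needed.

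Two further remarks. First, $\Omega$-consistency has a second half that your proposal never checks: the absence of indistinguishable pairs in $S_+\times S_-$. Your claims that the first candidate target succeeds for the $a$-insertions (and for the final back-edge) are incomplete without it. The paper dispatches this once and for all: every suffix of a positive word contains $b^i$ infinitely often, while no suffix of a negative word contains $b^i$ at all, so exit strings of positive and negative words can never coincide. Second, your explicit treatment of supersets $S\supseteq S^i$, and your observation that $\conpar$ itself outputs the two-priority condition (because the union of all negative infinity sets is exactly the transition set minus the back-edge, so the Zielonka path has length two), is actually more careful than the paper, whose proof runs the argument on $S^i$ alone even though the definition of a characteristic sample quantifies over all consistent extensions.
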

\begin{proof}
    In the following we show that $\sprout$ constructs a DPA for the language $L_i$ from the characteristic sample $S^i$. Note first that the exit-strings of any two sample words are distinct for every transition system constructed by $\sprout$, since all words in $S^i$ consist of only a periodic part. Further in every word $v^\omega \in S_+^i$ the infix $b^i$ occurs, which means that an infinite run on any positive sample word is only possible in a transition system that permits $i$ consecutive transitions on the symbol $b$.

    Initially, the algorithm inserts a self-loop on $a$ as no sample words prevent this. Subsequently the $b$-transition cannot be a self-loop as otherwise the infinity sets induced by positive and negative sample words would coincide. Thus a new state is added to which the $b$-transition from $\varepsilon$ leads. We now proceed inductively to show that a $b$-chain of length $i-1$ with $a$-transitions leading back to the initial state is created. We will identify each state on this chain with the minimal word of the form $b^j$ that reaches it.
    
    Formally such a chain satisfies that for all $j < i$ we have $\delta^*(\varepsilon, b^j) = b^j$ and $\delta^*(\varepsilon, b^ka) = \varepsilon$ for all $k < j$. The base case for $j = 1$ has already been described above so assume now that the statement holds for $j-1$ and consider the two transitions that $\sprout$ inserts for the state $b^{j-1}$. We see that inserting an $a$-transition from $b^{j-1}$ to $\varepsilon$ does not introduce an inconsistency. This is because as outlined above no positive sample word induces an infinite run and the exit string of any two sample words must be distinct.

    It remains to be shown that the $b$-transition from $b^{j-1}$ must lead to a new state $b^j$. To see this assume to the contrary that the introduction of a $b$-transition from $b^{j-1}$ to some $b^l$ with $l < j$ leads to a transition system $\mc{T}'$ which is Parity-consistent with $S^i$. It is not hard to see that the infinity set $P$ induced by the positive sample word $(b^iab^{i-1}\dotsc b^1a)^\omega$ contains all transitions in $\mc{T}'$. Now let $N_0, N_1, \dotsc, N_j$ be the infinity sets induced by the negative sample words $a^\omega, (ba)^\omega, \dotsc, (b^ja)^\omega$. It is easily verified that $P = N_0 \cup N_1 \cup \dotsc \cup N_j$, thus satisfying the conditions for \autoref{parityconditionexistence}. This means that $\mc{T}'$ cannot be Parity-consistent with $S^i$ and hence no $b$-transition from $b^{j-1}$ to any $b^l$ with $l < j$ is kept.

    Once this $b$-chain of length $i-1$ is constructed, we simply need to verify that inserting both the $a$- and $b$-transition from $b^{i-1}$ to $\varepsilon$ does not lead to an inconsistent transition system. Since only positive sample words contain $i$ consecutive occurrences of $b$, the $b$-transition from $b^{i-1}$ to $\varepsilon$ occurs exclusively in the infinity set induced by positive but not negative words. Thus a consistent parity condition exists and $\sprout$ constructs a DPA recognizing $L_i$.
\end{proof}

\newcommand{\alirc}{\mathit{AL}_{\mathit{IRC}}}
\newcommand{\tirc}{T_{\mathit{IRC}}}
\newcommand{\al}{\mathit{AL}}
\newcommand{\alk}{\mathit{AL}_\mathcal{K}}
\newcommand{\alphstar}{\Sigma_\star}

\section{Active Learning}\label{sec:activelearning}
We consider the standard minimal adequate teacher (MAT) active learning scenario \cite{activelearningangluin}, in which the learning algorithm has access to a teacher that can answer membership queries and equivalence queries for the target language, and returns a counterexample if the automaton for an equivalence query is not correct. A natural extension to $\omega$-automata considers membership queries for ultimately periodic words and equivalence queries with ultimately periodic words as counterexamples (see \cite{MalerP95}).

Since there is a polynomial time active learning algorithm for deterministic weak automata \cite{MalerP95}, a natural next candidate for polynomial time active learning are deterministic automata with an informative right congruence. 
However, the theorem below basically shows that this class is as hard for active learning as general regular $\omega$-languages.

\begin{restatable}{theorem}{rstactive} \label{thm:active}
Let $\Omega \in \Acctype$ be an acceptance type, and consider the active learning setting with membership and equivalence queries for ultimately periodic words.
There is a polynomial time active learning algorithm for deterministic automata of type $\Omega$ with informative right congruence if, and only if, there is a polynomial time active learning algorithm for general deterministic automata of type $\Omega$.
\end{restatable}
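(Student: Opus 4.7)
The $(\Leftarrow)$ direction is immediate: since $\Omega$-automata with an informative right congruence form a subclass of general deterministic $\Omega$-automata, any polynomial-time active learner for the general class restricts to a polynomial-time learner on the IRC subclass.

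For the nontrivial $(\Rightarrow)$ direction my plan is to give a polynomial-time reduction. Let $\alirc$ denote the hypothesised polynomial-time IRC active learner, and let $L \subseteq \Sigma^\omega$ be the unknown target language with minimal deterministic $\Omega$-automaton of size $n$. The idea is to construct an auxiliary language $L'$ over an extended alphabet $\Sigma' \supseteq \Sigma$, obtained by adjoining one or more fresh separator symbols that force an explicit encoding of run information into every word. The three design goals for $L'$ are: (a) $L'$ has an informative right congruence whose classes correspond to the states of the minimal $\Omega$-automaton for $L$ (not only to $\sim_L$-classes), and hence $L'$ admits a deterministic $\Omega$-IRC automaton of size polynomial in $n$; (b) a membership query for $L'$ on any ultimately periodic word $uv^\omega$ can be answered with polynomially many membership queries for $L$ via the available $L$-oracle; (c) any deterministic $\Omega$-IRC automaton for $L'$ can be projected in polynomial time to a deterministic $\Omega$-automaton for $L$.

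The simulating learner for $L$ then runs $\alirc$ with $L'$ in the role of the implicit target. Every membership query issued by $\alirc$ is answered via~(b). Every equivalence query with hypothesis $\mathcal{H}'$ for $L'$ is processed by first projecting $\mathcal{H}'$ to a hypothesis $\mathcal{H}$ for $L$ through~(c), posing an equivalence query for $L$ against $\mathcal{H}$, and then lifting any counterexample $w$ for $\mathcal{H}$ back to a counterexample $w'$ for $\mathcal{H}'$ by reinserting the encoded structure. Once $\alirc$ terminates with a correct $\Omega$-IRC automaton for $L'$, the simulator outputs its projection to an automaton for $L$. Since $\alirc$ is polynomial by assumption and each simulation step is polynomial in the length of $\alirc$'s queries, the overall learning algorithm runs in polynomial time.

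The main obstacle will be constructing $L'$ so as to meet (a), (b) and (c) simultaneously, and this is where the acceptance type $\Omega$ really matters: the encoding must be invariant under projection through the specific acceptance condition, so the minimality-preserving projection step in~(c) has to be designed compatibly with $\Omega$. A related issue is the counterexample translation in the EQ simulation, where I need the length of the lifted counterexample $w'$ to stay polynomial in the length of $w$, so that the polynomial-time guarantee of $\alirc$ transfers cleanly to the simulating learner for $L$. Once these two points are handled, the claimed equivalence follows.
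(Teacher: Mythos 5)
Your $(\Leftarrow)$ direction is fine. The $(\Rightarrow)$ direction, however, has a genuine gap, and it sits exactly at the point you defer as ``the main obstacle'': goals (a) and (b) cannot be met simultaneously by any fixed auxiliary language $L'$. Goal (a) asks that the right-congruence classes of $L'$ correspond to the states of a minimal deterministic $\Omega$-automaton for $L$. But that automaton is precisely what the learner does not know; worse, it is not even canonically determined by $L$, since deterministic $\omega$-automata in general need several language-equivalent states and minimal such automata are not unique. Goal (b) asks that membership in $L'$ be decidable from polynomially many membership queries to $L$, i.e., from information about $L$ alone --- information that by definition cannot separate language-equivalent states of the target automaton. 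So any $L'$ satisfying (b) cannot encode the state structure required for (a). The same problem reappears in your equivalence-query step: ``reinserting the encoded structure'' into a counterexample presupposes knowledge of the encoding, hence of the automaton being learned.

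The paper's proof avoids committing to any fixed $L'$. Its simulated teacher answers every membership query containing a new symbol ($\star$, $0$, $1$) negatively, answers queries over $\Sigma$ using the real teacher, and handles an equivalence query for a hypothesis $\mathcal{A}$ by deleting all new-symbol transitions and passing the resulting automaton $\mathcal{B}$ to the real teacher; no counterexample lifting is needed, because $\mathcal{A}$ and $\mathcal{B}$ agree on $\Sigma^\omega$ and every admissible target agrees with $L$ there. Correctness is then an adversary-style consistency argument (\autoref{lem:exponentialquerycount}): a polynomial-time learner asks only polynomially many queries, so some words $v_1, \dots, v_n \in \{0,1\}^{m-1}$ are never queried, and attaching to each state $q_i$ of an automaton for $L$ a fresh accepting loop reached by $\star$ and labeled with $v_i\star$ yields an automaton with informative right congruence, of index $n \cdot m + 1$, that is consistent with all answers given so far. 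Since the assumed IRC learner must succeed in polynomial time whenever all answers are consistent with such a language, it is forced within polynomially many steps to ask an equivalence query whose projection recognizes $L$, at which point the simulation stops. This consistency argument --- replacing your single well-defined reduction target by a family of potential targets maintained adversarially --- is the idea your proposal is missing, and without it the reduction cannot be completed.
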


\begin{proof}[Proof (sketch)]

Assume that $\alirc$ is an active learning algorithm for automata with informative right congruence of type $\Omega$. The arguments used below work for all acceptance types $\Omega \in \Acctype$. For simplicity we use the parity condition in the following.

Our goal is to use $\alirc$ in order to define an active learning algorithm $\al$ for general DPA that runs in polynomial time if $\alirc$ does.
The rough idea is as follows: We have to learn an automaton $\mathcal{A}$ for a target language $L \subseteq \Sigma^\omega$ that does not have an IRC, in general. Such an automaton $\mathcal{A}$ can be turned into an automaton with IRC by adding new letters to the alphabet, and then extending the automaton such that from each state a different word over these new letters is accepted. Restricted to the original alphabet, this extended automaton still accepts the same language as before. Since the new automaton has an IRC, we can use $\alirc$ to learn it. The only problem with this approach is that we do not know the target automaton $\mathcal{A}$, so we cannot simply extend it and let $\alirc$ learn the extension. However, we can simulate a teacher for $\alirc$ that answers queries of $\alirc$ such that these answers are consistent with such an extension of $\mathcal{A}$. We give the answers such that they only reveal information on the original target language $L$. Hence, $\alirc$ first has to learn, in some sense, an automaton for $L$ in order to obtain information on the newly added letters in the extension.

More formally, define an extended alphabet $\alphstar = \Sigma \cupdot \{\star, 0, 1\}$ with new letters $\star, 0, 1$ that do not occur in $\Sigma$.
Now let $L \subseteq \Sigma^\omega$ be a target language which we want to learn.
Our algorithm $\al$ simulates $\alirc$ over the alphabet $\alphstar$. Note that $\al$ has access to a teacher $T$ that answers queries for the language $L$ over the alphabet $\Sigma$.
We define a teacher $\tirc$ that answers queries that are asked by $\alirc$ during its simulation as follows:

\begin{itemize}
    \item \emph{Membership query for a word $w = uv^\omega$:} If none of the newly introduced symbols occur in $w$, i.e. $w \in \Sigma^\omega$ then we simply copy the answer $T(w)$. Otherwise $w$ must contain $0, 1$ or $\star$ in which case $\tirc$ always gives a negative answer.
    \item \emph{Equivalence query for an automaton $\mc{A}$:} We construct a new automaton $\mc{B}$  by removing from $\mc{A}$ all transitions on symbols $0$, $1$ or $\star$ and pruning any unreachable states. $\mc{B}$ is then given to $T$ for an equivalence query. If $T(\mc{B})$ returns a counterexample $w$, then this is used as the result of $\tirc(\mc{A})$.

Otherwise the automaton $\mc{B}$ must recognize the target language $L$. In this case, the simulation of $\alirc$ is stopped, and our algorithm $\al$ returns $\mc{B}$.
\end{itemize}

It can be shown that this algorithm $\al$ learns the target language $L$ in polynomial time if $\alirc$ is a polynomial time algorithm.
\end{proof}

\medskip

So the property of an IRC does not help for active learning, while for passive learning in the limit it seems to make the problem simpler. We finish this section with the observation that polynomial time active learning is at least as hard as learning in the limit with polynomial time and data, given that the class $\mathcal{K}$ of target automata satisfies the following properties (which are satisfied by standard classes of deterministic automata):
\begin{itemize}
\item
  (P1) It is decidable in polynomial time if a given word is accepted by a given automaton from  $\mathcal{K}$.
\item
  (P2) For a given sample $S$, one can construct in polynomial time an automaton from $\mathcal{K}$ that is consistent with $S$.
\item
  (P3) If two automata from  $\mathcal{K}$ are not equivalent, then there exists a word of polynomial size witnessing the difference.
\end{itemize}

\begin{restatable}{proposition}{rstactivetopassive} \label{prop:activetopassive}
Consider a class $\mathcal{K}$ of finite automata for which properties (P1)--(P3) are satisfied. If there is a polynomial time active learning algorithm for $\mathcal{K}$, then $\mathcal{K}$ can be learned in the limit with polynomial time and data.
\end{restatable}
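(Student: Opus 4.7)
The plan is to use the polynomial-time active learner $\al$ as a black box to define both a polynomial-size characteristic sample $S_L$ for each language $L \in \mc{L}(\mathcal{K})$ and a passive learner $f$ that runs in polynomial time on every input sample. The guiding idea is standard: a terminating run of $\al$ against an honest teacher for $L$ touches only polynomially many words, so I record these words, labelled according to $L$, as $S_L$; the passive learner, given a sample $S \supseteq S_L$, then simulates $\al$ and answers its queries from $S$.

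Concretely, I would fix a canonical teacher $T_L$ that answers membership queries truthfully and, for equivalence queries, returns the length-lexicographically smallest counterexample (using the reduced representation of ultimately periodic words to get a canonical enumeration). Running $\al$ against $T_L$, I let $S_L$ consist of all membership-query words together with their $L$-labels, plus all counterexamples returned by $T_L$ together with their $L$-labels. Since $\al$ runs in polynomial time, there are only polynomially many such words, and property (P3) ensures that the length-lex smallest counterexamples are themselves of polynomial size, so $|S_L|$ is polynomial in the size of the smallest representation of $L$ in $\mathcal{K}$.

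The passive learner $f$ on input $S$ simulates $\al$, answering a membership query on $w$ by the label of $w$ in $S$ (with a fixed default answer when $w \notin S$), and answering an equivalence query on a hypothesis $H$ as follows: using (P1), check whether $H$ is consistent with $S$; if so, declare \emph{equivalent} and return $H$; otherwise return the length-lex smallest word of $S$ on which $H$ disagrees with the label recorded in $S$. The simulation is capped by the polynomial runtime bound of $\al$. If that bound is exceeded, or if the final hypothesis is not consistent with $S$, $f$ falls back to (P2) to output some automaton from $\mathcal{K}$ consistent with $S$, guaranteeing consistency on every input.

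The step I expect to be the main obstacle is showing that whenever $S \supseteq S_L$ and $S$ is consistent with $L$, the simulation of $\al$ inside $f$ reproduces, step for step, the original run of $\al$ against $T_L$, so that $f$ eventually outputs an automaton for $L$. Membership queries asked along the original run lie in $S_L \subseteq S$ and so receive identical answers. The delicate case is an equivalence query on a hypothesis $H_i$ with $L(H_i) \neq L$: the teacher $T_L$ returned the length-lex smallest counterexample $c_i \in S_L \subseteq S$, and by minimality no word strictly smaller than $c_i$ witnesses a disagreement between $H_i$ and $L$; since $S$-labels agree with $L$-labels on $S$, no such smaller word witnesses a disagreement between $H_i$ and $S$ either. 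Hence $c_i$ is also the length-lex smallest $S$-witness returned by $f$, the two runs coincide, and $f$ terminates with a hypothesis recognising $L$. Polynomial runtime of $f$ then follows from $\al$'s polynomial bound together with polynomially many consistency checks (each of polynomial cost by (P1)) per simulated step.
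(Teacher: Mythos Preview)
Your approach is essentially the same as the paper's: record the trace of the active learner against a canonical teacher (length-lex minimal counterexamples) as the characteristic sample, and let the passive learner replay that trace from $S$. Your argument that the smallest $S$-witness for an equivalence query coincides with the canonical counterexample is in fact more explicit than the paper's sketch.

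The one place where you diverge is the treatment of a membership query on a word $w \notin S$: you answer with a fixed default and rely on a runtime cap plus the (P2) fallback, whereas the paper simply stops the simulation at that point and returns the (P2) automaton. The paper's choice is cleaner for the polynomial-time claim: since every answer it hands to the active learner then comes from $S$, all answers are consistent with the language of the (P2)-automaton for $S$, and the active learner's own polynomial bound (in that automaton's size and the longest counterexample, both polynomial in $|S|$) applies directly. Your cap, as written, is ``the polynomial runtime bound of $\al$'', which depends on the unknown target size; to make this precise you would need to replace it by something computable from $|S|$ alone and argue it suffices. This is easy to patch (e.g.\ cap at $p(|S|)$ using the (P2) automaton as surrogate target), but as stated it is a small gap.
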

\begin{proof}[Proof (sketch)]
Assume that there is a polynomial time active learning algorithm $\alk$ for target automata from $\mathcal{K}$. A passive learner can simulate an execution of $\alk$ in which equivalence queries are always answered with the smallest counterexample. A characteristic sample can be constructed from all the words that are used in such an execution of $\alk$.
\end{proof}

\section{Conclusion} \label{sec:conclusion}

We have presented polynomial time algorithms for checking the consistency of a (partial) deterministic transition system with a set of positive and negative ultimately periodic words for the acceptance conditions Büchi, generalized Büchi, parity, and Rabin. Since co-Büchi and Streett conditions are dual to Büchi and Rabin conditions, respectively, one also obtains algorithms for these conditions by flipping negative and positive examples.

The consistency algorithms allow us to extend the principle of the RPNI algorithm from finite to infinite words, leading to the polynomial time algorithm $\sprout$ that constructs a deterministic $\omega$-automaton from given ultimately periodic examples. We have shown that $\sprout$ can learn deterministic automata for languages with an IRC in the limit with polynomial time and data. While $\sprout$ is not restricted to IRC languages, there are regular $\omega$-langauges which it cannot learn. It is obviously an interesting open question whether there is an algorithm that learns deterministic automata for general regular $\omega$-languages with polynomial time and data.
Our results in \cref{sec:activelearning} show that finding such an algorithm is not more difficult than finding an active learning algorithm that learns deterministic automata for IRC languages from membership and equivalence queries.

\bibliography{references}

\newpage
\appendix
\section{Consistency Algorithms}
\subsection{(generalized) Büchi conditions: Full proof of \autoref{buchiconsistencyiscorrect}}
\label{app:consistency}
We now provide the formal correctness proofs that were excluded from \autoref{sec:consistencyalgos} due to the constrained space. The proof of \autoref{buchiconsistencyiscorrect} is split into two parts, each dealing with one of the two acceptance types, starting with Büchi conditions.

\begin{lemma}\label{conbuchicorrectness}
    Let $F = \conbuchi(\mc{H}_0, \mc{H}_1)$ for a consistent partial condition $\mc{H} = (\mc{H}_0, \mc{H}_1)$. If $F \neq \bot$ then $F$ is consistent with $\mc{H}$, otherwise no such Büchi condition exists.
\end{lemma}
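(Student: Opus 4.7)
The plan is to verify the two implications corresponding to the two branches of $\conbuchi$ separately, each of which reduces to a short set-theoretic argument based on the defining property of a Büchi condition, namely that a set $X$ satisfies $F$ iff $X \cap F \neq \emptyset$.

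First I would handle the case where $\conbuchi$ returns $F = (Q\times\Sigma)\setminus\bigcup\mc{H}_1$ and show that this $F$ is consistent with $\mc{H}$. For every $N \in \mc{H}_1$, the inclusion $N \subseteq \bigcup\mc{H}_1$ gives immediately $N \cap F = \emptyset$, so $N$ is correctly rejected. For every $P \in \mc{H}_0$, the fact that the algorithm did not take the $\bot$-branch means $P \not\subseteq \bigcup\mc{H}_1$, so $P$ contains at least one element of $(Q\times\Sigma)\setminus\bigcup\mc{H}_1 = F$, witnessing $P \cap F \neq \emptyset$ as required.

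Second I would argue the contrapositive for the $\bot$-branch: if some $P \in \mc{H}_0$ satisfies $P \subseteq \bigcup\mc{H}_1$, then no Büchi condition $F'$ can be consistent with $\mc{H}$. Indeed, consistency would force $P \cap F' \neq \emptyset$; picking any element $x$ in this intersection and using $P \subseteq \bigcup\mc{H}_1$ yields $x \in N$ for some $N \in \mc{H}_1$, so $N \cap F' \neq \emptyset$. But this contradicts the requirement that the negative set $N$ be rejected by $F'$.

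Both directions are essentially one-line set-inclusion arguments; there is no real obstacle, and the polynomial-time bound is immediate since only unions and a containment check over the sets in $\mc{H}$ are performed. The only minor care needed is to keep the asymmetry of the branches straight, since the $\bot$-condition is phrased as an existential statement about $\mc{H}_0$ while the returned set is defined purely from $\mc{H}_1$.
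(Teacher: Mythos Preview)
Your proof is correct and follows essentially the same approach as the paper: both directions are the same short set-inclusion arguments, with the only cosmetic difference being that you treat the $F \neq \bot$ case first whereas the paper starts with the $\bot$-branch via contrapositive.
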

\begin{proof}
    Assume to the contrary that there exists some $F\subseteq Q \times \Sigma$ that is consistent with $\mc{H}$ but $\conbuchi(\mc{H}_0, \mc{H}_1) = \bot$ and hence there exists some $P \in \mc{H}_0$ such that $P \subseteq N_1 \cup \dotsc \cup N_k$ with $N_i \in \mc{H}_1$ for $i \leq k$. Since $F$ is consistent with $\mc{H}$ it must be that $P \cap F \neq \emptyset$. But then there exists an index $i \leq k$ such that $N_i \cap F \neq \emptyset$, which is a contradiction since $N_i \in \mc{H}_1$.

    Let $F = \conbuchi(\mc{H}_0, \mc{H}_1)$, which means $F = (Q \times \Sigma) \setminus \bigcup \mc{H}_1$. For all $P \in \mc{H}_0$ we have $P \setminus \bigcup \mc{H}_1 \neq \emptyset$ and thus clearly $P \cap F \neq \emptyset$ as well. On the other hand for an $N \in \mc{H}_1$ it holds that $N \subseteq \bigcup \mc{H}_1$ and hence $N \cap F = \emptyset$.
\end{proof}

The proof for generalized Büchi conditions follows a similar structure.

\begin{lemma}\label{congenbuchicorrectness}
    If $\mc{H} = (\mc{H}_0, \mc{H}_1)$ is a consistent partial condition then $\congenbuchi(\mc{H}_0, \mc{H}_1)$ is defined if and only if $\congenbuchi(\mc{H}_0, \mc{H}_1)$ is consistent with $\mc{H}$.
\end{lemma}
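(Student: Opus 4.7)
The plan is to prove both directions of the equivalence separately, in the form: if $\congenbuchi(\mc{H}_0, \mc{H}_1)$ returns a generalized Büchi condition $\mc{B} \neq \bot$, then $\mc{B}$ is consistent with $\mc{H}$; and if it returns $\bot$, then no generalized Büchi condition consistent with $\mc{H}$ exists. This exactly matches the structure used in the analogous \autoref{conbuchicorrectness} for plain Büchi conditions.

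For the first direction, I would assume the algorithm outputs $\mc{B} = \{F_1, \dots, F_k\}$ with $F_i = (Q \times \Sigma) \setminus N_i$, where $N_1, \dots, N_k$ are the $\subseteq$-maximal elements of $\mc{H}_1$. I then verify the two defining conditions of consistency. First, for every $P \in \mc{H}_0$ I need $P \cap F_i \neq \emptyset$ for all $i$. Since the algorithm did not return $\bot$, we have $P \not\subseteq N_i$, hence $P \cap F_i = P \setminus N_i \neq \emptyset$, so $P$ satisfies $\mc{B}$. Second, for every $N \in \mc{H}_1$ I need some $F_i$ with $N \cap F_i = \emptyset$. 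By maximality of the $N_i$ there is some $i$ with $N \subseteq N_i$, and therefore $N \cap F_i = N \setminus N_i = \emptyset$, so $N$ fails the component $F_i$ and does not satisfy $\mc{B}$.

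For the second direction, assume $\congenbuchi(\mc{H}_0, \mc{H}_1) = \bot$. Then by the definition of the algorithm there exist $P \in \mc{H}_0$ and an index $i$ with $P \subseteq N_i$ and $N_i \in \mc{H}_1$. Suppose for contradiction that some generalized Büchi condition $\mc{B}' = \{F_1', \dots, F_m'\}$ is consistent with $\mc{H}$. Since $P \in \mc{H}_0$, we have $P \cap F_j' \neq \emptyset$ for every $j \leq m$. But $P \subseteq N_i$ then yields $N_i \cap F_j' \neq \emptyset$ for every $j$, so $N_i$ also satisfies $\mc{B}'$, contradicting $N_i \in \mc{H}_1$ and the assumed consistency of $\mc{B}'$.

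The argument is largely routine set-theoretic bookkeeping, and I do not expect a serious obstacle. The one conceptual point that deserves emphasis is why it is essential to restrict to the $\subseteq$-maximal negative sets when defining the components $F_i$: the union of two negative sets need not be negative (this is precisely what distinguishes the generalized Büchi case from the plain Büchi case), so one cannot replace $\bigcup \mc{H}_1$ by the union of all $N_i$ as in $\conbuchi$. Using the maximal sets individually as the complements $F_i$ is exactly what is needed to rule out every $N \in \mc{H}_1$ via the single component corresponding to a maximal set containing it.
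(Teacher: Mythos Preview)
Your proof is correct and follows essentially the same approach as the paper's own proof: both directions are argued exactly as you describe, using that $P \not\subseteq N_i$ gives $P \cap F_i \neq \emptyset$, that every $N \in \mc{H}_1$ lies in some maximal $N_i$, and that $P \subseteq N_i$ forces any consistent condition to wrongly accept $N_i$. The only cosmetic difference is that the paper treats the $\bot$ case first and the non-$\bot$ case second.
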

\begin{proof}
    We assume that $\congenbuchi(\mc{H}_0, \mc{H}_1) = \bot$ but the generalized Büchi condition $\mc{B} = \{F_1, \dotsc, F_k\}$ is consistent with $\mc{H}$. Since $\congenbuchi$ terminated prematurely there exists a set $P \in \mc{H}_0$ with $P \subseteq N$ for a $\subseteq$-maximal $N \in \mc{H}_1$. Since $\mc{B}$ is consistent with $\mc{H}$ we must have that $P \cap F_i \neq \emptyset$ for all $i \leq k$ and $P \subseteq N$, this entails $N \cap F_i \neq \emptyset$ for all $i \leq k$. But then $N$ satisfies $\mc{B}$ and $\mc{B}$ cannot be consistent with $\mc{H}$, which is a contradiction.

    Let $\mc{B} = \conbuchi(\mc{H}_0, \mc{H}_1)$ and consider some $P \in \mc{H}_0$. By definition we have $P \not\subseteq N_i$ for all $\subseteq$-maximal negative $N_i \in \mc{H}_1$ and hence $P \cap ((Q \times \Sigma) \setminus N_i) \neq \emptyset$, meaning $P$ satisfies $\mc{B}$. For any $N \in \mc{H}_1$ there exists some $\subseteq$-maximal $N_i \supseteq N$ and consequently $N \cap ((Q \times \Sigma) \setminus N_i) = \emptyset$. Overall we can thus conclude that $\mc{B}$ is consistent with $\mc{H}$.
\end{proof}

We are now able to prove \autoref{buchiconsistencyiscorrect} which establishes the efficient decidability of the consistency problem for Büchi and generalized Büchi conditions.

\rstbuchiconsistencyiscorrect*

The correctness of both algorithms directly follows from \autoref{conbuchicorrectness} and \autoref{congenbuchicorrectness} and since both algorithms make use of only elementary operations on sets, it is easily verified that they run in polynomial time.

\subsection{Parity consistency: Full proof of \autoref{conparcorrectness}}

We begin by giving a proof for \autoref{parityconditionexistence}.

\rstparityconditionexistence*
\begin{proof}
    Assume to the contrary that there exists some parity condition $\kappa$ which is consistent with $\mc{H}$, then for $i \leq k$ we have that $\min(\kappa(P_i))$ is even as each $P_i$ must satisfy $\kappa$. This clearly entails that $\min(\kappa(P))$ is even as well. However by an analogous argument it must be that $\min(\kappa(N))$ is odd as each $N_j$ for $j \leq l$ is negative. This is a contradiction as $N = P$ hence $\kappa$ cannot be consistent with $\mc{H}$.
\end{proof}

We now justify our claim that the converse of \autoref{parityconditionexistence} also holds. To that end we first show that the Zielonka path $\conpar$ constructs correctly classifies all sample loops.


\begin{lemma}\label{parityconsistencycorrectclassification}
    For all $i$ the Zielonka path $(Z_0, \sigma_0),\dotsc, (Z_i, \sigma_i)$ computed by $\conpar$ correctly classifies all sample loops $S$ with $S \not\subseteq Z_i$.
\end{lemma}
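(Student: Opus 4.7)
I would prove this by induction on $i$, with the induction step split according to whether $S$ has ``dropped out'' of the chain at the new stage or was already dropped out at an earlier stage.

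Before setting up the induction, I would first pin down what classification the partial Zielonka path $(Z_0, \sigma_0), \dotsc, (Z_i, \sigma_i)$ assigns to a sample loop $S$ with $S \not\subseteq Z_i$. From the description preceding \autoref{algo:parityconsistency}, each pair $(q,a)$ receives priority $\sigma_0 + j$ for the largest $j$ with $(q,a) \in Z_j$, and therefore a set $S$ inherits the classification $\sigma_{j^*}$, where $j^* := \max\{\,j : S \subseteq Z_j\,\}$ corresponds to $\min \kappa(S) = \sigma_0 + j^*$. Under the assumption $S \not\subseteq Z_i$ we have $j^* \le i-1$, so the classification of $S$ depends only on the portion of the path computed so far and is stable under extending the path.

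The base case $i = 0$ is vacuous, since $Z_0 = Q \times \Sigma$ contains every sample loop. For the inductive step, assume the claim for $i-1$ and fix a sample loop $S$ with $S \not\subseteq Z_i$. If additionally $S \not\subseteq Z_{i-1}$, then the index $j^*$ for $S$ is unchanged when passing from the shorter path to the extended one, and the inductive hypothesis directly gives $\sigma_{j^*} = \mc{H}(S)$. Otherwise $S \subseteq Z_{i-1}$, so $j^* = i-1$ and the path classifies $S$ as $\sigma_{i-1}$. By the assignment $Z_i \gets \bigcup\{X \subseteq Z_{i-1} : \mc{H}(X) = 1 - \sigma_{i-1}\}$ performed by $\conpar$, if $\mc{H}(S)$ were equal to $1 - \sigma_{i-1}$ then $S$ itself would appear in this union and would therefore satisfy $S \subseteq Z_i$, contradicting the hypothesis. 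Hence $\mc{H}(S) = \sigma_{i-1}$, as required.

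The only real subtlety is the interpretive step at the beginning, namely translating the $\kappa$-based definition of the parity condition into the partial-path classification used in the lemma. Once this identification is made explicit, the recursive definition of $Z_i$ as a union of opposite-polarity subsets of $Z_{i-1}$ makes the new-loop case of the inductive step essentially immediate, and I do not expect any further technical complications.
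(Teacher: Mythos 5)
Your proof is correct and follows essentially the same route as the paper's: induction on $i$ with the same case split ($S \not\subseteq Z_{i-1}$ handled by the induction hypothesis, and $S \subseteq Z_{i-1}$ handled by noting that $Z_i$ is the union of all subsets of $Z_{i-1}$ with classification $1-\sigma_{i-1}$, so $\mc{H}(S) = 1-\sigma_{i-1}$ would force $S \subseteq Z_i$). Your preliminary step making explicit that the path classifies $S$ as $\sigma_{j^*}$ for $j^* = \max\{j : S \subseteq Z_j\}$, and that this is stable under extending the path, is a useful clarification of something the paper leaves implicit, but it does not change the argument.
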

\begin{proof}
    We prove this statement by induction on $i$. Clearly the base case for $i = 0$ holds as $Z_0 = Q \times \Sigma$ and hence we have for all possible loops $S$ that $S \subseteq Z_0$. Now assume the statement holds for all $j < i$ and consider some sample loop $S \not\subseteq Z_i$. If $S \not\subseteq Z_{i-1}$ then the induction hypothesis guarantees that $S$ is correctly classified. Otherwise $S \subseteq Z_{i-1}$ and $S \cap (Z_{i-1} \setminus Z_i) \neq \emptyset$, which in turn means that $S$ cannot be a negative loop as $Z_i$ is the union of all negative subloops of $Z_{i-1}$. Therefore $\mc{H}(S) = \sigma_{i-1}$ and $S$ is indeed assigned a correct classification.
\end{proof}

We can now show that whenever $\conpar$ terminates prematurely, there must exist positive and negative sets in $\mc{H}$, which satisfy the conditions in \autoref{parityconditionexistence}. This in turn means that no parity condition consistent with $\mc{H}$ can exist.

\begin{lemma}\label{parityconsistencytermination}
    If $\conpar$ terminates prematurely, then no parity condition correctly classifying all sample loops exists.
\end{lemma}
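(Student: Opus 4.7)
The plan is to argue that premature termination at some iteration $i$ produces witnesses $P = N$ of the form required by \autoref{parityconditionexistence}, so that no consistent parity condition (equivalently, no parity condition correctly classifying all sample loops) can exist.

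First I would unpack what premature termination means. The \texttt{if}-branch is taken precisely when the newly computed $Z$ equals $Z_{i-1}$, which reads as
\[
    Z_{i-1} \;=\; \bigcup\{X \subseteq Z_{i-1} : \mc{H}(X) = 1-\sigma_{i-1}\}.
\]
Thus $Z_{i-1}$ can be written as a union $N_1\cup\dotsc\cup N_l$ of sets $N_j \in \mc{H}_{1-\sigma_{i-1}}$. The core observation is that $Z_{i-1}$ simultaneously admits a representation as a union of sets of the \emph{opposite} classification $\sigma_{i-1}$. I would verify this by a short case distinction on $i$. If $i = 1$, then $Z_0 = Q \times \Sigma$, and by the input assumption on $\conpar$ this set lies in $\mc{H}$ with classification $\sigma_0$, so $Z_0$ itself is a singleton union of a $\mc{H}_{\sigma_0}$-set. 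If $i > 1$, then $Z_{i-1}$ was produced in the previous iteration by exactly the same line of the algorithm, namely as $\bigcup\{X \subseteq Z_{i-2} : \mc{H}(X) = 1 - \sigma_{i-2}\}$; since $1 - \sigma_{i-2} = \sigma_{i-1}$, this is by construction a union $P_1 \cup \dotsc \cup P_k$ of sets $P_j \in \mc{H}_{\sigma_{i-1}}$.

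Combining the two representations yields $P_1 \cup \dotsc \cup P_k = Z_{i-1} = N_1 \cup \dotsc \cup N_l$ with the $P_j$ in one component of $\mc{H}$ and the $N_j$ in the other. Up to swapping the roles of $0$ and $1$ (depending on the parity of $\sigma_{i-1}$) this is exactly the hypothesis of \autoref{parityconditionexistence}, which therefore forbids the existence of a parity condition consistent with $\mc{H}$. Since a parity condition correctly classifies all sample loops iff it is consistent with $\mc{H}$, this is the desired conclusion.

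The only real subtlety is keeping the bookkeeping of the alternating classifications $\sigma_i$ straight and handling the boundary case $i = 1$ separately so that the "previous iteration" argument still produces a union representation; everything else is just unfolding the definition of the algorithm and quoting \autoref{parityconditionexistence}. I would present the case $i = 1$ first as a sanity check and then state the induction-free argument for $i > 1$ using the line of the algorithm that created $Z_{i-1}$.
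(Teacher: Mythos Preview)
Your proposal is correct and follows essentially the same approach as the paper: identify that premature termination yields $Z_{i-1}$ simultaneously as a union of $\mc{H}_{\sigma_{i-1}}$-sets and of $\mc{H}_{1-\sigma_{i-1}}$-sets, then invoke \autoref{parityconditionexistence}. Your explicit treatment of the boundary case $i=1$ (where $Z_0 = Q\times\Sigma$ is itself a single element of $\mc{H}_{\sigma_0}$) is slightly more careful than the paper's version, and your indexing tracks the algorithm more faithfully, but the underlying argument is the same.
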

\begin{proof}
    Assume that $\mc{H} = (\mc{H}_0, \mc{H}_1)$ is consistent and $\conpar$ terminates without producing a Zielonka path. Such a failure can only occur during some iteration $i > 0$ for which we assume without loss of generality that $\sigma_i = 0$, i.e. $Z_i$ is positive. If not then due to the inherent symmetry of parity conditions we can simply exchange both components of $\mc{H}$. We compute $Z_{i+1}$ as the union of $N_1, \dotsc, N_k$ with each $N_j$ being a maximal negative subloop of $Z_i$. Furthermore $Z_i$ itself is the union of $P_1, \dotsc, P_l$, which are the maximal positive subloops of $Z_{i-1}$. The only condition under which $\conpar$ terminates without prematurely without producing a Zielonka path is if $Z_i = Z_{i+1}$. But then we have $P_1 \cup \dotsc \cup P_l = N_1 \cup \dotsc \cup N_k$. Thus the conditions of \autoref{parityconditionexistence} are satisfied, which means that no parity condition that is consistent with $\mc{H}$ exists.
\end{proof}

\autoref{parityconsistencycorrectclassification} and \autoref{parityconsistencytermination} can be used in conjunction to establish the correctness of $\conpar$. To show optimality (with regard to the number of distinct priorities), we now show that under the assumption of $Q \times \Sigma \in \mc{H}$ no parity condition exists that has strictly fewer distinct priorities than the one computed by $\conpar$.

\begin{lemma}\label{parityconsistencyoptimality}
    Let $\TT = (Q, \Sigma, q_0, \delta)$ be a deterministic (partial) transition system and $\mc{H} = (\mc{H}_0, \mc{H}_1)$ be a consistent partial condition over $Q \times \Sigma$ with $Q \times \Sigma \in \mc{H}$. The parity condition $\kappa : Q \to \{0, 1, \dotsc, n-1\}$ determined by the Zielonka path $(Z_0, \sigma_0),\dotsc,(Z_{n-1},\sigma_{n-1})$ computed in $\conpar$ is optimal, i.e. for all $\kappa' : Q \to C'$ for some $C' \subseteq \mathbb{N}$ which are consistent with $\mc{H}$ we have $|C'| \geq n$.
\end{lemma}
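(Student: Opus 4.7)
The plan is to compare the chain $Z_0 \supsetneq \dotsc \supsetneq Z_{n-1}$ produced by $\conpar$ with the analogous chain induced by an arbitrary competing consistent parity condition $\kappa'$. Given such a $\kappa'$ with priority set $C'$, we may normalize it (without increasing $|C'|$) so that $C' = \{c, c+1, \dotsc, c+m-1\}$ is gap-free, where $c = \min(C')$ and $m = |C'|$. Since $Q \times \Sigma \in \mc{H}$ has classification $\sigma_0$ and $\min \kappa'(Q \times \Sigma) = c$, consistency forces $c \equiv \sigma_0 \pmod{2}$. Define $Z'_i := \{(q,a) : \kappa'(q,a) \geq c+i\}$, yielding a strictly decreasing chain $Q \times \Sigma = Z'_0 \supsetneq Z'_1 \supsetneq \dotsc \supsetneq Z'_{m-1} \supsetneq Z'_m = \emptyset$.

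The heart of the argument is an induction on $i$ proving $Z_i \subseteq Z'_i$ for all $i \in \{0, \dotsc, n-1\}$. The base case $Z_0 = Q \times \Sigma = Z'_0$ is immediate. For the step, consider any $X \in \mc{H}$ with $X \subseteq Z_i$ and $\mc{H}(X) = \sigma_{i+1}$; these are exactly the sets whose union constitutes $Z_{i+1}$. By the induction hypothesis $X \subseteq Z'_i$, so every transition in $X$ has $\kappa'$-color at least $c+i$. Consistency yields $\min \kappa'(X) \equiv \sigma_{i+1} \pmod{2}$, while $c+i$ has parity $\sigma_0 + i \equiv \sigma_i \not\equiv \sigma_{i+1} \pmod{2}$. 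The minimum in $X$ therefore cannot equal $c+i$ and must be at least $c+i+1$, placing $X$ inside $Z'_{i+1}$. Taking the union over all such $X$ yields $Z_{i+1} \subseteq Z'_{i+1}$.

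To conclude, note that $Z_{n-1}$ is non-empty: the repeat-until loop would have exited one iteration earlier without emitting $(Z_{n-1}, \sigma_{n-1})$ had $Z_{n-1}$ already been empty. The inclusion $Z_{n-1} \subseteq Z'_{n-1}$ then forces $Z'_{n-1} \neq \emptyset$, meaning $\kappa'$ assigns some transition a color at least $c+n-1$. Combined with the gap-freeness of the normalized $\kappa'$, this implies that every color in $\{c, c+1, \dotsc, c+n-1\}$ is actually used, so $|C'| \geq n$.

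I anticipate the main obstacle to be the bookkeeping of parities inside the inductive step: one must align the parity of $c+i$ (obtained from the normalized minimum color whose parity matches $\sigma_0$) with the classification $\sigma_i$ that $\conpar$ assigns, and then exploit the mismatch with $\sigma_{i+1}$ to upgrade the lower bound on $\min \kappa'(X)$ from $\geq c+i$ to $\geq c+i+1$. A secondary detail is justifying the normalization of $\kappa'$; this is routine because gap-removal yields a Muller-equivalent parity condition, and consistency with a partial condition depends only on the induced classification of the sets in $\mc{H}$.
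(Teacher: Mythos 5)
Your proof is correct, and it takes a genuinely different (dual) route from the paper's. The paper argues pointwise and top-down: it picks a transition of minimal $\kappa'$-priority in $Z_{n-1}$, uses the consistency/parity mismatch to find, inside the oppositely-classified set of the previous level containing that transition, one of strictly smaller priority, and iterates this to produce $n$ transitions with strictly decreasing priorities; this exhibits $n$ distinct colors directly, with no modification of $\kappa'$ and no normalization. You instead work bottom-up with sets: you normalize $\kappa'$ to be gap-free, form its threshold chain $Z'_i = \{(q,a) : \kappa'(q,a) \geq c+i\}$, and prove the inclusion $Z_i \subseteq Z'_i$ by induction, so that $Z_{n-1} \neq \emptyset$ (which you correctly extract from the loop's exit condition) forces $|C'| \geq n$. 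Both arguments turn on exactly the same engine --- $\min \kappa'(X) \equiv \sigma_{i+1} \not\equiv \sigma_i \equiv c+i \pmod 2$ upgrades a weak inequality to a strict one --- but your formulation buys a slightly stronger structural byproduct, namely that the Zielonka path computed by $\conpar$ is pointwise contained in the threshold chain of \emph{every} consistent parity condition, at the price of the normalization step, which the paper's witness-chasing argument avoids entirely (the paper does invoke the same gap-removal fact elsewhere, so this is a mild cost). The two details you flagged are handled soundly: the parity bookkeeping is right, and gap removal is indeed harmless because a monotone, parity-preserving relabeling of colors preserves the minimal-priority parity of every subset, hence consistency with $\mc{H}$.
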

\begin{proof}
    Assume $\kappa' : Q \to C'$ is a parity function which is consistent with $\mc{H}$, meaning for all $S \subseteq Q$ the minimal priority associated with a state in $S$ is even if and only if $S \in \mc{H}_0$. For all $i < n$ we have $Z_i = X_i^1\cup\dotsc\cup X_i^{k_i}$ where all $X_i^j$ have the same classification as $Z_i$. Let $p^*_{n-1} \in Z_{n-1}$ be a state with $\kappa'(p^*_{n-1}) = \min(\kappa'(Z_{n-1}))$ and let $j$ be an index such that $p^*_{n-1} \in X_{n-2}^j$ which must exist since $Z_{n-2} \supseteq Z_{n-1}$. Because furthermore $X_{n-2}^j \in \mc{H}_{1-\sigma_{n-1}}$, we know that $X_{n-2}^j$ must contain some state $p^*_{n-2}$ with $\kappa'(p^*_{n-2}) < \kappa'(p^*_{n-1})$. Successive application of this argument yields a sequence $p^*_{n-1},p^*_{n-2},\dotsc,p^*_0$ where $\kappa'(p^*_{i+1}) > \kappa'(p^*_i)$ for $i < n-1$. Hence we know that $C'$ has to contain at least $n$ distinct priorities.
\end{proof}

Up to this point we always assumed that $Q \times \Sigma \in \mc{H}$, which clearly does not hold for all partial conditions. In the following we establish that there exists a suitable way of dealing with partial conditions that do not satisfy this assumption. If $\mc{H}$ does not classify the set of all transitions, we define $\mc{H}^p$ and $\mc{H}^n$ which are obtained by adding $Q \times \Sigma$ to $\mc{H}_0$ and $\mc{H}_1$ respectively. By executing $\conpar$ for both of these newly constructed partial conditions, we are guaranteed to obtain an optimal parity condition consistent with $\mc{H}$ if one exists.

\begin{lemma}\label{tryingbothissensible}
    Let $\mc{H} = (\mc{H}_0, \mc{H}_1)$ be a consistent partial condition with $Q \times \Sigma \notin \mc{H}_0 \cup \mc{H}_1$. If $\conpar$ does not yield a Zielonka path for either $\mc{H}^p$ or $\mc{H}^n$ then $\mc{H}$ is not consistent with any parity condition. If two distinct parity conditions $\kappa^p$ and $\kappa^n$ are arise from the computations then $||\kappa^p| - |\kappa^n|| \leq 1$ with the smaller one of them being optimal.
\end{lemma}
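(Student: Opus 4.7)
I will handle the three assertions separately, leaning on \autoref{parityconsistencytermination} and \autoref{parityconsistencyoptimality}. For the nonexistence claim, I argue contrapositively: if some parity condition $\kappa$ is consistent with $\mc{H}$, the parity of $\min\kappa(Q\times\Sigma)$ determines whether $\kappa$ is consistent with $\mc{H}^p$ (even minimum, since then $Q\times\Sigma$ is classified as positive by $\kappa$) or with $\mc{H}^n$ (odd minimum), and in either case the contrapositive of \autoref{parityconsistencytermination} forces the corresponding $\conpar$ call to produce a Zielonka path.

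The same dichotomy yields the optimality statement. Any optimal parity condition $\kappa^*$ consistent with $\mc{H}$ is consistent with at least one of $\mc{H}^p, \mc{H}^n$, so \autoref{parityconsistencyoptimality} gives $|\kappa^*| \geq \min(|\kappa^p|, |\kappa^n|)$. Conversely, the shorter of $\kappa^p, \kappa^n$ remains consistent with $\mc{H}$ after forgetting the added classification of $Q\times\Sigma$, so this lower bound is attained by the shorter path, and therefore the shorter one is optimal for $\mc{H}$.

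The main obstacle is the length bound $||\kappa^p|-|\kappa^n|| \leq 1$; by the symmetric roles of $\mc{H}_0$ and $\mc{H}_1$ in $\conpar$ it suffices to show $|\kappa^n| \leq |\kappa^p|+1$. The first observation is that whenever both $\conpar$ computations succeed we must have $\bigcup\mc{H}_0 \subsetneq Q\times\Sigma$ and $\bigcup\mc{H}_1 \subsetneq Q\times\Sigma$, since otherwise the first iteration of one of the two calls would collapse with $Z_1 = Z_0$ and abort. Starting from $\kappa^p$, I plan to construct a parity condition $\kappa'$ consistent with $\mc{H}^n$ of size at most $|\kappa^p|+1$ by shifting the priorities of $\kappa^p$ by $2$ on all transitions in $\bigcup\mc{H}_0$ and assigning the fresh priority $1$ to every remaining transition. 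A short case analysis then confirms consistency with $\mc{H}^n$: every $P \in \mc{H}_0$ lies entirely in $\bigcup\mc{H}_0$, so its minimum priority shifts by $2$ and stays even; every $N \in \mc{H}_1$ either lies inside $\bigcup\mc{H}_0$ (minimum shifts by $2$ and stays odd) or contains a transition outside it (minimum becomes $1$); finally $Q\times\Sigma$ contains some transition outside $\bigcup\mc{H}_0$ by the strict inclusion above, so $\min\kappa'(Q\times\Sigma)=1$ and $Q\times\Sigma$ is classified as negative. Invoking \autoref{parityconsistencyoptimality} on $\mc{H}^n$ then yields $|\kappa^n| \leq |\kappa^p|+1$.

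The delicate point worth stressing is that a naïve uniform shift of $\kappa^p$ by $1$ would flip every classification in $\mc{H}$; the asymmetric shift by $2$ on exactly the transitions in $\bigcup\mc{H}_0$, combined with a fresh odd priority outside, exploits that \emph{only} $Q\times\Sigma$ needs its classification flipped while every set in $\mc{H}$ must keep its parity.
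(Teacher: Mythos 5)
Your proposal is correct. The first two assertions (nonexistence and optimality of the shorter condition) are argued essentially as in the paper: any parity condition consistent with $\mc{H}$ must classify $Q\times\Sigma$ one way or the other, hence is consistent with $\mc{H}^p$ or with $\mc{H}^n$, and then \autoref{parityconsistencytermination} and \autoref{parityconsistencyoptimality} do the work. For the bound $||\kappa^p|-|\kappa^n||\leq 1$, however, you take a genuinely different route. The paper argues via a lower bound: writing the Zielonka path computed for $\mc{H}^p$ as $Z_0 \supseteq Z_1 \supseteq \dotsc \supseteq Z_{k-1}$, it observes that the truncated chain $Z_1 \supseteq \dotsc \supseteq Z_{k-1}$ has alternating classifications that do not depend on how $Q\times\Sigma$ is classified, so any parity condition consistent with $\mc{H}$ --- in particular $\kappa^n$ --- needs at least $k-1$ distinct priorities; this implicitly re-runs the chain argument from the proof of \autoref{parityconsistencyoptimality} and uses the unstated fact that unions of equally classified sets of $\mc{H}$ receive a forced classification under every consistent parity condition. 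You instead prove the equivalent upper bound: from $\kappa^p$ you explicitly build a condition consistent with $\mc{H}^n$ using at most $|\kappa^p|+1$ priorities (shift by $2$ on $\bigcup\mc{H}_0$, fresh odd priority outside, exploiting the strict inclusion $\bigcup\mc{H}_0 \subsetneq Q\times\Sigma$ that the success of $\conpar$ on $\mc{H}^n$ guarantees), and then invoke \autoref{parityconsistencyoptimality} for $\mc{H}^n$ as a black box. Your case analysis is watertight: positive sets lie entirely inside $\bigcup\mc{H}_0$ so their minima stay even, negative sets either shift with parity preserved or pick up the fresh priority $1$, and $Q\times\Sigma$ picks up priority $1$ as required. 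What each approach buys: the paper's argument is shorter but leans on the forced classification of the union sets $Z_i$, whereas yours is constructive, uses the optimality lemma only as stated, and cleanly isolates the strict inclusion as the exact hypothesis that the success of the second $\conpar$ computation provides.
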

\begin{proof}
    Assume that $\conpar$ terminates early for both $\mc{H}^p$ and $\mc{H}^n$, but $\mc{H}$ is consistent with some parity condition $\kappa$. Clearly $\kappa$ must classify $Q \times \Sigma$ either positively or negatively, which means that either $\mc{H}^p$ or $\mc{H}^n$ must be consistent with $\kappa$ as well. This is a contradiction since $\conpar$ terminated without producing a Zielonka path, which by \autoref{parityconsistencytermination} implies that no such condition can exist. If we assume that $\kappa$ has the least number of distinct priorities any parity condition consistent with $\mc{H}$ can have, then it is clear by \autoref{parityconsistencyoptimality} that $\conpar$ applied to either $\mc{H}^p$ or $\mc{H}^n$ must also yield an optimal parity condition.

    We now want to show $||\kappa^p| - |\kappa^n|| \leq 1$ and assume without loss of generality that $|\kappa^p| > |\kappa^n|$. By definition $\kappa^p$ is equivalent to a Zielonka path $Z_0 \supseteq Z_1 \supseteq \dotsc \supseteq Z_{k-1}$ with alternating classifications $\sigma_i$ in $\kappa^p$. For $i > 0$ we have that the classification of $Z_i$ by $\kappa^p$ and $\kappa^n$ coincide as they do not depend on the attribution of $Q \times \Sigma$. Then clearly $Z_1 \supseteq \dotsc \supseteq Z_{k-1}$ forms a chain of length $k-1$ with alternating classifications in $\kappa^n$. This means that $\kappa^n$ must contain at least $k-1$ distinct priorities and the statement follows.
\end{proof}

We can now prove \autoref{conparcorrectness}.

\rstconparcorrectness*
\begin{proof}
Correctness follows from \autoref{parityconsistencycorrectclassification} and \autoref{parityconsistencytermination}. The size of the $Z_i$ constructed by $\conpar$ is strictly decreasing and as in each iteration of the loop at least one set of the partial condition is processed, there are at most as many iterations as there are sets in $\mc{H}$. Since computing the union of all subsets with opposite classification can be done in polynomial time, the algorithm overall runs in polynomial time. For partial conditions that classify the set of all transitions, \autoref{parityconsistencyoptimality} guarantees optimality. On the other hand if $Q \times \Sigma \notin \mc{H}$, we know by \autoref{tryingbothissensible} that an optimal parity condition can be obtained by executing $\conpar$ for both $\mc{H}^p$ and $\mc{H}^n$ and choosing the result with fewer distinct priorities. As the algorithm runs in polynomial time and comparing the size of two parity conditions is trivial, the statement follows.
\end{proof}

\subsection{Rabin consistency: Full proof of \autoref{conrabcorrectness}}
This subsection follows a similar structure as the preceding one and we begin by establishing that the Rabin condition constructed by $\conrab$ actually produces correct classifications.

\begin{lemma}\label{rabinconsistencycorrectclassification}
    Let $\mc{H} = (\mc{H}_0, \mc{H}_1)$ be a consistent partial condition, then for $\mc{R} = \conrab(\mc{H})$ we have that $\mc{R}(X) = \mc{H}(X)$ for all $X \in \mc{H}_0 \cup \mc{H}_1$.
\end{lemma}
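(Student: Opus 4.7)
The plan is to prove the two directions separately: for every $P \in \mc{H}_0$ exhibit a single Rabin pair in $\mc{R}$ that is satisfied by $P$, and for every $N \in \mc{H}_1$ show that \emph{no} pair in $\mc{R}$ is satisfied by $N$. Throughout, I will use that $\conrab(\mc{H})$ did not return $\bot$, so each iteration on some $P \in \mc{H}_0$ actually contributed a pair $(E_P, F_P)$ to $\mc{R}$ with $E_P = (Q\times\Sigma)\setminus P$, $F_P = P \setminus (N^P_1 \cup \dotsc \cup N^P_{k_P})$, and $F_P \neq \emptyset$, where $N^P_1,\dotsc,N^P_{k_P}$ are the $\subseteq$-maximal elements of $\mc{P}(P) \cap \mc{H}_1$.

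For the positive direction, fix $P \in \mc{H}_0$ and consider the pair $(E_P, F_P)$ that the algorithm produced in its iteration for $P$. From $E_P = (Q\times\Sigma)\setminus P$ it is immediate that $E_P \cap P = \emptyset$. Further, $F_P \subseteq P$ and $F_P \neq \emptyset$ (otherwise $\conrab$ would have aborted), so $F_P \cap P = F_P \neq \emptyset$. Hence the pair $(E_P, F_P)$ alone witnesses $\mc{R}(P)=0$.

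For the negative direction, fix $N \in \mc{H}_1$ and an arbitrary pair $(E_P, F_P) \in \mc{R}$; the goal is to show that this pair is not satisfied by $N$. If $E_P \cap N \neq \emptyset$, we are already done, so assume $E_P \cap N = \emptyset$, which means $N \subseteq P$. Then $N$ is a negative subset of $P$, and by the definition of $N^P_1,\dotsc,N^P_{k_P}$ as the $\subseteq$-maximal elements of $\mc{P}(P)\cap \mc{H}_1$, there is some index $j$ with $N \subseteq N^P_j$, so $N \subseteq N^P_1 \cup \dotsc \cup N^P_{k_P}$. Consequently $N \cap F_P = N \cap (P \setminus \bigcup_i N^P_i) = \emptyset$, so the pair $(E_P, F_P)$ is not satisfied. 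Since $(E_P, F_P)$ was arbitrary, $\mc{R}(N)=1$.

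There is essentially no obstacle here beyond bookkeeping: the key observation is that consistency of $\mc{H}$ guarantees every $N \in \mc{H}_1$ with $N \subseteq P$ is subsumed by a maximal negative subset of $P$ and is therefore excised from $F_P$ by construction, while the complement structure of $E_P$ handles the case $N \not\subseteq P$. The positive direction is immediate from the fact that the algorithm only proceeds when $F_P$ is non-empty.
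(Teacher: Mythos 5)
Your proof is correct and follows essentially the same argument as the paper's: for positive sets, use the pair $(E_P, F_P)$ generated for $P$ together with non-emptiness of $F_P$ (guaranteed since $\conrab$ did not abort); for negative sets, case-split on whether $N$ meets $E_P$, and otherwise use containment of $N$ in a $\subseteq$-maximal negative subset of $P$ to conclude $N \cap F_P = \emptyset$. There is nothing to add.
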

\begin{proof}
    Let $P \in \mc{H}_0$ be a positive loop then there exists a pair $(E_P, F_P) \in \mc{R}$ for which we have $P \cap E_P = P \cap ((Q \times \Sigma) \setminus P) = \emptyset$. For the maximal negative subloops $N_1, \dotsc, N_k$ of $P$ it holds that $F_P = P \setminus (N_1 \cup \dotsc \cup N_k) \neq \emptyset$ as the algorithm did not terminate prematurely. This in turn guarantees that $F_P \cap P \neq \emptyset$ and thus $P$ satisfies $\mc{R}$.
    
    For a negative loop $N \in \mc{H}_1$ and any pair $(E_P, F_P) \in \mc{R}$ we want to show that either $N \cap E_P \neq \emptyset$ or $N \cap F_P = \emptyset$. For pairs where $N$ intersects $E_P$ we are immediately done, so assume that $N \cap E_P = \emptyset$, which implies $N \subseteq P$. We have $N \subseteq N'$ for a maximal negative subloop $N'$ of $P$ and thus since $F_P \subseteq P \setminus N'$ we have $F_P \cap N = \emptyset$.
\end{proof}

It remains to be shown that premature termination of the algorithm entails that it is impossible to find a Rabin condition which is consistent with the given partial condition.

\begin{lemma}\label{rabinconsistencytermination}
    If $\conrab$ terminates prematurely then there exists no Rabin condition that is consistent with $\mc{H}$.
\end{lemma}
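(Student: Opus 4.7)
The plan is to exploit the fact that the collection of sets rejected by a Rabin condition is union-closed; this is precisely the property the paper already invokes to motivate $\conrab$ (see the citation of~\cite{zielonkatree}). First I would pin down the exact failure condition: $\conrab$ only exits prematurely while processing some $P \in \mc{H}_0$ whose associated maximal elements $N_1,\dotsc,N_k \in \mc{P}(P) \cap \mc{H}_1$ satisfy $F_P = P \setminus (N_1 \cup \dotsc \cup N_k) = \emptyset$. Since every $N_i$ is contained in $P$, this emptiness is equivalent to the equality $P = N_1 \cup \dotsc \cup N_k$.

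Next I would argue by contradiction and assume that some Rabin condition $\mc{R} = \{(E_1,F_1),\dotsc,(E_m,F_m)\}$ is consistent with $\mc{H}$. Consistency forces each $N_i$ to fail $\mc{R}$ and $P$ to satisfy $\mc{R}$. I would then verify union-closedness of the rejecting sets directly, so the proof is self-contained: if $N \cup N'$ were to satisfy some pair $(E_i,F_i)$, then $(N \cup N') \cap E_i = \emptyset$ would give $N \cap E_i = N' \cap E_i = \emptyset$, while $(N \cup N') \cap F_i \neq \emptyset$ would force either $N \cap F_i \neq \emptyset$ or $N' \cap F_i \neq \emptyset$; in either case one of $N, N'$ already satisfies $(E_i,F_i)$, contradicting the assumption that both are rejecting.

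Applying this closure property inductively to $N_1,\dotsc,N_k$ yields that $N_1 \cup \dotsc \cup N_k$ is rejected by $\mc{R}$. But this union equals $P$, so $P$ is rejected by $\mc{R}$, contradicting $P \in \mc{H}_0$. Hence no Rabin condition can be consistent with $\mc{H}$ whenever $\conrab$ terminates prematurely. There is no real obstacle here beyond stating the union-closedness argument cleanly; combined with \autoref{rabinconsistencycorrectclassification}, this will immediately give the polynomial-time decision result claimed in \autoref{conrabcorrectness}, since each iteration only manipulates maximal subsets of $P$ in $\mc{H}_1$, an operation easily implemented in polynomial time.
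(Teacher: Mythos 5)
Your proposal is correct and takes essentially the same approach as the paper: premature termination of $\conrab$ yields $P = N_1 \cup \dotsc \cup N_k$ for a positive $P$ and its maximal negative subloops, and union-closedness of the sets rejected by a Rabin condition then forces $P$ to be rejected by any condition consistent with $\mc{H}$, a contradiction. The only difference is that you prove the union-closedness step explicitly (which the paper dismisses with ``clearly''), making the argument self-contained.
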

\begin{proof}
    Assume the algorithm terminates in an iteration of the outer loop for some $P \in \mc{H}_0$. This can only happen if $F_P = \emptyset$ which means for the negative subloops $N_1, \dotsc, N_k$ of $P$ we have $N_1 \cup \dotsc \cup N_k = P$. Now assume there exists some Rabin condition $\mc{R} = \{(E_1, F_1), \dotsc, (E_k, F_k)\}$ that is consistent with $\mc{H}$, which classifies each of the $N_i$ negatively, meaning $N_i \cap E_j \neq \emptyset$ or $N_i \cap F_j = \emptyset$ for all $j \leq k$. Clearly the union of any two negative loops $N = N_{i_1} \cup N_{i_2}$ must also be classified negatively and thus $P$ is a negative loop in $\mc{R}$. But since $P \in \mc{H}_0$ we know that $\mc{R}$ cannot be consistent with $\mc{H}$.
\end{proof}

By combining \autoref{rabinconsistencycorrectclassification} and \autoref{rabinconsistencytermination} we can show that $\conrab$ is correct, which forms the first part of \autoref{conrabcorrectness}.

\rstconrabcorrectness*

For each positive set in $\mc{H}$ the outer loop of $\conrab$ is executed once. In each of these iterations only elementary set operations on the $\subseteq$-maximal negative subsets are executed. As the number of these sets cannot exceed the $|\mc{H}_1|$, the overall runtime of $\conrab$ is polynomial in the size of the partial condition.

\subsection{Fixed-size consistency: Full proof of \autoref{consistencycomplexity}}
\label{app:hardnessproofs}

Since the positively classified sets of a Rabin condition are not closed under union, a similar situation as with generalized Büchi conditions (as outlined in \autoref{sec:consistencyalgos}) arises and no clear efficient way of constructing the a condition with the fewest number of distinct Rabin pairs exists. The $\conrab$ algorithm we presented in \autoref{algo:rabinconsistency} produces Rabin conditions with $|\mc{H}_0|$ pairs, which is not optimal in general. Even though it is possible to introduce optimizations that reduce this number, we prove in the following that $k$-Rabin-\textsc{Consistency} is NP-hard already for $k = 3$. The proof follows a similar structure as the previous one and uses a reduction from $3$-Coloring.

\begin{lemma}\label{rabinconsistencyhardness}
    $3$-Rabin-\textnormal{\textsc{Consistency}} is NP-complete.
\end{lemma}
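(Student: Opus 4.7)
The plan is to mirror the reduction from directed $3$-Coloring used in \autoref{gbaconsistencyhardness}, reusing the transition system $\mc{T}_\mc{G}$ depicted in \autoref{fig:threecoloringreductionaut} and the same ultimately periodic words $n_i = i^\omega$ and $p_{ij} = (iijj)^\omega$, but swapping the roles of positive and negative examples. That is, given $\mc{G} = (V, E)$, I would take as input to $3$-Rabin-\textnormal{\textsc{Consistency}} the pair $(\mc{T}_\mc{G}, S'_\mc{G})$ with $S'_\mc{G} = (P'_\mc{G}, N'_\mc{G})$, $P'_\mc{G} = \{n_i : 0 < i \leq n\}$, and $N'_\mc{G} = \{p_{ij} : (v_i, v_j) \in E\}$. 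Writing $\tau_i = \{(0,i), (i,i)\}$ for the infinity set induced by $n_i$, the sets $\tau_1, \ldots, \tau_n$ are pairwise disjoint and the infinity set induced by $p_{ij}$ is $\tau_i \cup \tau_j$.

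For the forward direction I would translate a 3-coloring $c : V \to \{1,2,3\}$ into the Rabin condition $\mc{R}_c = \{(E_k, F_k) : k \in \{1,2,3\}\}$ defined by $F_k = \bigcup_{c(v_i) = k} \tau_i$ and $E_k = \bigcup_{c(v_i) \neq k} \tau_i$. Pairwise disjointness of the $\tau_i$ immediately yields that a positive word $n_i$ with $c(v_i) = k$ witnesses satisfaction of pair $k$, since $\tau_i \cap E_k = \emptyset$ and $\tau_i \cap F_k = \tau_i \neq \emptyset$. Conversely, a negative word $p_{ij}$ fails every pair $m$: at least one of $v_i, v_j$ has color distinct from $m$, so the entire corresponding $\tau$-set lies in $E_m$ and hence $(\tau_i \cup \tau_j) \cap E_m \neq \emptyset$.

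For the backward direction I would take any consistent Rabin condition $\mc{R} = \{(E_k, F_k) : k \in \{1,2,3\}\}$ and, for each $v_i$, pick an arbitrary pair $k_i$ witnessing acceptance of $n_i$ (so $\tau_i \cap E_{k_i} = \emptyset$ and $\tau_i \cap F_{k_i} \neq \emptyset$), setting $c(v_i) := k_i$. If some edge $(v_i, v_j) \in E$ were monochromatic with colour $k$, then $(\tau_i \cup \tau_j) \cap E_k = \emptyset$ and $(\tau_i \cup \tau_j) \cap F_k \supseteq \tau_i \cap F_k \neq \emptyset$, so $p_{ij}$ would be accepted, contradicting consistency with $S'_\mc{G}$. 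NP-membership is routine: a guessed Rabin condition of three pairs over $Q \times \Sigma$ can be checked against every sample word's (easily computable) infinity set in polynomial time.

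The only non-routine conceptual point, and the one I would flag as the main obstacle, is recognizing that the polarity of the reduction has to be reversed relative to the generalized Büchi case. For generalized Büchi, enlarging an infinity set only makes satisfaction easier, so the ``two-colour'' sets $\tau_i \cup \tau_j$ naturally sit on the positive side; a naive attempt to mimic this for Rabin fails because, with $p_{ij}$ positive and $n_i$ negative, any pair accepting $\tau_i \cup \tau_j$ via $\tau_i \cap F_k \neq \emptyset$ would already accept the subset $\tau_i$. Rabin acceptance is instead fragile under enlargement (additional transitions can land in some $E_k$), which is precisely the structural feature that lets edges enforce colour-distinctness by knocking $\tau_i \cup \tau_j$ out of every pair, once the ``single-colour'' sets $\tau_i$ are placed on the positive side.
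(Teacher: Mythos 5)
Your proposal is correct and matches the paper's own proof essentially verbatim: the paper also reuses $\mc{T}_\mc{G}$ with the polarity of the generalized Büchi sample reversed (taking $i^\omega$ as positive and $(iijj)^\omega$ as negative), builds the Rabin pairs $(E_k, F_k)$ from the color classes exactly as you do, and extracts a coloring from a consistent condition by picking, for each $i^\omega$, a witnessing pair. Your transition-based formulation via the sets $\tau_i$ is marginally more careful than the paper's state-based shorthand, but the argument is the same.
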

\begin{proof}
    In this reduction we use the same transition system $\mc{T}_\mc{G}$ (which is depicted in \autoref{fig:threecoloringreductionaut}) but define a different sample, $S_\mc{G}= (P_\mc{G}, N_\mc{G})$, as follows:
    \[P_\mc{G} = \{p_i: 0 < i \leq n\} \text{ and } N_\mc{G} = \{n_{ij}: (v_i,v_j) \in E\} \text{ for } p_i = i^\omega, n_{ij} = (iijj)^\omega.\]
    We again use $\bar{p_i}$ and $\bar{n_{ij}}$ to denote the infinity sets induced by $p_i$ and $n_{ij}$ in $\mc{T}_\mc{G}$. Based on a given 3-coloring $c : V \to \{1, 2, 3\}$ for $\mc{G}$ we now define the Rabin condition
    \[\mc{R} = \{(E_1, F_1), (E_2, F_2), (E_3, F_3)\} \text{ with } E_i = \{j : c(v_j) \neq i\} \text{ and } F_i = \{j : c(v_j) = i\}\]
    For any positive sample word $p_i$ let $j = c(v_i)$. We have $\bar{p_i} \cap E_j = \{0, i\} \cap E_j = \emptyset$ and $\bar{p_i} \cap F_j = \{0,i\} \cap F_j \neq \emptyset$ and thus $P_\mc{G} \subseteq L(\mc{T}_\mc{G}, \mc{R})$. Consider now the infinity set $\bar{n_{ij}} = \{0, i, j\}$ induced by some negative sample word $n_{ij}$. Clearly we have that $\bar{n_{ij}} \cap E_k \neq \emptyset$ for all $k \leq 3$ as $c(v_i) \neq c(v_j)$, which is guaranteed by $c$ being a valid $3$-coloring of $\mc{G}$. Thus $\mc{R}$ is indeed a Rabin condition of size $3$ such that $\langle\mc{T}_\mc{G},\mc{R}\rangle$ is consistent with $S_\mc{G}$.

    For the other direction assume that a Rabin condition $\mc{R}$ of size $3$ exists such that $\langle\mc{T}_\mc{G},\mc{R}\rangle$ is consistent with $S_V$. We define a coloring $c : V \to \{1, 2, 3\}$ with $c(v_i) = \sigma$ for a $\sigma$ such that $i \notin E_\sigma$ and $\{0, i\}\cap F_\sigma \neq \emptyset$, which clearly has to exist since $p_i$ is accepted by $\langle\mc{T}_\mc{G},\mc{R}\rangle$. To show that $c$ is a valid $3$-coloring of $\mc{G}$ let $(v_i, v_j) \in E$ and assume to the contrary that $c(v_i) = c(v_j)$. Then for the infinity set $\{0, i, j\} = \bar{n_{ij}}$ induced by the negative sample word $n_{ij}$ and $k = c(v_i)$ we have $\bar{n_{ij}} \cap E_k = \emptyset$ and $\bar{n_{ij}} \cap F_k \neq \emptyset$. But this would imply $n_{ij} \in L(\mc{T}_\mc{G}, \mc{R})$ which contradicts consistency with $S_\mc{G}$. Thus the assumption of $c(v_i) = c(v_j)$ must have been incorrect and $c$ is indeed a valid $3$-coloring.

    Membership in NP is again follows from the fact that consistency with $S$ is verifiable in polynomial time through iterating over all sample words. 
\end{proof}

Using these hardness results we are now able to prove \autoref{consistencycomplexity} which establishes the complexity of all fixed-size consistency decision problems we introduced.

\rstconsistencycomplexity*
\begin{proof}
    Since the parity condition returned by $\conpar$ is optimal as established in \autoref{conparcorrectness}, we can compare the number of distinct priorities it uses to $k$ and thereby decide membership in $k$-Parity-\textnormal{\textsc{Consistency}}. By \autoref{conparcorrectness} this can be done in polynomial time. NP-completeness of $k$-Rabin-\textnormal{\textsc{Consistency}} and $k$-generalized Büchi-\textnormal{\textsc{Consistency}} is established in \autoref{gbaconsistencyhardness} and \autoref{rabinconsistencyhardness} respectively.
\end{proof}

\section{Passive learning}
\label{app:omegarpni}

In the main part of the paper we introduced the $\sprout$ algorithm for constructing deterministic automata based on finite samples and mentioned that it defaults to extending the transition system with disjoint loops once a certain threshold is exceeded. We now give a formal definition of how this extension is constructed and subsequently show that $\sprout$ returns an automaton that is consistent with the given sample in polynomial time.

Assume that the algorithm has constructed a transition system $\mc{T} = (Q, \Sigma, \varepsilon
, \delta)$ for which it then encounters an escape-prefix exceeding the defined threshold. To compute $\operatorname{\mathsf{Extend}(\mc{T}, S)}$ we first define a function 
\[
    E : Q \to \mc{P}(\Sigma^\omega) \text{ with } E(q) = \{av : uav \in S_+ \text{ is escaping from } q \text{ with } a\}
\] which for each state $q \in Q$ returns the set of all exit strings that belong to words from $S_+$ which escape $\mc{T}$ from $q$. We then construct a transition system $\mc{T}^\circlearrowleft_{E(q)}$ in which exactly those words that belong to $E(q)$ induce loops. To prevent acceptance of unintended words, we additionally ensure that the initial state of $\mc{T}^\circlearrowleft_{E(q)}$ is transient (meaning it is not reachable from any state within $\mc{T}^\circlearrowleft_{E(q)}$).

In the following we denote by $\prf(L)$ for a $L \subseteq \Sigma^\omega$ the set of all words $u \in \Sigma^*$ that are prefix of some $w \in L$. Note that since the shortest escape-prefix in $\mc{T}$ exceeded the threshold, all words in $E(q)$ must be of the form $u^\omega$ for some $u \in \Sigma^+$ and we can write $E(q) = \{u_1^\omega, \dotsc, u_k^\omega\}$ with $u_i \in \Sigma^+$. We now define the transition system $\mc{T}^\circlearrowleft_{E(q)} = (Q^\circlearrowleft_{E(q)}, \Sigma, q_0, \delta^\circlearrowleft_{E(q)})$ with $Q^\circlearrowleft_{E(q)} = \{q_0\} \cup \prf(E(q))$ and
\[ 
    \delta^\circlearrowleft_{E(q)}(w, a) = \begin{cases}
        a & \text{ if } w = q_0 \text{ and } a \in \Sigma \cap \prf({E(q)})\\
        \varepsilon & \text{ if } (wa)^\omega \in {E(q)}\\
        wa & \text{ if } wa \in \prf(E(q))\\
        \bot & \text{ otherwise }
    \end{cases}
\]
It is easy to see that $q_0$ is indeed transient in $\mc{T}^\circlearrowleft_{E(q)}$ and we can clearly find both a parity and a Rabin condition such that every word in $E(q)$ induces an accepting run in $\mc{T}^\circlearrowleft_{E(q)}$. By attaching the corresponding $\mc{T}^\circlearrowleft_{E(q)}$ to each state $q$ for which $E(q)$ is non-empty, we obtain a transition system in which no word from $S_+$ is escaping.

\subsection{Full proof of \autoref{polytimesprout}}

We now give a proof for \autoref{polytimesprout} from the main part of the paper.

\rstpolytimesprout*
\begin{proof}
    As the initial transition system, $\mc{T}_0$, contains no transitions, all words in $S = (S_+, S_-)$ are escaping. Since $S_+ \cap S_- = \emptyset$ every pair of words in $S_+ \times S_-$ must be distinguishable and thus $\mc{T}_0$ is $\Omega$-consistent with $S$. Furthermore $\sprout$ keeps only those transitions for which $\Omega$-consistency with $S$ is maintained.

    It is easy to see that calling $\operatorname{\mathsf{Extend}}$ once the threshold is exceeded cannot lead to a violation of $\Omega$-consistency, since no words from $S_-$ that were previously escaping can now induce infinite runs. On the other hand no word from $S_+$ is escaping in the extended transition system we obtain. Similarly if the threshold is not exceeded, a point where $\operatorname{\mathsf{Escapes}}$ is empty must be reached, also guaranteeing that no word in $S_+$ is escaping. Since we showed the consistency algorithms from \autoref{sec:consistencyalgos} to be correct, we can thus conclude that the automaton computed by $\operatorname{\mathsf{Aut}}$, which is subsequently returned by $\sprout$, must be consistent with $S$.

    Assume now that $S_+$ consists of $n$ words and let $l_b$ and $l_e$ be the maximal length of $u$ and $v$ for any $uv^\omega \in S$ respectively. In the worst case $\sprout$ runs until the threshold $T = l_b+l_e^2 + 1$ is exceeded. As the outer loop is executed no more than $n\cdot T$ times and each iteration can lead to the introduction of at most one new state, the size of the constructed transition system is bounded by $n\cdot T$. Each state is checked as a potential transition target by the inner loop, meaning it can be executed at most $n\cdot T$ times. We have shown in \autoref{app:consistency} that the consistency check, which is run in each of these iteration can be performed in polynomial time. Testing for all pairs of words in $S_+ \times S_-$ whether they are indistinguishable is also possible in polynomial time. As computing the extended transition system once the threshold is exceeded takes only linear time, the algorithm has an overall polynomial complexity.
\end{proof}

\subsection{Full proof of \autoref{learnabilitycorollary}}

In this subsection we formalize the definition of characteristic samples based on minimal (transition) representatives. Subsequently we show that in spite of the introduced threshold, completeness for IRC is retained. Finally we introduce characteristic samples for various acceptance condition types, which ultimately allows the proof of
\rstlearnabilityinthelimit*

We begin with an auxiliary statement that establishes a bound on the number of positions of two distinct ultimately periodic words in reduced form that can coincide.

\begin{lemma}\label{upwordsdiffer}
    Let $uv^\omega, xy^\omega \in \UP_\Sigma$ be two ultimately periodic words in reduced form. If $uv^\omega \neq xy^\omega$ then they must differ in one of the first $\max\{|u|, |x|\} + |v|\cdot|y|$ positions.
\end{lemma}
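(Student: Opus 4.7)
The plan is to prove the contrapositive: if $uv^\omega$ and $xy^\omega$ agree on the first $n + p$ positions, where $n = \max\{|u|,|x|\}$ and $p = |v|\cdot|y|$, then $uv^\omega = xy^\omega$. Without loss of generality I would assume $|x| \geq |u|$, so that $n = |x|$; the symmetric case is identical. The key observation is that from position $n$ onward both words are purely periodic: $uv^\omega$ restricted to positions $\geq n$ is $v'^\omega$ where $v'$ is the cyclic rotation of $v$ by $(n - |u|) \bmod |v|$ (so $|v'| = |v|$), while $xy^\omega$ restricted to positions $\geq n$ is exactly $y^\omega$.

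The main technical step is the following auxiliary claim: if two infinite words $\alpha, \beta \in \Sigma^\omega$ are periodic with periods $p_1$ and $p_2$ respectively, and they agree on their first $p_1 \cdot p_2$ positions, then $\alpha = \beta$. To prove this, for any index $i \geq 0$ I would set $i' = i \bmod (p_1 p_2)$, so $i' \in [0, p_1 p_2)$. Since $p_1 p_2$ is a multiple of both $p_1$ and $p_2$, we have $i \bmod p_1 = i' \bmod p_1$ and $i \bmod p_2 = i' \bmod p_2$, so by periodicity $\alpha(i) = \alpha(i')$ and $\beta(i) = \beta(i')$. By the hypothesis $\alpha(i') = \beta(i')$, hence $\alpha(i) = \beta(i)$ for all $i$.

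Applying this claim to $\alpha = uv^\omega[n..]$ and $\beta = xy^\omega[n..] = y^\omega$ with $p_1 = |v|$ and $p_2 = |y|$, the hypothesized agreement on positions $n, n+1, \ldots, n + |v||y| - 1$ yields $uv^\omega[n..] = xy^\omega[n..]$. Combined with the assumed agreement on positions $0, \ldots, n-1$, this gives $uv^\omega = xy^\omega$, completing the proof of the contrapositive. Note that the reduced-form hypothesis is not invoked in the argument itself; it only fixes the values $|u|, |v|, |x|, |y|$ that appear in the bound (and guarantees that these quantities are the canonical ones associated with each word).

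The proof is essentially a short periodicity argument, and I do not expect any real obstacle; the only small subtlety is handling the case $|u| < |x|$ correctly by identifying the tail of $uv^\omega$ starting at position $n$ with the $\omega$-power of a cyclic rotation of $v$ of the same length, so that the auxiliary claim applies with period $|v|$.
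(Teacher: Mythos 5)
Your proof is correct and takes essentially the same approach as the paper: both arguments rest on the observation that beyond position $\max\{|u|,|x|\}$ the two words are simultaneously periodic with periods $|v|$ and $|y|$, so any position can be reduced modulo $|v|\cdot|y|$, forcing a disagreement at a late position back into the initial segment. The paper packages this via a product sequence over $\Sigma \times \Sigma$ whose pairs of indices repeat with period at most $|v|\cdot|y|$, whereas you isolate it as an auxiliary claim about purely periodic tails, but the underlying modular-arithmetic step is identical.
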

\begin{proof}
    We associate with the two words a unique sequence $\alpha \in (\Sigma \times \Sigma)^\omega$ such that $\proj_1(\alpha) = uv^\omega$ and $\proj_2(\alpha) = xy^\omega$ where $\proj_i((a_1^1, a_2^1), (a_1^2, a_2^2), \dotsc) = a_i^1, a_i^2, \dotsc$ for an infinite sequence of tuples refers to the projection onto its $i$-th component. Let $n = \max\{|u|, |x|\}$ then due to the periodic nature of both words we have that $\alpha_{n+i} = (v_j, y_k)$ with $j = (n+i-|u|) \bmod |v|$ and $k = (n+i-|x|) \bmod |y|$. By considering the indices for letters of $v$ and $y$ that appear simultaneously in elements of $\alpha$, we can observe that they are taken from the quotient rings $\mathbb{Z}/_{|v|\mathbb{Z}}$ and $\mathbb{Z}/_{|y|\mathbb{Z}}$, which have cardinality $|v|$ and $|y|$ respectively. Therefore these indices must repeat with a period of at most $m = |v|\cdot|y|$ and we have $\alpha_{n+i} = \alpha_{n + i + j\cdot m}$ for $i \leq m$ and all $j \in \mathbb{N}$. If there was a position $d > n+m$ in which the two words differ, we could write $d = n + i + j\cdot m$ for a $j >0$ and $i < m$. But then by our previous considerations the words must already have differed at the position $d' = n+i < n + m$, which consequently means if two words agree on the first $n+m$ symbols then they are equal.
\end{proof}

We now provide a formal definition of minimal (transition) representatives, which are used in the subsequent construction of characteristic samples. To that end we make use of the \emph{congruence automaton} $\mc{A}_L$ of an $\Omega$-IRC language $L$ for $\Omega \in \Acctype$. It consists of a transition system $\mc{T}_L = (Q_L, \Sigma, [\varepsilon]_{\sim_L}, \delta_L)$ which is augmented with an acceptance condition $\mc{C}_L$ of type $\Omega$ such that $L(\mc{A}_L) = L$. $\mc{T}_L$ has a state for each $\sim_L$ equivalence class and defines $\delta_L$ as $\delta_L([u]_{\sim_L}, a) = [ua]_{\sim_L}$.

The minimal representatives of an $\Omega$-IRC language $L$ correspond to the minimal words in length-lexicographic order on which each class of $\sim_L$ (and thus state of $\mc{A}_L$) can be reached. Similarly, each minimal transition representative corresponds to the length-lexicographically shortest word on which a transition in $\mc{A}_L$ is reached. In the following we use $\prf(L)$ for a language $L \subseteq \Sigma^\omega$ to denote the set of all $u \in \Sigma^*$ that are a prefix of some $w \in L$ and denote by $\prec$ the length-lexicographic order.

\begin{definition}[Minimal Representatives]\label{minimalrepresentatives}
    Let $L \subseteq \Sigma^\omega$ be an $\Omega$-IRC language for some $\Omega \in \Acctype$. We define the set of \emph{minimal representatives}
    \[\mr(L) = \{\hat{u} \in \prf(L): \text{ for all } v \sim_L \hat{u} \text{ we have } \hat{u} \prec v\}\]
    as well as the set of \emph{minimal transition representatives}
    \[\mtr(L) = \{\hat{u}a: \hat{u} \in \mr(L), a \in \Sigma \text{ and } \hat{u}a \in \prf(L)\}.\]
\end{definition}

We can now define a characteristic sample for $\sim_L$ based on which $\sprout$ is then able to reconstruct the transition system $\mc{T}_L$ underlying the congruence automaton $\mc{A}_L$. To guarantee that all states and transitions of $\mc{T}_L$ are inserted by $\sprout$, we require each minimal (transition) representative to be a prefix of some word in $S_+$. To prevent the algorithm from inserting wrong transitions, we add a second requirement, through which separation of the $\sim_L$-classes is ensured.

\begin{definition}[Characteristic Sample]\label{completeforircts}
    For an $\Omega$-IRC language $L$ with $\Omega \in \Acctype$ we define the sample $S_\sim = (S_+, S_-)$ to be the smallest sample satisfying the following conditions:
    \begin{itemize}
        \item for all $v \in \mtr(L)$ there exists a $w \in \Sigma^\omega$ such that $vw \in S_+$
        \item for $\hat{u} \in \mr(L), v \in \mtr(L)$ with $\hat{u} \not\sim_L v$ there exists some suffix $w \in \Sigma^\omega$ such that $\hat{u}w, vw \in S$ and $\hat{u}w \in S_+ \Leftrightarrow vw \in S_-$
    \end{itemize}
    A sample $S'$ that extends $S_\sim$ is called \emph{characteristic for} $\sim_L$.
\end{definition}

In the following proof we use $\mc{T}_i$ to denote the deterministic transition system constructed in iteration $i$ of the $\sprout$ algorithm called on a characteristic sample $S = (S_+, S_-)$ for $\sim_L$. Note that the extension $\sprout$ computes once the threshold is exceeded is explicitly excluded from this sequence. We use the concept of injective embeddings (which we define below) to show that each of these $\mc{T}_i$ is structurally compatible with $\mc{T}_L$.

\begin{definition}\label{embeddingdefinition}
    For two transition systems $\mc{T}$ and $\mc{T}$ we call $\varphi : Q \hookrightarrow Q'$ an \emph{injective embedding of $\mc{T}$ in $\mc{T}'$} if $\varphi$ is injective, $\varphi(q_0) = q_0'$ and for all $q \in Q, a \in \Sigma$ with $\delta(q, a) \neq \bot$ we have $\varphi(\delta(q, a)) = \delta'(\varphi(q), a)$.
\end{definition}

We can now prove that there exists an injective embedding of each $\mc{T}_i$ into $\mc{T}_L$, where $\mc{T}_L$ denotes the transition system underlying $\mc{A}_L$. To simplify notation, we use $\prf(L)$ for an $\omega$-language $L$ to denote the set of all prefixes of a word in $L$.

\begin{lemma}\label{embeddinglemma}
    For all $i$ the mapping $\varphi_i : Q_i \to Q_L$ defined as $\varphi(u) = \hat{u}$ for the minimal representative $\hat{u} \in \mr(L)$ with $\hat{u} \sim_L u$ is an injective embedding.
\end{lemma}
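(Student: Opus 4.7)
The plan is to proceed by induction on $i$, strengthening the hypothesis so that along with (i) $\varphi_i$ being an injective embedding, we also maintain (ii) every state label $u \in Q_i$ belongs to $\mr(L)$, and (iii) $\delta^*_i(\varepsilon, u) = u$ for all $u \in Q_i$. The base case $i = 0$ with $Q_0 = \{\varepsilon\}$ is immediate: $\varepsilon \in \mr(L)$, $\varphi_0(\varepsilon) = [\varepsilon]_{\sim_L}$ is the initial state of $\mc{T}_L$, injectivity is trivial, and the transition condition is vacuous.

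For the induction step, consider the two possible actions of $\sprout$ at iteration $i+1$. If $\sprout$ extends $\delta_i$ by a transition $\hat{u} \xrightarrow{a} q$ into an existing state $q \in Q_i$, the state set is unchanged so (ii) and (iii) transfer directly. For (i) we must verify $q \sim_L \hat{u}a$. If this failed, then by (ii) we would have $\hat{u}, q \in \mr(L)$; moreover $\hat{u}a \in \prf(L)$ as a prefix of some $S_+$ word, so $\hat{u}a \in \mtr(L)$. The second bullet of \autoref{completeforircts} would then supply a suffix $w$ with $qw, \hat{u}aw \in S$ of opposite classification. By (iii) the word $q$ reaches state $q$ in $\mc{T}_{i+1}$, and $\hat{u}a$ also reaches $q$ via the new transition, so these two sample words share the same tail run from positions $|q|$ and $|\hat{u}a|$ onwards, yielding identical infinity sets (or identical escape behaviour) and contradicting the $\Omega$-consistency test that accepted the transition.

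If instead $\sprout$ introduces a new state labelled $\hat{u}a$ together with the transition $\hat{u} \xrightarrow{a} \hat{u}a$, property (iii) extends trivially. For injectivity of $\varphi_{i+1}$ we argue that $\hat{u}a \not\sim_L q$ for all $q \in Q_i$: otherwise, the pullback of the acceptance condition $\mc{C}_L$ of the congruence automaton along the embedding extended with $\hat{u} \xrightarrow{a} q$ supplies an $\Omega$-acceptance condition consistent with $S$ (the IRC of $L$ rules out indistinguishable $S_+ \times S_-$ pairs in a transition system embedding into $\mc{T}_L$), so the consistency test for target $q$ would have succeeded, contradicting the new-state decision. For (ii), suppose towards contradiction that $\hat{u}a \notin \mr(L)$ and pick $\hat{w} \in \mr(L)$ with $\hat{w} \prec \hat{u}a$ and $\hat{w} \sim_L \hat{u}a$; once we show $\hat{w} \in Q_i$, the injectivity argument just given (applied to $\hat{w}$ as the candidate target) again contradicts new-state creation. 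The claim $\hat{w} \in Q_i$ follows from a secondary induction over the length-lexicographic processing order: since $\hat{w} \in \mtr(L)$, the first bullet of \autoref{completeforircts} guarantees $\hat{w}y \in S_+$ for some $y$, so every shorter escape-prefix along $\hat{w}$ is handled by $\sprout$ strictly before any escape-prefix of length $|\hat{u}a|$ is processed, forcing $\hat{w}$ itself to be introduced as a state label.

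I expect the main obstacle to be precisely this secondary induction—aligning $\sprout$'s length-lexicographic processing order of escape-prefixes of $S_+$ words with the length-lexicographic order on $\mr(L)$, so that every minimal representative smaller than $\hat{u}a$ has indeed been discovered as a state label by the time $\hat{u}a$ is created. Once this coverage property is established, the remaining ingredients, namely the pullback of $\mc{C}_L$ and the distinguishing suffixes supplied by \autoref{completeforircts}, combine to yield both injectivity and the transition-preservation property of $\varphi_{i+1}$.
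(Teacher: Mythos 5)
Your proof is correct and follows essentially the same route as the paper's: a strengthened induction maintaining that $\varphi_i$ is an injective embedding and $Q_i \subseteq \mr(L)$, with wrong merge targets ruled out by the separating suffixes of \autoref{completeforircts}, correct targets accepted because consistency is inherited from the congruence automaton $\mc{A}_L$ (your ``pullback'' of $\mc{C}_L$ is the paper's observation that an inconsistency would transfer to $\mc{A}_L$), and new state labels shown to lie in $\mr(L)$ via the length-lexicographic processing order of escape-prefixes. The only differences are organizational: you make the invariant $\delta_i^*(\varepsilon,u)=u$ explicit and fold the paper's ``equivalent target exists'' case into the new-state case as a contrapositive, which changes nothing of substance.
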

\begin{proof}
    We use induction to show a slightly stronger statement: For all $i$ the function $\varphi_i$ is an injective embedding of $\mc{T}_i$ in $\mc{T}_L$ and $Q_i \subseteq \mr(L)$. Note that for all $i$ it naturally holds that $\varphi_i$ maps the initial state $\varepsilon$ of $\mc{T}_i$ to the initial state of $\mc{A}_L$. In the base case for $i=0$ we know that $\mc{T}_0$ has no transitions and thus $\varphi_0$ is an injective embedding. Since $\varepsilon$ must be a minimal representative the second part of the statement holds as well.

    Now let $i > 0$ and assume that the statement holds for all $j < i$. Let $ua$ be the escape-prefix in step $i$ of the algorithm. We know that $\delta^*_{i-1}(\varepsilon, u) = \hat{u} \in Q_{i-1}$ and thus by the induction hypothesis $\hat{u} \in \mr(L)$ and $\hat{u}a \in \mtr(L)$. Because $\varphi_{i-1}$ is an injective embedding we have $u \sim_L \hat{u}$ and thus $\delta^*_L(\varepsilon, u) = \delta^*_{i-1}(\varepsilon, u)$. In the following we consider the possible ways in which $\sprout$ can extend the transition system $\mc{T}_{i-1}$ and show that the statement holds for all of them.

    For $\hat{w} \in Q_{i-1}$ and $\hat{w} \not\sim_L \hat{u}a$ we know by the induction hypothesis that $\hat{w} \in \mr(L)$. Since $\hat{u}a \in \mtr(L)$ we know by the second condition on characteristic samples there exist separating words $s_+,s_- \in \{\hat{u}axy^\omega, \hat{w}xy^\omega\}$ such that $s_+ \in S_+$ and $s_- \in S_-$. But then the addition of $\hat{u} \xrightarrow{\scriptstyle{a}} \hat{w}$ would mean that $\delta^*(\varepsilon, \hat{u}a) = \delta^*(\varepsilon, \hat{w})$ and after reading $\hat{u}a$ and $\hat{w}$ the same state is reached in $\mc{T}_i$. Therefore $s_+$ and $s_-$ either become inseparable or induce the same infinity set in $\mc{T}_i$. Hence $\mc{T}_i$ cannot be $\Omega$-consistent with $S$ and no transition of the form $\hat{u} \xrightarrow{\scriptstyle{a}} \hat{w}$ for a $\hat{w} \not\sim_L\hat{u}a$ is inserted by $\sprout$.

    In case a $\hat{v} \in Q_{i-1}$ with $\hat{v} \sim_L \hat{u}a$ exists then $\delta_L(\hat{u}, a) = \hat{v}$. If adding the transition $\hat{u} \xrightarrow{\scriptstyle{a}} \hat{v}$ to $\delta_{i-1}$ would introduce an inconsistency with $S$, then $\mc{A}_L$ would also be inconsistent with $S$ as the transition $\hat{u} \xrightarrow{\scriptstyle{a}} \hat{v}$ is also present in $\mc{T}_L$. Because this is not the case we have $\delta_i = \delta_{i-1} \cup \{\hat{u} \xrightarrow{\scriptstyle{a}} \hat{v}\}$. Since $\varphi_{i-1}$ is an injective embedding by the induction hypothesis and $\delta_L(\hat{u}, a) = \hat{v}$, we can conclude that $\varphi_i$ is also an injective embedding.

    A new state $\hat{u}a$ is added if $\hat{w} \not \sim_L \hat{u}a$ for all $\hat{w} \in Q_{i-1}$. We show that $\hat{u}a \in \mr(L)$. Then $\delta_L(\hat{u}, a) = \hat{u}a$ and $\varphi_i$ is also an injective embedding. Assume to the contrary that $\hat{w} \sim_L \hat{u}a$ for a $\hat{w} \in \mr(L)$ with $\hat{w} \prec \hat{u}a$, which means $\hat{w} \in \prf(S_+)$. Since the escape-prefixes are considered in canonical order, however, we know that no prefix of $\hat{w}$ can be escaping. Thus $\delta^*_{i-1}(\varepsilon, \hat{w}) = \hat{w} \in Q_{i-1}$, which is a contradiction.
\end{proof}

Now that we have established that $\sprout$ inserts states and transitions in accordance to $\mc{T}_L$, we need to ensure that this actually happens for \emph{all} states and transitions. Regular termination (i.e. without exceeding the threshold) of the algorithm only occurs if no word from $S_+$ is escaping. Since all minimal (transition) representatives occur as prefixes of words in $S_+$, a complete reconstruction of $\mc{T}_L$ is guaranteed in this case. We now establish that the threshold we introduced was chosen to be large that all minimal (transition) representatives are encountered before it is exceeded.

\begin{lemma}\label{whenaremrandmtrcovered}
    For an $\Omega$-IRC language $L$ with $\Omega \in \Acctype$ and a sample $S = (S_+, S_-)$ that is characteristic for $\sim_L$ we have
    \[\mr(L) \subseteq \left(\prf(S_+) \cap \left(\bigcup_{i = 0}^{k} \Sigma^i \right)\right) \text{ where } k = \max_{uv^\omega \in S}|u| + \left(\max_{uv^\omega \in S}|v|\right)^2.\]
\end{lemma}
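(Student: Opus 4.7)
I plan to prove the two inclusions separately.

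\textbf{Prefix inclusion $\mr(L) \subseteq \prf(S_+)$.} The empty word lies in $\prf(S_+)$ because $\mtr(L)$ is non-empty and hence $S_+$ is non-empty by the first condition of \autoref{completeforircts}. For a non-empty $\hat{u} = u'a \in \mr(L)$, I will first argue $u' \in \mr(L)$: if some $\hat{v}$ with $\hat{v} \prec u'$ satisfied $\hat{v} \sim_L u'$, then the right-congruence property of $\sim_L$ would yield $\hat{v}a \sim_L u'a = \hat{u}$ together with $\hat{v}a \prec \hat{u}$, contradicting the minimality of $\hat{u}$. Consequently $\hat{u} = u'a \in \mtr(L)$, and the first condition of \autoref{completeforircts} supplies a word $w$ with $\hat{u}w \in S_+$, so $\hat{u} \in \prf(S_+)$.

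\textbf{Length bound $|\hat{u}| \leq k$.} I plan to argue by contradiction: assume some $\hat{u} \in \mr(L)$ satisfies $|\hat{u}| > k$, and pick a witness $\hat{u}w \in S_+$ from the first part, written in reduced form as $u'(v')^\omega$ with $|u'| \leq l_b$ and $|v'| \leq l_e$. Decomposing $\hat{u} = u'(v')^j p$ with $p$ a proper prefix of $v'$, the assumption $|\hat{u}| > l_b + l_e^2$ forces $j \geq |v'|$. The key observation is that all prefixes of $\hat{u}$ reach pairwise distinct $\sim_L$-classes: if two prefixes $\pi \prec \pi' \preceq \hat{u}$ were equivalent, right-congruence would allow strictly shortening $\hat{u}$ within its class, contradicting minimality. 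Consequently, the $j+1 \geq |v'|+1$ prefixes $u'(v')^i p$ for $i = 0, \ldots, j$ reach pairwise distinct states of the congruence automaton $\mc{A}_L$; and consecutive ones are related by reading the rotation $v'_p$ of $v'$ by $|p|$ positions, so these states form an orbit of length $\geq |v'|+1$ under the map $q \mapsto \delta^*_L(q, v'_p)$.

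From here the contradiction should come from an analysis of the run of $\mc{A}_L$ on the ultimately periodic input $u'(v')^\omega$: its state sequence is eventually periodic with character-level period a multiple of $|v'|$, and leveraging that $v'$ is the shortest period of $u'(v')^\omega$ in reduced form should force the orbit of $q \mapsto \delta^*_L(q, v'_p)$ to have cycle length at most $|v'|$, ruling out the existence of $|v'|+1$ distinct orbit points.

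\textbf{Main obstacle.} The delicate step is the last cycle-length bound. Without exploiting the reducedness of the form $u'(v')^\omega$, an orbit of a fixed-iterate map on $\mc{A}_L$ could visit up to $|Q_L|$ states in principle, so the argument must turn the minimality of $v'$ as a period into a sharp $|v'|$-bound on orbit cycles; I expect this combinatorial link between the syntactic reducedness of ultimately periodic words and the state-level cycle structure to be the main technical difficulty of the proof.
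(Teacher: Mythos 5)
Your first inclusion ($\mr(L) \subseteq \prf(S_+)$) is correct and is essentially the paper's own argument: minimality of $\hat{u} = u'a$ gives $u' \in \mr(L)$, hence $\hat{u} \in \mtr(L)$, and the first condition of \autoref{completeforircts} applies.

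The length bound is where the proposal breaks, and the route you chose cannot be repaired. You try to extract the bound from a \emph{single} positive witness $\hat{u}w = u'(v')^\omega \in S_+$, via the claim that reducedness of $u'(v')^\omega$ forces the orbit of $q \mapsto \delta^*_L(q, v'_p)$ to have at most $|v'|$ distinct points. This implication is false: reducedness is a property of the word and imposes no constraint on the run of an automaton over it. Concretely, fix $N \geq 3$, let $\Sigma = \{a,b\}$ and
\[
L \;=\; \{\, a^i b w \;:\; i \equiv 0 \ (\mathrm{mod}\ N),\ w \in \Sigma^\omega \,\} \;\cup\; \{a^\omega\}.
\]
Its congruence automaton is an $a$-cycle $c_0,\dots,c_{N-1}$ with $b$ leading from $c_0$ to an accepting sink and from every other $c_i$ to a rejecting sink; the Büchi condition consisting of $(c_0,a)$ and the accepting sink's transitions recognizes $L$, so $L$ is a Büchi-IRC language. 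The word $a^\omega \in L$ is in reduced form with $u' = \varepsilon$, $v' = a$, and a characteristic sample may perfectly well use $a^\omega$ as the positive witness required by the first condition of \autoref{completeforircts} for all of $a, a^2, \dots, a^N \in \mtr(L)$. Now take $\hat{u} = a^{N-1} \in \mr(L)$: all three facts your contradiction step relies on hold ($j = N-1 \geq |v'|$, the prefixes of $\hat{u}$ reach pairwise distinct classes, the witness is reduced), and yet the orbit of $q \mapsto \delta_L(q,a)$ is the full cycle of length $N \gg |v'| = 1$. This configuration is fully realizable, so the implication ``those facts yield a contradiction'' is simply not a theorem; what you flagged as the main technical difficulty is not delicate but impossible.

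The information that actually bounds $|\hat{u}|$ sits in the part of \autoref{completeforircts} that your proof never invokes: the \emph{separation} condition. In the example above it is the separating pairs (e.g.\ $a^{N}b^\omega \in S_+$ versus $a^{N+1}b^\omega \in S_-$, obtained from $\hat{u} = a^{N-1}$, $v = a^N$, $w = ab^\omega$) that force $k = l_b + l_e^2$ beyond $N-1$, where $l_b = \max_{uv^\omega \in S}|u|$ and $l_e = \max_{uv^\omega \in S}|v|$. The paper's proof runs exactly through this: writing the minimal representative as $\hat{u}a$, minimality gives $\hat{u} \not\sim_L \hat{u}a$ with $\hat{u} \in \mr(L)$ and $\hat{u}a \in \mtr(L)$, so the second condition of \autoref{completeforircts} yields a suffix $w$ with $\hat{u}w, \hat{u}aw \in S$ of opposite classification. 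These are two \emph{distinct} sample words agreeing on the prefix $\hat{u}$, and \autoref{upwordsdiffer} bounds the length of a common prefix of two distinct reduced sample words by $k$, giving $|\hat{u}| < k$ and hence $|\hat{u}a| \leq k$. Any correct proof must bring in two sample words of opposite classification; the single-witness orbit analysis cannot be completed.
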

\begin{proof}
    We begin by considering the basic case of $\varepsilon \in \mr(L)$ and then proceed inductively. Clearly we have $\varepsilon \in \prf(S_+)$ and since $|\varepsilon| = 0$ the claim trivially holds. Now let $\hat{u}a \in \mr(L)$ be a minimal representative. The minimality of $\hat{u}a$ guarantees that $\hat{u} \in \mr(L)$ and thus $\hat{u}a \in \mtr(L)$. By the first condition for characteristic samples we know that $\hat{u}a$ must be the prefix of some word in $S_+$. Thus it remains to be shown that $|\hat{u}a| \leq k$.
    
    Because $\hat{u}a \in \mr(L)$, we cannot have $\hat{u} \sim \hat{u}a$ as this would contradict the minimality of $\hat{u}a$, which means $\hat{u} \not\sim \hat{u}a$. Since $S$ is representative for $L$ we know that a $w \in \Sigma^\omega$ exists such that $\hat{u}w, \hat{u}aw \in S$ and $\hat{u}w \in S_+ \iff \hat{u}aw \in S_-$. These words share the common prefix $\hat{u}$ and their opposing position in $S$ guarantees $\hat{u}w \neq \hat{u}aw$. By \autoref{upwordsdiffer} we know that for the common prefix $\hat{u}$ of these two distinct sample words it must hold that $|\hat{u}| < k$. Hence we have $|\hat{v}| \leq k$ for each minimal representative $\hat{v} \in \mr(L)$.
\end{proof}

This bound guarantees that once the shortest escape prefix in length-lexicographic order exceeds a length of $\max_{uv^\omega \in S}|u| + \left(\max_{uv^\omega \in S}|v|\right)^2$, no additional minimal representatives can be discovered. Since the difference between the longest element in $\mtr(L)$ and the longest element in $\mr(L)$ is at most one, every state and transition of $\mc{T}_L$ must be discovered by $\sprout$ before the threshold is exceeded.

In the full proof of \autoref{learnabilitycorollary} we create a sample that is characteristic for both the underlying transition system and the acceptance condition of a target automaton. To prove that our consistency algorithms can accurately reconstruct the acceptance condition in the presence of additional sample words, we consider consistent extensions of characteristic samples which we define below.

\begin{definition}\label{consistentextension}
    Let $S = (S_+, S_-)$ be a sample that is consistent with some language $L \subseteq \Sigma^\omega$. We call $S' = (S'_+, S'_-)$ an \emph{$L$-consistent extension of $S$} if $S_+' \supseteq S_+$ and $S_-' \supseteq S_-$ and $S'$ is consistent with $L$.
\end{definition}

For each acceptance type $\Omega \in \Acctype$ we now define a set words that induce infinity sets based on which an $\Omega$-acceptance condition can be fully reconstructed using the consistency algorithms outlined in \autoref{sec:consistencyalgos}. For every loop $S \subseteq Q \times \Sigma$ in a transition system $\mc{T}$ it is always possible to find an ultimately periodic word $w$ such that $\inf(\rho) = S$ where $\rho$ refers to the unique run of $\mc{T}$ on $w$. This is guaranteed by the fact that the set of states that occur on the loop $S$ must be strongly connected in $\mc{T}$.

\begin{lemma}\label{wordforalltransitions}
    Let $C \subseteq Q$ be a strongly connected set of states in a deterministic transition system $\mc{T}$. There exists an ultimately periodic word $uv^\omega \in \Sigma^\omega$ such that $uv^\omega$ visits all transitions in $C$ infinitely often and $|uv|$ is polynomial in the size of $\mc{T}$.
\end{lemma}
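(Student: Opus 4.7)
The plan is to explicitly construct a polynomial-length cycle through $C$ that traverses every transition with both endpoints in $C$, and to prepend a short reaching word from the initial state of $\mc{T}$. First I would enumerate the (at most $|C| \cdot |\Sigma|$) transitions $t_1, \dotsc, t_m$ with source in $C$ (and, if desired, target in $C$) in some fixed order, writing $t_i = (q_i, a_i)$. Since $C$ is strongly connected, for any two states $p, q \in C$ there is a word $w_{p,q} \in \Sigma^+$ with $\delta^*(p, w_{p,q}) = q$ and $|w_{p,q}| \leq |C|$, obtainable by a BFS in the induced subgraph on $C$.

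To build the periodic part $v$, fix a starting state $p_0 \in C$ and maintain a current state $p$, initially $p = p_0$. For each $t_i = (q_i, a_i)$ in order, append $w_{p, q_i} \cdot a_i$ to $v$ and update $p$ to $\delta(q_i, a_i)$. After processing all transitions, append $w_{p, p_0}$ so that the run on $v$ returns to $p_0$ and $v$ actually forms a cycle from $p_0$ to $p_0$. The length of $v$ is at most $m \cdot (|C| + 1) + |C|$, which is $O(|Q|^2 \cdot |\Sigma|)$, hence polynomial in the size of $\mc{T}$.

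For the prefix $u$, take a shortest word $u \in \Sigma^*$ such that $\delta^*(q_0, u) = p_0$ (length at most $|Q|$); if no state of $C$ is reachable from $q_0$ the lemma is vacuous for $C$ and nothing needs to be shown. By construction, the unique run of $\mc{T}$ on $uv^\omega$ reaches $p_0$ after $u$ and then, in each repetition of $v$, triggers every $t_i$ exactly once. Therefore every transition in $C$ lies in the infinity set of the run, and $|uv|$ is polynomial in the size of $\mc{T}$, as required.

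The only delicate point will be reconciling the definition of strongly connected (which a priori allows the connecting words $w_{p,q}$ to leave $C$) with the assertion that the constructed $v$ traverses transitions of $C$. Since the lemma only requires that the transitions in $C$ be visited infinitely often, not that the run stay inside $C$, this causes no trouble: even if $w_{p,q}$ briefly leaves $C$, each $t_i$ is still taken in every cycle. If a stronger statement were needed (namely $\inf(\rho)$ contained in the transitions internal to $C$), I would additionally observe that one may assume $C$ to be an SCC, in which case shortest paths between states of $C$ automatically remain inside $C$.
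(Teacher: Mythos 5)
Your proof is correct and follows essentially the same route as the paper's: enumerate the (at most $|Q|\cdot|\Sigma|$) transitions internal to $C$, stitch them together with short connecting words guaranteed by strong connectedness, close the cycle back to the starting state, and prepend a short prefix reaching $C$ from the initial state. The only slight inaccuracy is the claim that connecting words of length at most $|C|$ are obtainable by BFS in the induced subgraph on $C$ --- under the paper's definition a strongly connected set only guarantees connecting runs that may leave $C$, so the honest bound is $|Q|$ via BFS in all of \mc{T} --- but, as you observe yourself, this changes neither the validity of the argument nor the polynomial bound.
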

\begin{proof}
    Fix some enumeration $\tau_1, \tau_2, \dotsc, \tau_k$ with $\tau_i = (p_i, a, q_i)$ of all transitions with origin and target in $S$. The number of these transitions is clearly bounded by $|Q|\cdot|\Sigma|$. Let $u \in \Sigma^*$ be some word on which $p_1$ is reached from the initial state. Clearly it holds that $|u| < |Q|$. We now construct words $v_2, v_3, \dotsc, v_k$ such that $v_i = v_i'a$ with $\delta^*(q_i, v_i') = p_{i+1}$ and $\delta(p_{i+1}, a) = q_{i+1}$ and identify a word $v_1 = v'_1a$ such that $\delta^*(q_k, v'_1) = p_1$ and $\delta(p_1, a) = q_1$. Since $|C| \leq |Q|$ we know that $|v_i| \leq |Q|$ for all $i \leq k$. The concatenation $v = v_2v_3\dotsc v_kv_1$ is bounded in length by $|\Sigma|\cdot|Q|^2$ and forms a closed loop that visits all transitions in $C$. Thus the word $uv^\omega$ visits all transitions in $C$ infinitely often and is polynomial in the size of $\mc{T}$.
\end{proof}


In the following we assume $\mc{T}$ to be a deterministic (partial) transition system in which the respective acceptance conditions are defined. To construct a sample $S_F = (S_+, S_-)$ characterizing a Büchi condition $F \subseteq (Q \times \Sigma)$ we first remove from $\mc{T}$ all transitions in $F$ and decompose the resulting transition system into its SCCs $C_1, \dotsc, C_k$. For each $C_i$ we identify a word $w_i$ visiting all transitions in $C_i$ infinitely often, which is then added to $S_-$. As $\conbuchi$ implicitly assumes $Q\times\Sigma$ to be positive, we set $S_+ = \emptyset$.

\begin{lemma}\label{completebuchisampleiscomplete}
    Let $\mc{A} = \langle\mc{T},F\rangle$ be a DBA. The size of $S_F$ is polynomial in $|Q|$ and for the partial condition $\mc{H} = (\mc{H}_0, \mc{H}_1)$ induced by any $L(\mc{A})$-consistent extension of $S_F$ we have $L(\mc{T}, F) = L(\mc{T}, F')$ with $F' = \conbuchi(\mc{H}_0, \mc{H}_1)$.
\end{lemma}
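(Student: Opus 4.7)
The plan is to prove the lemma in three stages: bound the sample size, show that $\conbuchi$ returns some $F' \supseteq F$, and establish $L(\mc{T}, F) = L(\mc{T}, F')$. For the size bound, $S_+ = \emptyset$ and $S_-$ contains at most $|Q|$ words, one per SCC of $\mc{T}$ after removing all $F$-transitions. Each $w_i$ can be written as $u_i v_i^\omega$ where $u_i$ is a path in $\mc{T}$ of length at most $|Q|$ reaching some state of $C_i$, and $v_i$ is a closed loop traversing all non-$F$ transitions within $C_i$; by the argument underlying \autoref{wordforalltransitions}, $|v_i|$ can be chosen polynomial in $|Q|$. Summing gives $|S_F|$ polynomial in $|Q|$.

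Now let $\mc{H} = (\mc{H}_0, \mc{H}_1)$ be the partial condition induced by an arbitrary $L(\mc{A})$-consistent extension $S'$ of $S_F$. Every $N \in \mc{H}_1$ is the infinity set of some word in $S'_- \subseteq \Sigma^\omega \setminus L(\mc{A})$, so the Büchi semantics forces $N \cap F = \emptyset$; consequently $\bigcup \mc{H}_1 \cap F = \emptyset$. Dually, every $P \in \mc{H}_0$ is the infinity set of some word in $L(\mc{A})$, so $P \cap F \neq \emptyset$, which gives $P \not\subseteq \bigcup \mc{H}_1$. Therefore $\conbuchi$ does not return $\bot$ and yields $F' = (Q \times \Sigma) \setminus \bigcup \mc{H}_1 \supseteq F$.

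It remains to show $L(\mc{T}, F) = L(\mc{T}, F')$. For any $w \in \Sigma^\omega$ with run $\rho$ and infinity set $X = \inf(\rho)$, the inclusion $F \subseteq F'$ gives $L(\mc{T}, F) \subseteq L(\mc{T}, F')$ for free. For the other direction, assume $X \cap F = \emptyset$, so that $X$ consists entirely of non-$F$ transitions. The main obstacle is to argue $X \subseteq \bigcup \mc{H}_1$; the key observation is that the states appearing on transitions of $X$ are visited infinitely often by $\rho$ and are therefore reachable from one another using only transitions within $X$, hence strongly connected in the restriction of $\mc{T}$ to non-$F$ transitions. They consequently lie inside a single SCC $C_i$, and $X$ is contained in the set of non-$F$ transitions within $C_i$. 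By construction the witness word $w_i$ lies in $S_- \subseteq S'_-$ and has infinity set precisely this set; hence it occurs in $\mc{H}_1$, giving $X \subseteq \bigcup \mc{H}_1$ and thus $X \cap F' = \emptyset$, as required.
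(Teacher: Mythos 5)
Your proof is correct and follows essentially the same route as the paper's: bound the sample by one word per SCC of the $F$-free restriction of $\mc{T}$ (with lengths controlled by \autoref{wordforalltransitions}), get $L(\mc{T},F) \subseteq L(\mc{T},F')$ from the fact that negative infinity sets avoid $F$, and get the converse by observing that any rejected word's infinity set is a strongly connected set of non-$F$ transitions, hence contained in some SCC $C_i$ whose full transition set lies in $\mc{H}_1$. The only difference is that you explicitly verify that $\conbuchi$ does not return $\bot$ (using $L(\mc{A})$-consistency of the extension for the positive sets), a step the paper leaves implicit --- a welcome addition rather than a divergence.
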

\begin{proof}
    It is not difficult to see that $S_F$ contains at most one word per SCC and is thus polynomial in $|Q|$. Further by \autoref{wordforalltransitions} each of these words is polynomial in the size of $\mc{T}$.

    Assume $w \in L(\mc{T}, F)$ then for $D = \inf(\rho)$ where $\rho$ is the unique run of $\mc{T}$ on $w$ we have that $D \cap F \neq \emptyset$. For all negative elements $C_1, \dotsc, C_k \in \mc{H}_1$ we have $C_i \cap F = \emptyset$ and hence $D \not\subseteq (C_1 \cup \dotsc \cup C_k)$. But as $F' = (Q \times \Sigma) \setminus (C_1 \cup \dotsc \cup C_k)$ we must have $D \cap F' \neq \emptyset$ and consequently $w \in L(\mc{T}, F')$.

    For the opposite direction assume $w \notin L(\mc{T}, F)$ which means for $D$ as chosen before we have $D \cap F = \emptyset$. This means that $D \subseteq C_i$ for some SCC of $(Q \times \Sigma) \setminus F$ and as $F' \subseteq (Q \times \Sigma) \setminus C_i$, we have $D \cap F' = \emptyset$ and $w \notin L(\mc{T}, F')$.
\end{proof}

To define the characteristic sample $S_\mc{B} = (S_+, S_-)$ of a generalized Büchi condition $\mc{B} = \{F_1, \dotsc, F_k\}$ with $F_i \subseteq (Q \times \Sigma)$ we proceed in a similar way. For each acceptance component $F_i$ we remove from $\mc{T}$ all transitions in $F_i$ to obtain the transition system $\mc{T}_i$, which is then decomposed into its SCCs $S_1, \dotsc, S_k$. For each $S_j$ we add an ultimately periodic word visiting all transitions in $S_j$ infinitely often to $S_-$. Additionally for each accepting SCC $C$ of $\mc{T}$, a word visiting all transitions in $C$ is added to $S_+$.

\begin{lemma}\label{completegenbuchisampleiscomplete}
    Let $\mc{A} = \langle\mc{T},\mc{B}\rangle$ be a generalized Büchi automaton then $S_\mc{B}$ for $\mc{B} = (F_1, \dotsc, F_k)$ is polynomial in the size of $\mc{A}$ and $\mc{B}$. For the partial condition $\mc{H} = (\mc{H}_0, \mc{H}_1)$ induced by any $L(\mc{A})$-consistent extension of $S_\mc{B}$ we have $L(\mc{T}, \mc{B}) = L(\mc{T}, \mc{B}')$ where $\mc{B}' = (F_1', \dotsc, F'_l)$ represents the generalized Büchi condition computed by $\congenbuchi(\mc{H}_0, \mc{H}_1)$.
\end{lemma}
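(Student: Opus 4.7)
The plan is to establish the size bound first and then prove language equivalence by two inclusions, mirroring the argument for the Büchi case in \autoref{completebuchisampleiscomplete}. For the size of $S_\mc{B}$, observe that the construction adds at most one positive word per SCC of $\mc{T}$ and at most one negative word per SCC of each $\mc{T}_i$, giving $O((k+1)|Q|)$ words in total. Each word has length polynomial in $|Q|$ and $|\Sigma|$ by \autoref{wordforalltransitions}, so $|S_\mc{B}|$ is polynomial in $|\mc{A}|$ and $k=|\mc{B}|$.

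Let $\mc{H}=(\mc{H}_0,\mc{H}_1)$ be the partial condition induced by any $L(\mc{A})$-consistent extension of $S_\mc{B}$, let $N_1,\dotsc,N_l$ be the $\subseteq$-maximal sets in $\mc{H}_1$, and write $\mc{B}'=\{F_j' : j\leq l\}$ with $F_j'=(Q\times\Sigma)\setminus N_j$. Before the inclusions I would check that $\congenbuchi$ does not return $\bot$: if $P\in\mc{H}_0$ satisfied $P\subseteq N_j$ for some $j$, then by $L(\mc{A})$-consistency $P$ satisfies $\mc{B}$ and $N_j$ does not, so some index $i$ gives $N_j\cap F_i=\emptyset$, contradicting $P\cap F_i\neq\emptyset$.

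For $L(\mc{T},\mc{B})\subseteq L(\mc{T},\mc{B}')$, pick $w\in L(\mc{T},\mc{B})$ with infinity set $D$ and suppose for contradiction that $D\subseteq N_j$ for some $j$. Then the same argument as above yields an index $i$ with $D\cap F_i\subseteq N_j\cap F_i=\emptyset$, contradicting $D\cap F_i\neq\emptyset$; hence $D\cap F_j'\neq\emptyset$ for all $j$ and $w\in L(\mc{T},\mc{B}')$. For the converse, let $w\notin L(\mc{T},\mc{B})$; the case where $w$ has no infinite run is immediate, so assume the run has infinity set $D$. Then $D\cap F_i=\emptyset$ for some $i$, so all transitions of $D$ survive in $\mc{T}_i$ and the states appearing in $D$ form a strongly connected subset of some SCC $S$ of $\mc{T}_i$. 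The negative sample word associated with $S$ induces in $\mc{T}$ the infinity set $C$ consisting of exactly those $\mc{T}_i$-transitions between $S$-states, and $D\subseteq C\in\mc{H}_1$. Picking a maximal $N_j\supseteq C$ then gives $D\subseteq N_j$, so $D\cap F_j'=\emptyset$ and $w\notin L(\mc{T},\mc{B}')$.

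The main subtlety I anticipate is keeping the two transition systems $\mc{T}$ and $\mc{T}_i$ separate in the bookkeeping: negative sample words are constructed via SCC decompositions of $\mc{T}_i$, but their infinity sets must be computed in the original $\mc{T}$, and one must verify that these infinity sets really dominate every non-accepting loop $D$ with $D\cap F_i=\emptyset$. Once this is in place, the argument is essentially a generalized-Büchi variant of \autoref{completebuchisampleiscomplete}, complicated only by the fact that negative infinity sets are no longer closed under union, which is precisely why $\congenbuchi$ keeps one component per maximal negative set instead of collapsing them into one.
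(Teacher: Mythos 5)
Your proof is correct and follows essentially the same route as the paper's: the same SCC-counting size bound via \autoref{wordforalltransitions}, and the same two inclusions driven by the fact that every negative infinity set lies inside the transition set of an SCC of some $\mc{T}_i$. If anything, your write-up is slightly more careful than the paper's—you explicitly check that $\congenbuchi$ does not return $\bot$, and in the backward inclusion you pass to a \emph{maximal} negative set $N_j \supseteq C$, whereas the paper asserts that $(Q\times\Sigma)\setminus C$ itself appears as some $F_j'$, which strictly speaking holds only when $C$ is maximal.
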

\begin{proof}
    The sample $S_\mc{B}$ contains a word for each SCC of $\mc{T}$ and each SCC of a $\mc{T}_i$. Clearly there are at most $|Q|$ SCCs in $\mc{T}$ or any sub-transition system obtained by removing transitions. Since we compute $k$ such sub-transition systems (one for each of the $k$ sets in $\mc{B}$), the overall size of $S_\mc{B}$ is clearly polynomial in $\mc{A}$ and $\mc{B}$. Additionally, the length of any word added in this way is polynomial in the size of $\mc{T}$ by \autoref{wordforalltransitions}.

    Let $w \in L(\mc{T}, \mc{B})$ and denote by $D$ the infinity set of the unique run of $\mc{T}$ on $w$. We have $D \cap F_i \neq \emptyset$ for $i \leq k$. Assume now that $w \notin L(\mc{T}, \mc{B}')$ which would mean $D \cap ((Q \times \Sigma) \setminus N_i) = \emptyset $ for some $\subseteq$-maximal negative loop $N_i \in \mc{H}_1$ and thus $D \subseteq N_i$. Each $N_i$ must be an SCC in a $\mc{T}_j$ which is obtained by removing from $\mc{T}$ all transitions in $F_j$. This, however, would mean that neither $N_i$ nor $D$ can contain any transitions belonging to $F_j$, which is a contradiction to $D$ satisfying $\mc{B}$.

    For the other direction assume that $w \notin L(\mc{T}, \mc{B})$ which means for $D$, the infinity set of the unique run of $\mc{T}$ on $w$, there exists some index such that $D \cap F_i = \emptyset$. Thus $D \subseteq C$ for some SCC $C$ of $\mc{T}_i$, which we obtain by removing $F_i$ from $\mc{T}$. But then since $S_-$ contains an ultimately periodic word inducing $C$ as its infinity set, we know that there must exist some $F'_j = (Q \times \Sigma) \setminus C$. Clearly we have $D \cap F'_j = \emptyset$, which then entails $w \notin L(\mc{T}, \mc{B}')$.
\end{proof}

Conveying sufficient information about a parity condition $\kappa : Q \times \Sigma \to C$ in a sample $S_\kappa = (S_+, S_-)$ requires us to identify the maximal positive and negative subloops, which might be nested into each other. Our approach is similar to the one used in~\cite{angluinfisman} to define a characteristic sample for a parity condition and the decompositions used in~\cite{computingrabinindex} for the minimization of a parity condition. The idea is to first decompose the full transition system into its strongly connected components $C_1, \dotsc, C_k$. For each $C_i$ we then identify a word $w_i$ visiting all transitions in $C_i$ infinitely often. If the smallest priority on those transitions is even, $w_i$ is added to $S_+$, otherwise it is placed in $S_-$. Subsequently all transitions with the minimal priority are removed to obtain the transition system $\mc{T}_1$. We proceed with $\mc{T}_1$ in a similar way, by first decomposing it into its SCCs. For each of these SCCs we identify a word that visits all transitions and add it to the sample based on the least priority it sees. By repeating the removal and decomposition until the largest priority in $C$ is reached, we cover each maximal positive and negative loop in $\mc{A}$, which allows $\conpar$ to correctly reconstruct the acceptance condition of $\mc{A}$.

\begin{lemma}\label{completeparitysampleiscomplete}
    For a DPA $\mc{A} = \langle\mc{T},\kappa\rangle$ the sample $S_\kappa$ is polynomial in the size of $\mc{A}$. Applying $\conpar$ to the partial condition $\mc{H} = (\mc{H}_0, \mc{H}_1)$ induced by an $L(\mc{A})$-consistent extension of $S_\kappa$ returns a parity function $\kappa'$ such that $L(\mc{T},\kappa) = L(\mc{T},\kappa')$.
\end{lemma}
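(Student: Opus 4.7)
The plan is to prove the two parts of the statement separately: the polynomial size bound on $S_\kappa$, and language equivalence of the reconstructed parity function $\kappa'$ with $\kappa$.

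For the size bound, I would observe that the sample construction performs at most $|C|$ iterations, one per priority used in $\kappa$. At each iteration, the transition system $\mc{T}_i$ has at most $|Q|$ SCCs, so at most $|Q|$ words are added per iteration, giving at most $|C|\cdot|Q|$ total sample words. By \autoref{wordforalltransitions}, each such word has length polynomial in $|Q|$ and $|\Sigma|$, making the total sample size polynomial in the size of $\mc{A}$.

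For language equivalence I would first note that $\kappa$ itself is consistent with any $L(\mc{A})$-consistent extension $\mc{H}$ of $S_\kappa$, so by \autoref{conparcorrectness} $\conpar$ does not fail and returns an optimal parity function $\kappa'$ using the same number of priorities as $\kappa$. The core of the argument is to show that the Zielonka path $(Z_0, \sigma_0), \dotsc, (Z_{n-1}, \sigma_{n-1})$ produced by $\conpar$ agrees with that of $\kappa$ (in which $Z_i^\kappa$ consists of all transitions with priority at least $\min(\kappa) + i$, assuming $\kappa$ is gap-free) on every strongly connected subset $D \subseteq Q \times \Sigma$ of $\mc{T}$. Since every infinity set of a run in $\mc{T}$ is such a strongly connected set, this yields $L(\mc{T},\kappa) = L(\mc{T},\kappa')$.

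I would prove the agreement by induction on the Zielonka level $i$. The base case is straightforward when $Q\times\Sigma$ is strongly connected and hence classified in $\mc{H}$; otherwise one appeals to the $\mc{H}^p$/$\mc{H}^n$ mechanism described after \autoref{conparcorrectness}. For the inductive step, a strongly connected $D \subseteq Z_{i-1}^\kappa$ whose $\kappa$-minimum priority is $c_k$ with $k \geq i$ is contained in some SCC of $\mc{T}_i$ whose minimum priority is likewise $c_i$ (forced by the shared minimum-priority transition in $D$), and this SCC lies in $\mc{H}$ classified $\sigma_i$; hence $D\subseteq Z_i$. Conversely, any strongly connected $D \subseteq Z_i$ is covered by $\sigma_i$-classified subsets of $Z_{i-1}^\kappa$, each of which must have $\kappa$-minimum priority of parity equal to that of $i$, forcing a minimum priority of at least $c_i$ and thus $D \subseteq Z_i^\kappa$.

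The main obstacle will be carefully handling transitions that are transient in $\mc{T}$ or in some intermediate $\mc{T}_j$. Such transitions may never appear in any classified set of $\mc{H}$, so the Zielonka sets $Z_i$ computed by $\conpar$ and the canonical $Z_i^\kappa$ may disagree on them. The resolution is that these transitions cannot belong to any infinity set of any run, so the discrepancy does not affect the induced language---which is precisely why the inductive invariant must be phrased in terms of strongly connected subsets rather than as pointwise equality of the Zielonka sets.
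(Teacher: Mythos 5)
Your size bound and your opening observation---that $\kappa$ itself is consistent with the induced partial condition $\mc{H}$, so $\conpar$ cannot fail---are correct and match the paper. The genuine gap is in your core invariant: the claimed level-by-level agreement between the Zielonka path $(Z_0,\sigma_0),\dotsc,(Z_{n-1},\sigma_{n-1})$ computed by $\conpar$ and the chain $Z_i^\kappa$ of $\kappa$ is false in general, because gap-freeness of $\kappa$ does not imply that $\kappa$ is \emph{optimal} on $\mc{T}$, and the lemma must hold for an arbitrary $\kappa$. Concretely, let $Q=\{p,q\}$, $\Sigma=\{a,b\}$ with $\delta(p,a)=q$, $\delta(q,a)=p$, $\delta(p,b)=p$, $\delta(q,b)=\bot$, and $\kappa(p,a)=0$, $\kappa(q,a)=1$, $\kappa(p,b)=2$; this $\kappa$ is gap-free. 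Every cycle has even minimal priority, so the sample construction yields only positive words: one for the SCC $\{(p,a),(q,a),(p,b)\}$ of $\mc{T}$ (minimum $0$) and one for the SCC $\{(p,b)\}$ of $\mc{T}_1$ and of $\mc{T}_2$ (minimum $2$). Hence $\mc{H}_1=\emptyset$, and $\conpar$ terminates after one step with a single-level path, i.e.\ $\kappa'$ is the constant-$0$ condition---which already refutes your side claim that $\kappa'$ uses the same number of priorities as $\kappa$ ($1$ versus $3$). Now $D=\{(p,b)\}$ is strongly connected with minimal priority $c_2=2$, so your invariant would require $D\subseteq Z_1$ and $D\subseteq Z_2$, but these levels do not exist (the computed $Z_1$ is empty). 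The inner steps of your induction break at the same place: the SCC of $\mc{T}_i$ containing $D$ need not have minimal priority $c_i$ (here the SCC of $\mc{T}_1$ containing $D$ is $D$ itself, with minimum $2$), and consequently it need not be classified $\sigma_i$ in $\mc{H}$ (here it is classified $0=\sigma_0$, not $\sigma_1=1$). The lemma's conclusion still holds in this example; it is your intermediate claim that fails, so the induction does not go through.

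Note also that you cannot repair this by assuming optimality of $\kappa$ without loss of generality: the sample $S_\kappa$ is built from the given $\kappa$, so replacing $\kappa$ by an equivalent optimal condition changes the object the lemma is about. What survives of your plan is its last sentence: the invariant has to be the classification statement itself (for every strongly connected $D$, the $\kappa'$-classification of $D$ equals the parity of $\min(\kappa(D))$), not equality of Zielonka levels. The paper proves exactly this, locally and without aligning the two paths: for $w\in L(\mc{T},\kappa)$ with infinity set $D'$ and even $m=\min(\kappa(D'))$, the set $D'$ lies inside the SCC $D$ of the system restricted to priorities at least $m$; this $D$ is a positive sample loop, and every negative sample loop contained in $D$ has odd minimal priority strictly greater than $m$, hence cannot contain $D'$ (which uses a priority-$m$ transition). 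Together with the correctness of $\conpar$ on sample loops (\autoref{parityconsistencycorrectclassification}), this forces $\kappa'$ to classify $D'$ positively, and rejected words are handled symmetrically. That per-run argument is robust against non-optimal $\kappa$ precisely because it never compares the lengths or levels of the two Zielonka paths.
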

\begin{proof}
    During the computation of $S_\kappa$, we construct $k$ smaller transition systems, each consisting of at most $|Q|$ SCCs. Since by \autoref{wordforalltransitions} the length of each added word is polynomial in the size of $\mc{A}$ and thus the size of $S_\kappa$ is thus clearly polynomial in the size of $\mc{A}$.

    Let $w \in L(\mc{T}, \kappa)$ then $m = \min(\kappa(D'))$ is even for the strongly connected set $D' = \inf(\rho)$ where $\rho$ refers to the unique run of $\mc{T}$ on $w$. This means that $D' \subseteq D$ for an SCC $D$ in $\mc{T}_m$ as $\mc{T}_m$ contains all transitions with priority greater or equal to $m$. If $\kappa'$ falsely classified $D'$ as negative then we would have $D' \subseteq N$ for a maximal negative subloop $N \subseteq D$. As $m$ is even, this means $\min(\kappa(N)) > m$. This is a contradiction because at least one state in $D'$ must have priority $m$ and thus $D' \not\subseteq N$ for all negative subloops $N$ of $D$. This by definition means that $D'$ is classified as positive by $\kappa'$ and thus $w \in L(\mc{T}, \kappa')$. The opposite direction for a $w \notin L(\mc{T}, \kappa)$ for which least priority that is seen infinitely often is odd can be shown in an analogous way due to the symmetry of parity conditions.
\end{proof}

Similar to parity conditions, we define the characteristic sample for a Rabin condition based on decompositions of restricted transition systems. We begin by removing the set of all transitions that belong to $E_i$ of a pair $(E_i, F_i)$ and decompose the resulting transition system into its SCCs $C_1, \dotsc, C_k$. If the set of all transitions $K_i$ in such an SCC satisfies $\mc{R}$, we add a word $w_i$ inducing $K_i$ to $S_+$, otherwise $w_i$ is added to $S_-$. For each accepting $K_i$ we then remove all transitions in an $F_j$ such that $K_i \cap E_j = \emptyset$ at the same time and decompose the resulting transition system into its SCCs $D_1, \dotsc, D_l$. These are the maximal negative subloops of $K_i$ and for each $D_j$ a word visiting all transitions in $D_j$ is added to $S_-$.

\begin{lemma}\label{completerabinsampleiscomplete}
    Let $\mc{A} = \langle\mc{T},\mc{R}\rangle$ be a DRA with acceptance component $\mc{R}$ and $\mc{H} = (\mc{H}_0, \mc{H}_1)$ by the partial condition induced by some $L(\mc{A})$-consistent extension of $S_\mc{R} = (S_+, S_-)$. For the Rabin condition $\mc{R}'$ constructed by $\conrab$ we have $L(\mc{T},\mc{R}) = L(\mc{T},\mc{R}')$ and the size of $S_\mc{R}$ is polynomial in the size of $\mc{A}$.
 \end{lemma}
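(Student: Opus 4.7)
My plan splits naturally into a size bound and the language equality, with the latter requiring the bulk of the argument.

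For the size bound I will simply count the words produced by the construction. For each of the $|\mc{R}|$ Rabin pairs we decompose $\mc{T}\setminus E_i$ into at most $|Q|$ SCCs and add one word per SCC (to $S_+$ or $S_-$ depending on whether the full SCC-transition set satisfies $\mc{R}$); for each accepting $K_i$ we perform one further decomposition of $K_i\setminus\bigcup_j F_j$ (over the pairs $j$ with $K_i\cap E_j=\emptyset$) into at most $|Q|$ SCCs, each contributing one further word to $S_-$. This yields $\mathcal{O}(|\mc{R}|\cdot|Q|)$ sample words, and by \autoref{wordforalltransitions} each word has length polynomial in the size of $\mc{A}$, so $|S_\mc{R}|$ is polynomial in $|\mc{A}|$.

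For $L(\mc{T},\mc{R})\subseteq L(\mc{T},\mc{R}')$ I will proceed as follows. Take $w\in L(\mc{T},\mc{R})$ with infinity set $D$ that satisfies $\mc{R}$ via some pair $(E_i,F_i)$, i.e.\ $D\cap E_i=\emptyset$ and $D\cap F_i\neq\emptyset$. Let $K$ be the SCC of $\mc{T}\setminus E_i$ that contains $D$; then $K\cap E_i=\emptyset$ and $K\cap F_i\supseteq D\cap F_i\neq\emptyset$, so $K$ also satisfies $\mc{R}$ via pair $i$ and is therefore added to $S_+$ by the construction, putting $K\in\mc{H}_0$. The pair $(E_K,F_K)$ produced by $\conrab$ then satisfies $D\cap E_K=\emptyset$, and picking any $t\in D\cap F_i$ I claim $t\in F_K$: for any maximal negative subloop $N_j\subseteq K$ in $\mc{H}$ we have $N_j\cap E_i\subseteq K\cap E_i=\emptyset$, and because $N_j\in\mc{H}_1$ is $L(\mc{A})$-consistent it fails $\mc{R}$, hence $N_j\cap F_i=\emptyset$, so $t\notin N_j$. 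Thus $D\cap F_K\neq\emptyset$ and $w\in L(\mc{T},\mc{R}')$.

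For the reverse inclusion I will argue by contradiction: suppose $D$ satisfies some pair $(E_P,F_P)\in\mc{R}'$ but does not satisfy $\mc{R}$. Since $P\in\mc{H}_0$ is induced by a positive sample word, $P$ satisfies $\mc{R}$ via some pair $(E_{i^*},F_{i^*})$. Because $D\subseteq P$, for every pair $j$ with $P\cap E_j=\emptyset$ we get $D\cap E_j=\emptyset$, and the negativity of $D$ forces $D\cap F_j=\emptyset$. Taking $K$ to be the SCC of $\mc{T}\setminus E_{i^*}$ containing $P$ (which exists because $P$ is strongly connected and avoids $E_{i^*}$), the base construction processes $K$ and produces the SCCs $D_1,\dots,D_l$ of $K\setminus\bigcup_{j:\,K\cap E_j=\emptyset}F_j$, each of which is placed in $S_-$. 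Since $D\subseteq K$ and $D$ avoids all $F_j$ with $K\cap E_j=\emptyset$, strong connectedness of $D$ forces $D\subseteq D_m$ for some $m$, and hence $D$ lies inside some maximal $N_{j'}\in\mc{H}_1$ with $N_{j'}\subseteq P$, contradicting $D\cap F_P\neq\emptyset$.

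The main obstacle will be the reverse inclusion: ensuring that the negative subloops captured by the base sample around the ``canonical'' SCC $K$ are actually visible to $\conrab$'s computation of $F_P$, i.e.\ appear as (subsets of) maximal $\mc{H}_1$-sets contained in $P$. The cleanest way to organise this is to route the argument through the enclosing $K\in\mc{H}_0$ from the base sample, so that the nested decomposition of $K$ provides the required witness $D_m\in\mc{H}_1$, and then transfer the witness back to $P$ using $D\subseteq P\subseteq K$.
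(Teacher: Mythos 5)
Your size bound and your proof of the inclusion $L(\mc{T},\mc{R})\subseteq L(\mc{T},\mc{R}')$ are correct. The forward inclusion is in fact handled more robustly than in the paper's own proof: you derive $N_j\cap F_i=\emptyset$ for \emph{every} $N_j\in\mc{H}_1$ with $N_j\subseteq K$ purely from $L(\mc{A})$-consistency (such an $N_j$ avoids $E_i$, so failing $\mc{R}$ forces it to avoid $F_i$), which covers negative sets contributed by the extension; the paper justifies this step only for the negative sets created by the base construction.

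The reverse inclusion, however, contains a genuine gap, exactly at the step you flagged as the main obstacle. From $D\subseteq D_m$ with $D_m\in\mc{H}_1$ and $D_m\subseteq K$ you conclude ``hence $D$ lies inside some maximal $N_{j'}\in\mc{H}_1$ with $N_{j'}\subseteq P$''. This inference is invalid: $D_m$ need not be a subset of $P$, and $D_m\cap P$ need not lie in any $\mc{H}_1$-set, so $\mc{P}(P)\cap\mc{H}_1$ can even be empty, in which case $\conrab$ sets $F_P=P$ and the pair $(E_P,F_P)$ accepts $D$. This situation really occurs. Take $Q=\{1,2\}$ with initial state $1$, $\Sigma=\{a,b,c\}$, $\delta(1,a)=2$, $\delta(2,a)=1$, $\delta(1,b)=1$, $\delta(2,b)=2$, $\delta(1,c)=1$, and $\mc{R}=\{(\emptyset,\{(1,b)\})\}$. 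The base sample consists of one positive word inducing $K$ (all five transitions) and one negative word inducing $D_1=K\setminus\{(1,b)\}$. Extend it by the positive word $(baba)^\omega$, whose infinity set is $P=\{(1,a),(2,a),(1,b),(2,b)\}$; this extension is $L(\mc{A})$-consistent. Since $(1,c)\in D_1\setminus P$, no set of $\mc{H}_1$ is contained in $P$, so $\conrab$ outputs the pair $\bigl((Q\times\Sigma)\setminus P,\,P\bigr)$, and the word $(aba)^\omega$, whose infinity set is $\{(1,a),(2,b),(2,a)\}\subseteq P$, is accepted by $\mc{R}'$ although it is rejected by $\mc{R}$. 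So the statement your ``hence'' needs---that every strongly connected $D\subseteq P$ failing $\mc{R}$ is covered by negative sample sets inside $P$---is simply false for positive sets $P$ coming from the extension, and no transfer of the witness through the enclosing $K$ can repair it. You should not read this as a defect of your route alone: the paper's own proof of this direction performs the nested negative decomposition on $P$ itself and asserts that the resulting SCCs are negative sample sets contained in $P$, which is justified only when $P$ is one of the base SCCs $K_i$; for extension sets $P$ it breaks at the same point, and the example above applies to it verbatim, so the lemma as stated (quantifying over arbitrary $L(\mc{A})$-consistent extensions) needs a stronger hypothesis or a modified construction, not just a better proof.
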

\begin{proof}
    Each Rabin pair induces at most $2\cdot|Q|$ sample words. Since by \autoref{wordforalltransitions} we know that the length of each sample word is polynomial in the size of $\mc{T}$, we can conclude that the size of $S_\mc{R}$ is polynomial in $\mc{A}$ and $\mc{R}$.

    Let $w \in L(\mc{T}, \mc{R})$ then for $C = \inf(\rho)$ where $\rho$ refers to the unique run of $\mc{T}$ on $w$ we have that $C \cap E_i = \emptyset$ and $C \cap F_i \neq \emptyset$ for some $i \leq |\mc{R}|$. This means $C \subseteq D$ for the set of all transitions $D$ in some SCC of the transition system obtained by removing from $\mc{T}$ all transitions in $E_i$. Because $C' \cap F_i \neq \emptyset$ we know that $C$ satisfies $\mc{R}$ and hence $\mc{R}'$ contains a pair $(E_P, F_P)$ with $E_P = (Q \times \Sigma) \setminus D$ and $F_P = D \setminus (N_1 \cup \dotsc \cup N_k)$ for the maximal negative subloops $N_i$ of $D$. Since clearly $E_P \cap C = \emptyset$ it suffices to show that $F_P \cap C \neq \emptyset$.\\
    Assume to the contrary that $F_P \cap C = \emptyset$, then $C \subseteq N_i$ for some maximal negative subloop of $D$. Each negative subloop of $D$ present in the sample arose by by removing all transitions in an $F_j$ such that $D \cap F_j \neq \emptyset$, which would mean that $N_i \cap F_j = \emptyset$. But then as $C \subseteq N_i$ we would have $C \cap F_i = \emptyset$, which is a contradiction to $C$ being the infinity set of the run on a word in $L(\mc{T}, \mc{R})$.
    
    For the other direction let $w \in L(\mc{T}, \mc{R}')$ and $C = \inf(\rho)$ where $\rho$ refers to the unique run of $\mc{T}$ on $w$. Since $w$ is accepted there exists a pair $(E_P, F_P)$ in $\mc{R}'$ for some $P \in \mc{H}_0$ such that $E_P \cap C = \emptyset$ and $F_P \cap C \neq \emptyset$. By construction we know $E_P = Q \setminus P$ and $F_P = P \setminus (N_1 \cup \dotsc \cup N_k)$, where each $N_i$ is a maximal negative subloop of $P$. Thus we have $C \subseteq P$ and since $C \cap F_P \neq \emptyset$ also $C \neq \emptyset$.\\Because $P$ satisfies $\mc{R}$, there must exist a pair $(E_i, F_i) \in \mc{R}$ such that $E_i \cap P = \emptyset$ and $F_i \cap P \neq \emptyset$ and as $C \subseteq P$ we clearly have $C \cap E_i = \emptyset$. Assume now that $F_i \cap C = \emptyset$ and $C$ does not satisfy any other pair in $\mc{R}$. Then $C \subseteq N$ for the set of all transitions $N$ of some SCC obtained by removing each transition in an $F_j$ for which $P \cap E_j = \emptyset$. By our construction of the characteristic sample that means a negative sample word inducing $N \subseteq P$ would be added. This is a contradiction since then $F_P$ would not contain any transition in $N$ and hence $D \cap F_P = \emptyset$.
\end{proof}

While we already established that characteristic samples for the various acceptance conditions are polynomial in size, it was not yet shown that the same is true for samples which are characteristic for the canonical right congruence of some IRC language. To be able to do that, we need to ensure that there exist ultimately periodic words of polynomial length that distinguish pairs of equivalence classes. The following result follows directly from Proposition 5 in~\cite{angluinfisman}.

{
\begin{proposition}\label{distinguishingwordbound}
    Let $L$ be an $\Omega$-IRC language for a $\Omega \in \Acctype$. Any two distinct equivalence classes of $\sim_L$ can be separated by an ultimately periodic word that is polynomial in $|\sim_L|$.
\end{proposition}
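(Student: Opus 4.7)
The plan is to reduce the statement to a standard lasso-witness argument via a product construction on the congruence automaton. Let $u,v \in \Sigma^*$ be representatives of the two distinct equivalence classes of $\sim_L$ and let $\mc{A}_L = \langle\mc{T}_L,\mc{C}_L\rangle$ be the congruence automaton of $L$, so $|Q_L| = |\sim_L|$. Because $\mc{A}_L$ has an informative right congruence, the words $u$ and $v$ reach two distinct states $q_u$ and $q_v$ of $\mc{A}_L$, and consequently the residual languages $L_u = u^{-1}L$ and $L_v = v^{-1}L$ are not equal. Any word in the symmetric difference $L_u \triangle L_v$ separates $[u]_{\sim_L}$ from $[v]_{\sim_L}$, so it remains to exhibit such a word that is ultimately periodic and polynomial in $|\sim_L|$.

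To this end, I would form the product transition system $\mc{T}_L \times \mc{T}_L$ with initial state $(q_u, q_v)$; it has at most $|\sim_L|^2$ states, and every infinite run on a word $w$ projects to the pair of runs of $\mc{A}_L$ on $w$ starting from $q_u$ and from $q_v$. The projections of the infinity set of the product run are exactly the infinity sets $X_u$ and $X_v$ of the two component runs, so the requirement that \emph{exactly one} of $X_u$ and $X_v$ satisfies $\mc{C}_L$ can be encoded as an acceptance condition on the product. For each $\Omega \in \Acctype$ this encoding can be chosen of size polynomial in $|\mc{C}_L|$: it is a disjunction over which component is accepting of a conjunction of one positive and one negative copy of $\mc{C}_L$, and each of these combinations is expressible by a Rabin (or, for Büchi, even simpler) condition of polynomial size.

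Nonemptiness of the resulting product $\omega$-automaton is guaranteed by $L_u \neq L_v$, and the standard SCC-based analysis of $\omega$-regular emptiness then yields a reachable SCC of the product containing a closed loop whose transition set satisfies the combined condition. An ultimately periodic witness $xy^\omega$ is obtained by letting $x$ reach this SCC from $(q_u, q_v)$ and $y$ traverse the loop; by \autoref{wordforalltransitions} both $|x|$ and $|y|$ can be bounded polynomially in the number of states of the product, hence polynomially in $|\sim_L|$. The main obstacle is the uniform verification for all four acceptance types that the combined product condition has polynomial size and admits a polynomial-size lasso witness; this is exactly the content of Proposition~5 in~\cite{angluinfisman}, to which the statement directly reduces.
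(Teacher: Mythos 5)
Your proposal is correct in substance, but you should know how little the paper itself does here: its entire proof of this proposition is the single remark that the result ``follows directly from Proposition 5 in~\cite{angluinfisman}''. Your closing sentence, which defers the remaining verification to that same proposition, therefore already coincides with the paper's whole argument, and everything before it is a self-contained reconstruction that the paper omits. That reconstruction is the standard one and its skeleton is sound: the product of the congruence automaton with itself has $|\sim_L|^2$ states, the symmetric difference $u^{-1}L \mathbin{\triangle} v^{-1}L$ is nonempty and is recognized on this product, and a short accepting lasso yields the separating word. One step, however, is over-claimed: the assertion that ``exactly one component accepts'' is expressible as a \emph{Rabin condition of polynomial size} for every $\Omega \in \Acctype$. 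For $\Omega = \text{Rabin}$ the combination is a conjunction of a Rabin and a Streett condition, and for parity it is a conjunction of two parity conditions; neither is a small Rabin condition on the same transition system in general (one only gets generalized Rabin pairs with several $\inf$-requirements). Fortunately, this claim is unnecessary, because the length of the witness never depends on the type or size of the acceptance condition. Take any $w \in u^{-1}L \mathbin{\triangle} v^{-1}L$ and let $X$ be the infinity set of the product run on $w$; then $X$ is a reachable, strongly connected set of transitions, and by the argument of \autoref{wordforalltransitions} there is a closed walk of polynomial length using \emph{exactly} the transitions of $X$ (exactness matters, since these conditions are not monotone under enlarging the infinity set). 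Prefixing a path of length at most $|\sim_L|^2$ from $(q_u,q_v)$ to that walk gives an ultimately periodic suffix $xy^\omega$, polynomial in $|\sim_L|$ and $|\Sigma|$, whose two projected runs have infinity sets classified differently by $\mc{C}_L$, i.e., $uxy^\omega \in L \Leftrightarrow vxy^\omega \notin L$. With that repair your sketch is a complete, self-contained proof --- strictly more than the paper supplies.
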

}

We are now able to prove \autoref{learnabilitycorollary}.

\rstlearnabilityinthelimit*
\begin{proof}
    As $L$ has an $\Omega$-IRC, we know that there exists a congruence automaton $A_L = \langle\mc{T}_L,\mc{C}_L\rangle$ of size $n$ which recognizes $L$. Let $S$ be a sample that contains both $S_{\sim_L}$ and $S_{\mc{C}_L}$ (see \autoref{completeforircts} and the paragraphs following \autoref{consistentextension} for their construction). Since $S$ is characteristic for $\sim_L$ we know by \autoref{embeddinglemma} that the transition system $\mc{T}$ constructed by $\sprout$ must be injectively embeddable into $\mc{T}_L$. By \autoref{whenaremrandmtrcovered} each minimal representative of $L$ (which corresponds to a state of $\mc{T}_L$) is discovered by $\sprout$ before exceeding the defined threshold and thus $\mc{T}$ must have the same number of states as $\mc{T}_L$. Moreover since we chose the threshold to be one greater than the value we chose for $k$ in \autoref{whenaremrandmtrcovered}, each minimal transition representative (corresponding to a transition in $\mc{T}_L$) is discovered by $\sprout$. Thus $\mc{T}$ must in fact be isomorphic to $\mc{T}_L$.

    Let $\mc{H}$ be the partial condition induced by the sample $S_\mc{C}$ which is contained in $S$. For simplicity we assume the acceptance type to be Parity, but for all other types the proof works analogously. By \autoref{completeparitysampleiscomplete} the algorithm $\sprout$ constructs a parity condition $\kappa'$ that is equivalent to $\mc{C}$. For the acceptance condition $\mc{C}'$ constructed by $\sprout$ we can thus conclude that $L(\mc{T}, \mc{C}) = L(\mc{T}', \mc{C}')$.

    For each acceptance type $\Omega$ we established that a characteristic sample $S_\mc{C}$ of a $\Omega$-acceptance condition $\mc{C}$ is polynomial in size. Since $|\mr(L)| = n$ and $|\mtr(L)| = (n\cdot|\Sigma|)$ we can bound the number of sample words in $S_{\sim_L}$ by $n^2\cdot|\Sigma|$. Since the length of every word separating two states of an $\Omega$-automaton must be polynomial in $n$ as established by \autoref{distinguishingwordbound}, we have overall shown that $\sprout$ needs only polynomial data. As already established in \autoref{polytimesprout}, $\sprout$ runs in polynomial time, which concludes this proof.
\end{proof}
\section{Active Learning}

\subsection{Full proof of \autoref{thm:active}}

We complete the proof of 

\rstactive*

In the main part of the paper, we have defined the algorithm $\al$ based on an active learner $\alirc$. It remains to show that $\al$ learns the target language $L$ in polynomial time if $\alirc$ is a polynomial time algorithm.

In order to prove that we define for each number $i$ the class of languages
\[
\mc{L}_i = \{L_\star \subseteq \Sigma_\star^\omega : L_\star \cap \Sigma^\omega = L \text{ and } L_\star \text{ has an IRC with } \mathop{index}(L_\star) \leq i\}.
\]
These are all the languages over $\alphstar$ with an IRC of at most $i$ classes that are equal to $L$ when restricted to the original alphabet $\Sigma$.
The crucial point is that the answers of our teacher always remain consistent with at least one language from $\mc{L}_i$ during the simulation of $\alirc$, for an appropriate choice of $i$. This is formally captured by the following lemma. The second parameter $m$ is introduced because for the polynomial running time we later have to take into account the size of the automaton and the size of the longest counterexample.

\begin{lemma}\label{lem:exponentialquerycount}
    Let $L \subseteq \Sigma^\omega$ be a regular language that is recognizable by some DPA of size $n$ and let $m \geq n$ be some natural number. For all $k < 2^{m-1}-n$, the answers given by our teacher $\tirc$ after $k$ queries are all consistent with a language in $\mc{L}_{n\cdot m+1}$.
\end{lemma}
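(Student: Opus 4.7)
The plan is to construct, for every sequence of at most $k < 2^{m-1} - n$ queries answered by $\tirc$, a single language $L_\star \in \mc{L}_{n \cdot m + 1}$ over $\alphstar$ that is consistent with all of those answers. The idea is to take a minimal DPA for $L$ and equip each of its states with a distinct binary signature of length $m-1$, using $\star$ as a separator; the resulting language forces distinct Nerode classes at every DPA state while leaving the behavior on $\Sigma^\omega$ unchanged.

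Concretely, I would fix a minimal DPA $\mc{A}$ for $L$ with state set $Q$ of size $n$, and for any injective labeling $B : Q \to \{0,1\}^{m-1}$ define
\[
  L_\star^B \;=\; L \;\cup\; \{u\, \star\, B(\delta^*(q_0, u))\, \star^\omega : u \in \Sigma^*\}.
\]
The first verifications are routine: $L_\star^B \cap \Sigma^\omega = L$ by construction, and an IRC-index bound of the shape $\operatorname{\mathsf{ind}}(L_\star^B) \leq n \cdot m + 1$ can be obtained by counting Nerode classes. On $\Sigma^*$ the signature suffixes $\star B(q) \star^\omega$ separate the states of $\mc{A}$ pairwise (by injectivity of $B$), so there are exactly $n$ classes on $\Sigma^*$. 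After reading $u \star b$ with $b$ of length $j \in \{0, \ldots, m-2\}$, the accepted continuations are determined by the remaining signature suffix $B(\delta^*(q_0, u))[j..]$ (when $b$ is a prefix of the signature) or are empty otherwise, producing at most $n$ live classes per depth, for a total of $n(m-1)$ such classes across all depths. Finally, once the full signature has been matched only $\star^\omega$ is accepted, contributing a single additional tail class.

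It remains to pick $B$ compatibly with $\tirc$'s answers. Membership queries with arguments in $\Sigma^\omega$ and equivalence queries impose no constraint on $B$, since $L_\star^B$ agrees with $L$ on $\Sigma^\omega$ and the counterexamples returned by $\tirc$ are $\Sigma$-words. A negatively answered membership query whose argument contains a new symbol conflicts with $L_\star^B$ only when it has exactly the shape $u \star b \star^\omega$ with $u \in \Sigma^*$ and $b \in \{0,1\}^{m-1}$, in which case it forbids the single assignment $B(\delta^*(q_0,u)) = b$. Thus the $k$ queries forbid at most $k$ ordered pairs in $Q \times \{0,1\}^{m-1}$, and a greedy construction of an injective $B$ succeeds as long as at every assignment step at most $(n-1) + k < 2^{m-1}$ labels are excluded, which is guaranteed by the hypothesis $k < 2^{m-1} - n$. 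The main obstacle is the IRC-index computation: one must verify that the partial-signature states, though in principle parameterized by both the originating state $q_u$ and the prefix $b$ already read, collapse at each depth into at most $n$ Nerode classes (indexed by the remaining signature suffix), instead of blowing up as $2^j$ at depth $j$.
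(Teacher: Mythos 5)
Your proposal takes essentially the same route as the paper's proof: pad a size-$n$ DPA $\mc{A}$ for $L$ with a distinct signature over the fresh letters at each state, chosen \emph{after} the $k$ queries so that no answer of $\tirc$ is contradicted; this keeps the restriction to $\Sigma^\omega$ equal to $L$ while forcing an informative right congruence of polynomial index. The paper works on the automaton level, attaching at each $q_i$ an accepting loop reading $\star v_i$ for $n$ values $v_i \in \{0,1\}^{m-1}$ that occur in no queried word of the form $u(\star v_i)^\omega$; you work on the language level with one-shot signatures $u\,\star\,B(q_u)\,\star^\omega$ and a greedy choice of the injective labeling $B$ around at most $k$ forbidden (state, label) pairs. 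These are the same counting argument ($k < 2^{m-1}-n$ leaves enough free signatures) in different packaging, and your treatment of membership- and equivalence-query consistency matches the paper's.

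Two small caveats. First, your index count drops the dead class: words such as $\star\star$, or $u \star b$ with $b$ not a prefix of $B(\delta^*(q_0,u))$, have empty residual, and this rejecting sink is a genuine, reachable Nerode class of $L_\star^B$; so your construction yields index at most $nm+2$ and lands in $\mc{L}_{nm+2}$ rather than $\mc{L}_{nm+1}$ as stated. This is not a substantive gap --- the paper's own automaton has $n + nm + 1$ states (the $n$ original states, $nm$ loop states, and $q_\bot$) despite the claim $|Q'| = nm+1$, and \autoref{lem:active-time} only needs the index to be polynomial in $m$ --- but the constant you assert is not the one your construction achieves. Second, counting Nerode classes alone does not establish that $L_\star^B$ \emph{has} an IRC: for $\omega$-languages one must additionally verify that an automaton whose states are exactly these classes recognizes the language, which is false for general $\omega$-languages. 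Here it holds precisely because the injective signatures force the $\Sigma^*$-classes to be in bijection with the states of $\mc{A}$, so the class automaton inherits $\mc{A}$'s transition structure and acceptance condition, with the suffix chains, the $\star^\omega$-tail, and the sink handled trivially; this should be said explicitly, as the paper does by exhibiting $\mc{A}'$ together with its priority function $\kappa'$.
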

\begin{proof}
    By assumption there exists a DPA $\mc{A} = (Q, \Sigma, q, \delta, \kappa)$ with $Q = \{q_1, \dotsc, q_n\}$ that recognizes $L$. For a $k < 2^{m-1}-n$ there exist at least $n$ distinct words $v_1,\dotsc,v_n \in \{0,1\}^{m-1}$ such that no word of the form $u(\star v_i)^\omega$ has yet been queried. We now extend $\mc{A}$ into an automaton $\mc{A'} = (Q', \Sigma_\star, q, \delta', \kappa')$ by attaching to each state $q_i$ a transition on $\star$ which reaches an accepting loop on $v_i\star$ (this is possible with all the acceptance types that we consider):
    \begin{alignat*}{3}
        Q' &= Q &&\cup \{c_i^j : i \leq n, j \le m\} &&\cup \{q_\bot\}\\
        \kappa' &= \kappa &&\cup \{c_i^j \mapsto 0\} &&\cup \{q_\bot \mapsto 1\}\\
        \delta' &= \delta &&\cup \{q_i \xrightarrow{\star} c_i^1, c_i^{m} \xrightarrow{\star} c_i^1\} &&\cup \{c_i^j \xrightarrow{(v_i)_j} c_i^{j+1} : j \le m-1\} \cup \{q_i^j \xrightarrow{a} q_\bot: a \neq (v_i)_j\}.        
    \end{alignat*}
    This means that each state $q_i$ can be separated from all other states $q_j$ with $j \neq i$ through the word $(\star v_i)^\omega$. From each new state, precisely one word over $\{0,1,\star\}$ is accepted, and all these words are different. So it is not hard to verify that indeed $L(\mc{B})$ has an informative right congruence. Obviously, $|Q'| = n\cdot m+1$. The only transitions we introduced are on symbols from $\Sigma_\star \setminus \Sigma$ and thus $L(\mc{B}) \cap \Sigma^\omega = L$, which means $L(\mc{B}) \in \mc{L}_{n\cdot m + 1}$.

The answers to membership queries for words from $\Sigma^\omega$ are consistent with $L(\mc{A})$ and hence with $L(\mc{B})$. 
  All words in $L(\mc{B})\setminus L(\mc{A})$ are of the form $u(\star v_i)^\omega$ which means none of those have been queried. This guarantees that $L(\mc{B})$ is consistent with the answers to all membership queries. The counterexamples on equivalence queries are all over $\Sigma$ and consistent with $\mathcal{A}$. Hence, all answers of $\tirc$ are consistent with $L(\mc{B})$.
\end{proof}

We can now bound the running time of $\al$ in the running time of $\alirc$, assuming that the running time of $\alirc$ is bounded by a  polynomial $p(\cdot)$ in the following sense. If $L_\star \subseteq \alphstar$ is a language with IRC that can be recognized by a DPA with $n$ states and $\ell$ is the maximum length of a counterexample returned by the teacher, $\alirc$ needs at most time $p(m)$ for $m = \max(n,l)$ to learn an automaton for $L_\star$.

\begin{lemma} \label{lem:active-time}
 Let $L \subseteq \Sigma^\omega$ be such that there is a DPA $\mc{A}$ with $n$ states and $L(\mc{A}) = L$. If the running time of $\alirc$ is bounded by a polynomial $p(\cdot)$, then $\al$ simulates $\alirc$ for at most $\mc{O}(p(m^2+1))$ steps, $m = \max(n,l)$ and $l$ is the length of the longest counterexample.
\end{lemma}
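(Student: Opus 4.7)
The plan is to argue by contradiction, combining \autoref{lem:exponentialquerycount} with the polynomial running-time guarantee for $\alirc$. Fix $m=\max(n,l)$ and recall what the polynomial bound means for $\alirc$: when it interacts with an honest teacher for an IRC language $L_\star$ whose minimal DPA has size $N$ and whose counterexamples have length at most $\ell$, it terminates within $p(\max(N,\ell))$ steps.

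Suppose for contradiction that the simulation of $\alirc$ inside $\al$ runs for strictly more than $p(m^2+1)$ steps. Then after the first $p(m^2+1)$ steps it has issued at most $p(m^2+1)$ queries. For all sufficiently large $m$ one has $p(m^2+1) < 2^{m-1}-n$ (a polynomial is eventually dominated by an exponential), so \autoref{lem:exponentialquerycount} supplies a language $L_\star\in\mc{L}_{n\cdot m+1}$ that is consistent with every answer returned by $\tirc$ up to this point. Next I would observe that, because $\alirc$ is deterministic and receives information solely through its teacher, its execution on $\tirc$ during this initial segment is bitwise identical to an execution on an honest teacher for $L_\star$. I would also verify that the counterexamples forwarded from the real teacher $T$, which are words in $\Sigma^\omega$ of length at most $l$, are valid counterexamples with respect to $L_\star$: this is immediate from $L_\star\cap\Sigma^\omega=L$, so any word distinguishing a hypothesis $\mc{A}$ from $L$ on $\Sigma^\omega$ also distinguishes $\mc{A}$ from $L_\star$.

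With that correspondence, the polynomial guarantee compels $\alirc$ to halt within $p(\max(n\cdot m+1,\,l))$ steps in the virtual honest run, hence also in our simulation. Since $n\le m$ and $l\le m$, we have $\max(n\cdot m+1,\,l)\le m^2+1$, so $\alirc$ halts within $p(m^2+1)$ steps, contradicting the assumption. The finitely many small values of $m$ for which $p(m^2+1)\ge 2^{m-1}-n$ fails contribute only an additive constant that is absorbed into the implicit $\mc{O}(\cdot)$ constant, yielding the claimed $\mc{O}(p(m^2+1))$ bound.

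The main obstacle I foresee is making the virtual-run correspondence fully rigorous: one must spell out that $\alirc$ cannot distinguish $\tirc$ from an honest teacher for $L_\star$ on the first $p(m^2+1)$ interactions, so that the polynomial bound transfers from the hypothetical honest run to the actual simulation. A secondary subtlety is ensuring that counterexamples returned on equivalence queries also respect the consistency requirement for $L_\star$ (handled by the $L_\star\cap\Sigma^\omega=L$ argument above), and that the size parameter inside $p$ is indeed bounded by $m^2+1$ rather than some larger polynomial in $n$ and $m$.
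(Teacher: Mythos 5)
Your proposal is correct and follows essentially the same route as the paper's proof: invoke \autoref{lem:exponentialquerycount} (valid since $p(m^2+1) < 2^{m-1}-n$ for large enough $m$, with small cases absorbed into the $\mc{O}$-constant) to obtain a language $L_\star \in \mc{L}_{n\cdot m+1}$ consistent with all of $\tirc$'s answers, then apply the polynomial bound for $\alirc$ on $L_\star$, whose IRC index $n\cdot m+1 \le m^2+1$ forces termination of the simulation within $p(m^2+1)$ steps. Your contradiction framing and the explicit virtual-run correspondence are only cosmetic differences from the paper's direct argument, which builds the same correspondence into its phrasing of the runtime guarantee (``if all answers by the teacher are consistent with $L'$'').
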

\begin{proof}
Since we only claim an asymptotic bound $\mc{O}(p(m^2+1))$, we can assume that $n$ is large enough such that $2^{m - 1}-n > p(m^2+1)$. In $p(m^2+1)$ many steps, $\alirc$ can ask at most $p(m^2+1)$ many queries. Hence, all the answers of $\tirc$ are consistent with a language $L' \in \mc{L}_{n\cdot m+1}$ by \autoref{lem:exponentialquerycount}.

As $L'$ has an informative right congruence which consists of at most $n\cdot m+1 \le m^2+1$ equivalence classes, we know that $\alirc$ actively learns $L'$ in at most $p(m^2+1)$ steps if all answers by the teacher are consistent with $L'$. This means that within the first $p(m^2+1)$ steps, $\alirc$ must ask an equivalence query that makes $\al$ stop the simulation (either by an automaton that accepts $L'$ or any other language whose restriction to $\Sigma$ is $L$).
\end{proof}
Since $\al$ only introduces a polynomial overhead in the simulation of $\alirc$,
\cref{thm:active} now directly follows from \cref{lem:active-time}.

\subsection{Full proof of \autoref{prop:activetopassive}}
\rstactivetopassive*
\begin{proof}
Assume that there is a polynomial time active learning algorithm $\alk$ for target automata from $\mathcal{K}$. Consider the following algorithm that constructs an automaton from $\mathcal{K}$ for a given sample $S$. 
It simulates $\alk$, and if $\alk$ makes a  membership query on a word $w$, then it is checked whether $w$ occurs in the sample. If not, our algorithm stops and returns an automaton from $\mathcal{K}$ that is consistent with $S$ (according to (P2)). If $w \in S$, then its classification according to $S$ is returned to $\alk$ as answer of the query.

For an equivalence query of $\alk$ with automaton $\mathcal{B}$, our algorithm checks for each word in the sample whether it is correctly classified by $\mathcal{B}$ (using (P1)). If not, we return the least word of $S$ in length-lexicographic order that is not classified correctly. If $\mathcal{B}$ classifies all example words correctly, then our algorithm returns $\mathcal{B}$.

We claim that this algorithm learns all automata from $\mathcal{K}$ in the limit with polynomial time and data. In order to construct a corresponding characteristic sample for an automaton $\mathcal{A} \in \mathcal{K}$, consider the run of $\alk$ with target $L(\mathcal{A})$, in which equivalence queries for an automaton $\mathcal{B}$ are answered with the length-lexicographic least counterexample. This counterexample is a word of polynomial size in $\mathcal{A}$ (according to (P3) and since $\mathcal{B}$ is polynomial in $\mathcal{A}$ because AL runs in polynomial time).

Then this execution of $\alk$ runs in polynomial time in the size of $\mathcal{A}$. Hence, the number and size of words used in membership and equivalence queries is polynomial in $\mathcal{A}$. Let $S_\mathcal{A}$ be the sample containing all these words, classified according to $L(\mathcal{A})$.

For this sample $S_\mathcal{A}$, our passive learning algorithm precisely simulates the execution of $\alk$ that was used to define the sample. It never happens that $\alk$ asks a membership query for a word outside $S_\mathcal{A}$ because all these words are included in the sample by construction.
Hence, our passive learning algorithm returns the same automaton as $\alk$ for each sample that is consistent with $L(\mathcal{A})$ and contains $S_{\mathcal{A}}$.
\end{proof}

\end{document}